\setlist[description]{font=\normalfont\itshape\space}
\def\BibTeX{{\rm B\kern-.05em{\sc i\kern-.025em b}\kern-.08em
    T\kern-.1667em\lower.7ex\hbox{E}\kern-.125emX}}
\newcommand{\lb}{\left(}
\newcommand{\rb}{\right)}
\DeclareFontFamily{U}{matha}{\hyphenchar\font45}
\DeclareFontShape{U}{matha}{m}{n}{
	<5> <6> <7> <8> <9> <10> gen * matha
	<10.95> matha10 <12> <14.4> <17.28> <20.74> <24.88> matha12
}{}
\DeclareSymbolFont{matha}{U}{matha}{m}{n}
\DeclareFontFamily{U}{mathx}{\hyphenchar\font45}
\DeclareFontShape{U}{mathx}{m}{n}{
	<5> <6> <7> <8> <9> <10>
	<10.95> <12> <14.4> <17.28> <20.74> <24.88>
	mathx10
}{}
\DeclareSymbolFont{mathx}{U}{mathx}{m}{n}
\DeclareMathDelimiter{\vvvert}{0}{matha}{"7E}{mathx}{"17}
\newtheorem{defn}{{Definition}}
\newtheorem{theorem}{{Theorem}}
\newtheorem{lemma}{{Lemma}}
\newtheorem{proposition}{Proposition}
\newtheorem{corollary}{{\bf Corollary}}
\newcommand{\diag}{\mathrm{diag}}
\newcommand{\D}{\mathcal{D}}
\newcommand{\LL}{\mathcal{L}}
\newcommand{\DD}{\mathcal{D}}
\newcommand{\PP}{\mathbb{P}}
\newcommand{\EE}{\mathbb{E}}
\newcommand{\setS}{\mathcal{S}}
\newcommand{\setA}{\mathcal{A}}
\newcommand{\setB}{\mathcal{B}}
\newcommand{\setG}{\mathcal{G}}
\newcommand{\setP}{\mathcal{P}}
\newcommand{\Real}{\mathbb{R}}
\newcommand{\ls}{\left[}
\newcommand{\rs}{\right]}
\newcommand{\lc}{\left\{}
\newcommand{\rc}{\right\}}
\newcommand{\lv}{\left|}
\newcommand{\rv}{\right|}
\newcommand{\lvv}{\|}
\newcommand{\rvv}{\|}
\newcommand{\lvvv}{\left \vvvert}
\newcommand{\rvvv}{\right \vvvert}
\newcommand{\veca}{\mathbf{a}}
\newcommand{\vecx}{\mathbf{x}}
\newcommand{\vecz}{\mathbf{z}}
\newcommand{\vecy}{\mathbf{y}}
\newcommand{\vecv}{\mathbf{v}}
\newcommand{\vecu}{\mathbf{u}}
\newcommand{\vecb}{\mathbf{b}}
\newcommand{\vecw}{\mathbf{w}}
\newcommand{\matA}{\mathbf{A}}
\newcommand{\matB}{\mathbf{B}}
\newcommand{\matI}{\mathbf{I}}
\newcommand{\matC}{\mathbf{C}}
\newcommand{\matU}{\mathbf{U}}
\newcommand{\matW}{\mathbf{W}}
\newcommand{\matR}{\mathbf{R}}
\newcommand{\matH}{\mathbf{H}}
\newcommand{\matX}{\mathbf{X}}
\newcommand{\matY}{\mathbf{Y}}
\newcommand{\matZ}{\mathbf{Z}}
\newcommand{\matSigma}{\mathbf{\Sigma}}
\newcommand{\matGamma}{\mathbf{\Gamma}}
\newcommand{\matLambda}{\mathbf{\Lambda}}
\newcommand{\vgamma}{\boldsymbol{\gamma}}
\newcommand{\vecyall}{\vecy_{1}, \vecy_{2}, \dots, \vecy_{L}}
\newcommand{\vecxall}{\vecx_{1}, \vecx_{2}, \dots, \vecx_{L}}
\newcommand{\covy}{\sigma^{2}\mathbf{I}_{m} + \mathbf{A} \mathbf{\Gamma} \mathbf{A}^{T}} 
\newcommand{\vgammamax}{\boldsymbol{\gamma}_{\text{max}}}
\newcommand{\vgammamin}{\boldsymbol{\gamma}_{\text{min}}}
\begin{document}

\title{On the Support Recovery of Jointly Sparse Gaussian Sources Via Sparse Bayesian Learning}

\author{
\authorblockN{Saurabh Khanna, \IEEEmembership{Member, IEEE} and Chandra R. Murthy, \IEEEmembership{Senior Member, IEEE}}\\
\thanks{S. Khanna is currently with the Department of Electronics and Communication Engineering, Indian Institute of Technology Roorkee, India (e-mail: sakhanna@ece.iitr.ac.in).}
\thanks{C. R. Murthy is with the Department of Electrical Communication Engineering (ECE), Indian Institute of Science (IISc), Bangalore, India (e-mail: cmurthy@iisc.ac.in).}
\thanks{The main part of this work was carried out when S. Khanna was at ECE Department, IISc.}

}


\maketitle

\begin{abstract}
In this work, we provide non-asymptotic, probabilistic guarantees for successful recovery of the common nonzero support of jointly sparse Gaussian sources in the multiple measurement vector (MMV) problem. 
The support recovery problem is formulated as the marginalized maximum likelihood (or type-II ML) estimation of the variance hyperparameters of a joint sparsity inducing Gaussian prior on the source signals. 
We derive conditions under which the resulting nonconvex constrained optimization perfectly recovers the 
nonzero support of a joint-sparse Gaussian source ensemble with arbitrarily high probability. 
The support error probability decays exponentially with the number of MMVs at a 
rate that depends on the smallest restricted singular value and the nonnegative null space property of the self 
Khatri-Rao product of the sensing matrix. Our analysis confirms that nonzero supports of size as high as 
$O(m^2)$ are recoverable from $m$ measurements per sparse vector. 
\textcolor{black}{Our derived sufficient conditions for support consistency of the proposed} constrained type-II ML solution \textcolor{black}{also guarantee the support consistency} of any global solution of the multiple sparse Bayesian learning (M-SBL) optimization whose nonzero coefficients lie inside a bounded interval. 
For the case of noiseless measurements, we \textcolor{black}{further} show that a single 
MMV is sufficient for perfect recovery of the $k$-sparse support by M-SBL, provided all subsets of $k + 1$ columns of the sensing matrix are linearly independent. 
\end{abstract}
\begin{keywords}
Compressive sensing, support recovery, sparse Bayesian Learning, 
joint sparsity, Restricted Isometry Property, Khatri-Rao product.
\end{keywords}

\section{Introduction}
Joint sparsity has emerged as an important and versatile signal structure in 
the field of sparse signal processing.
Two or more vectors are said to be jointly sparse if their nonzero 
coefficients belong to the same index set, i.e., the vectors share a common nonzero support. 
Joint sparsity arises naturally in multi-modal or multi-channel analysis of signals residing in low dimensional signal subspaces. 
The underlying joint sparsity of signals can prove useful in resolving ambiguities in the common support that may arise due to erroneous estimation of the support of the individual sparse signal vectors from noisy measurements. 
This viewpoint has been successfully exploited in several real-world applications such as MIMO channel estimation 
\cite{Masood15SparseMIMOChEst, Barbotin12MIMOJSM, Ranjitha15SBLChEst}, 
distributed source coding \cite{Duarte09DCS_main, HaoZhu09CAMoM}, 
multi-task compressive sensing \cite{Shihao09MultiTaskCS}, 
distributed event localization \cite{Erseghe15MLthruADMM}, array 
signal processing \cite{QiaoPal19MMV}, cooperative spectrum sensing~\cite{Tian11DistSpectrumSensing, Geert09CompressiveSpectrumSensing, Bazerque_10_dss}, and 
user activity detection in massive machine-type communications \cite{alex2019nonbayesian}.

In the sparse signal recovery literature, the estimation of \textcolor{black}{jointly} sparse signals is commonly referred to as the \emph{multiple measurement vector} (MMV) 
problem \cite{Cotter_05_mmv} where the signal of interest is a matrix $\matX \in \Real^{n \times L}$ 
whose columns are \textcolor{black}{jointly} sparse vectors in $\Real^{n}$. 
As a result, $\matX$ is a 
row sparse matrix with only a fraction of its rows containing 
nonzero elements and the rest of the rows made up entirely of 
zeros. In the MMV problem, the goal is to recover $\matX$ from 
its noisy, linear measurements $\matY \in \Real^{m \times L}$. 
The measurement matrix $\matY$ (each column is called a single measurement vector (SMV)) is generated as 
\begin{equation} \label{mmv_meas_model_0}
\matY = \matA \matX + \matW, 
\end{equation}
where $\matA \in \Real^{m \times n}$ 
is a known sensing matrix and $\matW \in \Real^{m \times L}$ models the unknown noise in the measurements.
For $m < n$, the above linear system is underdetermined and has infinitely many solutions for $\matX$. However, if $\matA$ satisfies certain restricted isometry properties, a unique row-sparse solution can {still} be guaranteed \cite{ChenHuo06MMV, Cotter_05_mmv, Tropp_05_SOMP, FoucartBook13}.
\ifdefined \SKIPTEMP
When $L = 1$, the MMV problem reverts to the SMV problem, for which a unique $k$-sparse 
solution under the minimum $\ell_{0}$ norm criterion is guaranteed if and only 
if $k < \text{spark}(\matA)/2$, where 
$\text{spark}(\matA)$ is the minimum number of linearly 
dependent columns in $\matA$~\cite{FoucartBook13}. This result is generalized to the noiseless MMV case 
in \cite{ChenHuo06MMV}, where it is shown that any $\matX$ with 
$k$ or fewer nonzero rows and satisfying \eqref{mmv_meas_model_0} is a unique solution 
under the minimum $\ell_{0}$ norm criterion,
if $k < \lceil \text{spark}(\matA) -1 + \text{rank}(\matY) \rceil /2$. 
This suggests that a trade off exists between the number of MMVs, $L$, and the number of measurements, $m$, 
which can be optimized to reduce the overall measurement acquisition cost in an 
MMV problem.
\fi
In many applications, one is primarily interested in identifying the nonzero rows of $\matX$. 
This gives rise to the \textit{joint-sparse support recovery (JSSR)} problem where the goal is to recover the row support of $\matX$ given $\matY$ and~$\matA$. 
Interestingly, unlike the nonzero coefficients in a $k$-row-sparse~$\matX$ which can be robustly recovered only if $m \ge k$, 
its nonzero row support can be recovered correctly even from $m < k$ 
measurements. 

The vast majority of the JSSR solvers \cite{RaoCotterDiversityMin04, Cotter_05_mmv, Tropp_05_SOMP, Blanchard14Greedy, Bresler12SAMUSIC, JohnChulYe12CSMUSIC, Mishali08Rembo} 
implicitly assume that the number of nonzero rows in~$\matX$ is less than the number of measurements per SMV, i.e., $k < m$. Recently, \textit{correlation-aware} solvers such as Co-LASSO 
\cite{PPal15PushingTheLimits} and M-SBL \cite{Wipf_07_msbl, 
Balkan14JSBLResults}, and their extensions, e.g., RD-CMP~\cite{Khanna17RDCMP}, 
Co-LASSO-EXPGRD \cite{Sak18EXPGRD} and MRNNQP \cite{Lekshmi18CovEst}
have been empirically shown to be capable of recovering sparse supports of size $k > m$ when $L$ is large. 
However, a complete theoretical understanding of when these methods are guaranteed to be successful 
is still lacking. Especially for M-SBL, due to its nonseparable and nonconvex 
log-marginalized likelihood objective, any support consistency 
guarantees about its global or local maxima are currently known only under restrictive assumptions, e.g., for noiseless measurements and row orthogonality of the signal matrix~$\matX$ 
in \cite{Wipf_07_msbl, Balkan14JSBLResults}, for the SMV case and row 
orthogonality of the measurement matrix in~\cite{YeeAtchade17SBLGuarantees}, 
and under the assumption of prior knowledge of the true support size and source 
signal power in \cite{NehoraiGTang10, Koochakzadeh18CorrAwareMMV, alex2019nonbayesian}.

The main goal of the present work is to identify sufficient conditions for exact support recovery 
\textcolor{black}{ in the JSSR problem by a \textit{correlation-aware} procedure based on the M-SBL optimization.}   
This is accomplished by analyzing the support consistency of \textcolor{black}{gobal maxima of a constrained variant of M-SBL's type-II maximum-likelihood optimization.} 
Our newly derived sufficient conditions \textcolor{black}{for exact recovery of $k$-sized supports in the JSSR 
problem} confirm that the support error probability vanishes 
even for $k > m$ when the number of MMVs is large enough. 

\subsection{Existing results on support recovery using MMVs}

Early theoretical works focussed on obtaining guarantees for a unique joint-sparse solution to the canonical $\ell_{0}$ norm minimization problem:
\begin{equation}
	L_{0}: \;\;\;\;
	\underset{\matX \in \Real^{n \times L}}{\text{min }} \lvv \matX \rvv_{0} 
	\;\;\;\; \text{s.t. } \matA \matX = \matY,
	\label{lo_min_prob}
\end{equation}
where $\lvv \matX \rvv_{0}$ denotes the number of nonzero rows in~$\matX$. In~\cite{Cotter_05_mmv, ChenHuo06MMV}, the authors showed that 
the $L_{0}$ problem admits a unique $k$-sparse solution provided
$k < \lceil \lb \text{spark}(\matA) - 1 + \text{rank}(\matY) \rb / 2 \rceil$, 
where spark($\matA$) denotes the smallest integer $p$ such that there exist $p$ linearly dependent columns in $\matA$. 
This result establishes that the SMV bottleneck of $k <  m/2 $ for $\ell_{0}$ norm 
based support recovery can be overcome by using multiple measurement vectors. 
Furthermore, the sparsity bound suggests that supports of size $k < m$ 
are uniquely recoverable. 

To circumvent the combinatorial hardness of the $L_{0}$ problem, 
\cite{RaoCotterDiversityMin04} proposes to minimize the 
$\ell_{p,q}$ mixed-norm of~$\matX$ instead of the $\ell_{0}$ norm. 
The $\ell_{p,q}$ norm of $\matX$ is evaluated as $\lvv \matX \rvv_{p,q} \triangleq 
\lb \sum_{i = 1}^{m} \lvv \matX(i, :) \rvv_{p}^{q} \rb^{1/q} $.
Variants of the $\ell_{p,q}$ norm minimization problem with different combinations 
of $p$ and $q$ have been investigated independently in \cite{Cotter_05_mmv, ChenHuo06MMV, BergFriedlander10, Eldar09UnionOfSubspaces}. 
For $p \ge 1, q = 1$, \cite{ChenHuo06MMV} has shown that $\ell_{p,q}$ norm minimization problem 
has a unique $k$-sparse solution, provided $\matA$ satisfies
$\lvv \matA_{\setS}^{\dagger} \veca_{j}\rvv_{1} < 1$, for all 
$j \notin \setS$ and for all $\setS \subset [n], |\setS| \le k$, 
where $\matA_{\setS}^{\dagger} = 
\lb \matA_{\setS}^{T} \matA_{\setS} \rb^{-1} \matA_{\setS}^{T}$.
\ifdefined \VERBOSE
In fact, this condition is also the sufficient condition for 
uniqueness under minimum 
$\lvv \lc f\lb \matX(i, :) \rb \rc_{i = 1}^{n} \rvv_{1}$ criterion, 
where $f(\cdot)$ is any arbitrary inner vector norm. 
\fi
\ifdefined \SKIPTEMP
In \cite{BergFriedlander10}, a necessary condition for a unique joint-sparse solution 
to the $L_{p, q}$ problem is identified. 
It is shown that any $\matX_{0}$ with 
$\text{supp}(\matX_{0}) \subseteq \setS$ and satisfying $\matY = \matA \matX_{0}$
is a unique solution to~\eqref{lpq_min_prob} 
if and only if any $\matZ \neq \mathbf{0}$ with columns belonging to the null space of $\matA$ 
satisfies $\sum_{i \in \setS} \lvv \matZ(i, :) \rvv < 
\sum_{i' \notin \setS} \lvv \matZ(i', :) \rvv$ for any vector norm 
$\lvv \; \cdot \; \rvv$. 
However, it is combinatorially hard to verify such a condition. 
\fi
This also serves as a sufficient condition for exact support recovery via
simultaneous orthogonal matching pursuit (SOMP) \cite{Tropp_05_SOMP}, a greedy support recovery algorithm. 
In \cite{Blanchard14Greedy}, support recovery performance of 
various correlation based greedy and iterative hard-thresholding type algorithms 
is studied in the noiseless MMV setup. The sufficient conditions for exact support 
recovery are specified in terms of the 
asymmetric restricted isometry constants of the sensing matrix.

A common limitation of the aforementioned support recovery techniques is that 
they are capable of uniquely recovering supports of size up to only 
$k < m/2$.
In \cite{DaviesEldar12RankAware}, rank aware OMP and rank aware order recursive 
matching pursuit are shown to perfectly recover any $k$-sized support from noiseless measurements 
as long as $k < \text{spark}(\matA) -1$ and $\text{rank}(\matX) = k$. 
For the rank defective case, i.e.,  $\text{rank}(\matX) < k$, compressed sensing MUSIC~\cite{JohnChulYe12CSMUSIC} and 
subspace-augmented MUSIC~\cite{Bresler12SAMUSIC} are still capable of recovering any $k < \text{spark}(\matA)-1$ sized support 
as long as partial support of size $k - \text{rank}(\matX)$ can be estimated by some 
other support recovery algorithm. 

In \cite{NehoraiGTang10}, the MMV support recovery problem is formulated as a multiple hypothesis 
testing problem. 
Necessary and sufficient conditions for perfect support recovery with high 
probability are derived under the assumption that the columns of $\matX$ are 
i.i.d. $\mathcal{N}(0, \diag{(\mathbf{1}_{\setS^*})})$, where $\setS^*$ denotes 
the true support set of a known size $k$. An 
exponential decay rate of the support error probability is derived in closed 
form, which is however not conducive to general interpretation or deeper 
analysis. For the particular case of randomly constructed Gaussian sensing 
matrix $\matA$ with $m = \Omega\lb k \log \frac{n}{k}\rb$ rows, it is shown 
that $L \gg \frac{\log n}{\log \log n}$ suffices for diminishing support error 
probability with increasing $L$. One of our contributions in 
the present work is to extend this result to a more general signal prior 
on~$\matX$ and show that the support error probability vanishes even if $m$ 
scales sublinearly in the support size $k$.

In \cite{YuzheJin13MMV}, the support recovery problem is analyzed as a single-input-multi-output 
MAC communication problem. For number of nonzero rows fixed to $k$, $m = 
\Omega(\frac{k \log n}{c(\matX)})$ is shown to 
be both necessary and sufficient for successful support recovery as the problem size tends to 
infinity. The quantity $c(\matX)$ is a capacity like term \textcolor{black}{that} depends on the elements of the nonzero rows in $\matX$ 
and the noise power. Even fewer measurements $m = \Omega \lb \frac{k}{L}\log{n}\rb$ suffices when each measurement vector is generated using an independently drawn sensing matrix~\cite{ParkLee17DiffMatInMMV}. 
The $m < k$ regime has been studied in \cite{Lekshmi19MMVboundsForJSSR}, and it was shown that $L = \Theta\lb \frac{k^2}{m^2} \log(k(n-k)) \rb$ is both necessary and sufficient to exactly recover the row-support of $\matX$.

\textcolor{black}{
Aside from the problem of inferring the common support of joint sparse vectors from MMVs, \cite{ZhuBaron18, ZhuBaron13NoisyMMVMSE, ZhuBaron17MMVMSE} use replica analysis to obtain a variational characterization of the minimum mean squared error incurred while reconstructing the whole joint-sparse vectors from the noisy MMVs, as the dimension $n$ scales to infinity while keeping the measurement-rate $m/n$ finite.    
}


\ifdefined \SKIPTEMP
A major limitation the above algorithms 
is that their worst case performance\footnote{For example, 
when all columns of the signal matrix $\matX$ are the same.} does not improve with $\text{rank}(\matY)$. 
Therefore, these algorithms are rightfully called \textit{rank-blind} \cite{DaviesEldar12RankAware}, 
as their exact support recovery conditions are independent of $L$, the 
number of MMVs. However, their average case performances do improve 
with an increase in the number of MMVs \cite{Gribonval08AtomsUnite}. 
These rank-blind methods also fail to meet the $\ell_{0}$-norm uniqueness criterion, i.e., 
$k < m + 1$, and can only guarantee a unique $k$-sparse solution if $k < m/2$. 
\fi

\ifdefined \SKIPTEMP
If the nonzero rows of $\matX$ are linearly dependent, then $k \textcolor{black}{\;>\;} \text{rank}(\matX)$. 
In this underdetermined case, exact support recovery is still possible 
if partial support set is known \cite{JohnChulYe12CSMUSIC, Bresler12SAMUSIC}.
In \cite{JohnChulYe12CSMUSIC}, the authors proposed the CS-MUSIC algorithm 
which can recover any $k \ge \text{rank}(\matX)$ sized support 
as long as $k - \text{rank}(\matX)$ partial support can be estimated by another sparse signal recovery algorithm. 
Subspace-Augmented MUSIC (SA-MUSIC) \cite{Bresler12SAMUSIC} is another algorithm based on a similar idea. 
\fi

\ifdefined \VERBOSE
Finally, there are Bayesian inference MMV algorithms which output a posterior distribution of 
the unknown $\matX$, instead of a direct estimate. In a Bayesian approach, $\matX$ is considered 
to be a random variable drawn from an unknown but parameterized joint sparsity inducing signal prior. 
The prior which has maximal Bayesian evidence with respect to the observations $\matY$ \cite{MacKay91bayesianinterpolation} 
is finally used to compute a maximum a posteriori probability (MAP) estimate of $\matX$.
The signal prior acts as a proxy for the regularization on the solution space akin to the penalty based methods.
Two popular signal priors used in the MMV literature are the Bernoulli-Gaussian prior (also known 
as the spike and slab prior) \cite{Ziniel13AMPMMV} and the Gaussian prior \cite{Wipf_07_msbl}. 
Learning the signal prior/model which best explains the observations is the central idea in the Bayesian approach, 
and algorithms inspired by this approach are said to possess the \textit{automatic relevance determination} property \cite{MacKay91bayesianinterpolation}.
\fi

\ifdefined \SKIPTEMP
In the above discussion, in particular about algorithm-specific support recovery guarantees, 
the number of nonzero rows in $\matX$ is typically assumed to be less than $m$, the number of measurements per MMV. We now review the \textcolor{black}{recently proposed} covariance matching approach, which is capable of recovering supports of size $k$ greater than~$m$.
\fi

\subsection{Correlation-aware support recovery}
A key insight was propounded in \cite{PPal15PushingTheLimits}, 
that there often exists a latent structure in the MMV problem: \textit{the nonzero rows of~$\matX$ are uncorrelated}. This signal structure can be enforced by modeling each column of 
$\matX$ to be i.i.d. $\mathcal{N}(0, \diag{(\vgamma)})$, where $\vgamma \in \Real^{n}_{+}$ 
is a nonnegative vector of variance parameters. 
Under this source model, identifying the nonzero rows of~$\matX$ amounts 
to detecting the support of $\vgamma$. In \cite{PPal15PushingTheLimits}, Co-LASSO is proposed to recover $\vgamma$.
Instead of directly working with the linear observations~$\matY$, 
Co-LASSO uses their covariance form, $\frac{1}{L}\matY \matY^{T}$, 
as input, and estimates $\vgamma$ as a solution of the following constrained 
$\ell_{1}$-norm minimization problem:
\begin{equation}
\underset{\vgamma \in \Real^{n}_{+}}{\text{min }} \lvv \vgamma \rvv_{1} \;\;
\text{s.t. } 
(\matA \odot \matA)\vgamma = \text{vec} \lb \frac{1}{L}\matY \matY^{T} \rb, 
\label{defn_cov_mmv}
\end{equation}
where $\matA \odot \matA$ denotes the Khatri-Rao product (i.e., the columnwise Kronecker 
product) of $\matA$ with itself. The linear constraints in 
\eqref{defn_cov_mmv} depict the second-order moment-matching constraints, 
specifically, the covariance matching equation: 
$\frac{1}{L}\matY \matY^{T} \approx \matA \diag{\lb \vgamma \rb}\matA^{T}$. 
Since \eqref{defn_cov_mmv} comprises up to $(m^{2} + m)/2$ linearly independent 
equations in $\vgamma$, sparsity levels as 
high as $O(m^2)$ are potentially recoverable. To recover the maximum level of 
sparsity, $k = (m^2 + m)/2$, 
a necessary condition derived in \cite{PPal15PushingTheLimits, PPal14SBLParamIdentifiability} 
dictates that the columnwise self Khatri-Rao product matrix 
$\matA \odot \matA$ must have full Kruskal rank,\footnote{
The Kruskal rank of an $m \times n$ matrix $\matA$ is the largest integer $k$ such that any $k$ columns of $\matA$ are linearly independent.
} i.e., $\text{Krank}(\matA \odot \matA) = (m^2 + m)/2$. 
Another popular MMV algorithm, M-SBL \cite{Wipf_07_msbl},  
also imposes a common Gaussian prior $\mathcal{N}(0, \diag{(\vgamma)})$ on the 
columns of $\matX$ and hence implicitly exploits the latent uncorrelatedness of 
the nonzero entries in~$\matX$.
Interestingly, similar to Co-LASSO, \textcolor{black}{the performance of Sparse Bayesian Learning (SBL)-based support recovery methods depend on the null-space structure of the self Khatri-Rao product $\matA 
\odot \matA$.} Making this connection explicit is one of the 
goals of this work. 

The sparsistency of \textcolor{black}{different types of constrained solutions of the M-SBL optimization} has been earlier investigated in \cite{Wipf_07_msbl} for $k < m$, and in \cite{Balkan14JSBLResults, alex2019nonbayesian} for $k \ge m$. In \cite{alex2019nonbayesian}, it is assumed that support size and the source signal powers are known a priori, while both \cite{Wipf_07_msbl} and \cite{Balkan14JSBLResults} assume that the measurements are noiseless, and the nonzero rows of $\matX$ are orthogonal. For finite $L$, the 
row-orthogonality condition is too restrictive for a deterministic $\matX$ and 
almost never true for a random $\matX$ drawn from a continuous distribution. 
In \cite{WeiYu19CovMatchPHTR}, necessary and sufficient conditions for amplitude-wise consistency of the M-SBL solution have been identified for $L \to \infty$.
In contrast, we study the non-asymptotic case while taking the measurement noise into consideration, and also dispense with the restrictive row orthogonality condition on $\matX$. 
Moreover, unlike \cite{alex2019nonbayesian}, our analysis does not require the source signal powers to be known upfront.

\subsection{Our contributions}

\ifdefined \SKIPTEMP
It is shown in \cite[Theorem 1]{Wipf_07_msbl} that for $k < \text{spark}(\matA) -1$, and for noiseless 
measurements $\matY$ generated from any $k$-sparse $\matX$ with orthogonal nonzero rows, M-SBL exactly 
recovers the true row support of $\matX$. 
Also, for $m \le k \le n$, in the noiseless setting, M-SBL exactly recovers any $k$-sparse support, if~\cite{Balkan14JSBLResults}
\begin{enumerate}[label=(\roman*)]
\item the $k$ nonzero rows of $\matX$ are mutually orthogonal, and 
\item $\text{rank}(\matA \odot \matA) = n$.
\end{enumerate}

In practice, the orthogonality condition for the nonzero rows of $\matX$ is almost never met unless there are infinitely many MMVs. 
Hence, the existing support recovery conditions for M-SBL are significant mainly from a theoretical perspective. 
In this paper, we present new non-asymptotic, practically useful performance guarantees for the M-SBL algorithm.  
\fi

\begin{enumerate}
\item We interpret the M-SBL optimization as a Bregman matrix divergence minimization problem; opening up new avenues to exploit the vast literature on Bregman divergence minimization towards devising faster, more robust 
algorithms for support recovery.
\item \textcolor{black}{For the JSSR problem, we analyze the consistency of the estimated nonzero support inferred} from the variance hyperparameters of a zero mean, row-sparsity inducing Gaussian prior on~$\matX$. These hyperparameters are 
estimated via a constrained type-II maximum likelihood (ML) procedure, \textcolor{black}{called cM-SBL, wherein}  
the nonzero coefficients of the ML estimate are constrained to lie inside a bounded 
interval. We show that the support error probability decays exponentially with number of MMVs, and the error exponent is related to the null space 
and restricted singular value properties of $\matA \odot \matA$, the self Khatri-Rao product of 
the sensing matrix $\matA$ with itself. Explicit bounds on the number of 
MMVs sufficient for vanishing support error probability for both noisy 
and nearly noiseless measurements are derived (in Theorem~\ref{thm_suff_conditions_form1}). 
The support \textcolor{black}{consistency} guarantees obtained for the constrained type-II ML \textcolor{black}{solution} also apply to any \textcolor{black}{global} solution of the M-SBL optimization whose nonzero coefficients lie inside a known interval (see Corollary~\ref{corr_transfer_results_to_msbl}).  

\item In the special case where the sensing matrix $\matA$ is constructed from i.i.d. $\mathcal{N}\lb 0, \frac{1}{m} \rb$ entries, we bound the MMV complexity for which the \textcolor{black}{cM-SBL} optimization exactly recovers the true $k$-sparse support using $m = \Omega(\sqrt{k} \log n)$ measurements per MMV. In the case of noiseless measurements, we show that M-SBL exactly recovers the true $k$-sparse support from a single measurement vector, provided $k < \text{spark}(\matA) -1$.

\end{enumerate}
\ifdefined \VERBOSE
Compared to the previous results \cite{Wipf_07_msbl, Balkan14JSBLResults}, our sufficient conditions dispenses with 
the restrictive orthogonality assumption for the nonzero rows of $\matX$. This is replaced with a milder assumption 
on the signal matrix $\matX$ being Gaussian distributed. 
\fi
\ifdefined \VERBOSE
Compared to an earlier work in \cite{NehoraiGTang10} which also considers a Gaussian signal prior on $\matX$, 
but with a binary valued covariance matrix, we consider a more general setting where the variance parameters are continuous valued.
\fi
A key aspect of our results is that our sufficient conditions are expressed in terms of number of MMVs and properties of the sensing matrix $\matA$. This makes our results applicable to both random as well as deterministic constructions of $\matA$. As part of our analysis, we present a new lower bound for the $\frac{1}{2}$-R\'enyi divergence between a pair of multivariate Gaussian densities (in Proposition~\ref{prop_renyi_div_lb_analytical}), and an interesting null space property of Khatri-Rao product matrices (in Theorem~\ref{thm_strong_rnsp_kr}), which may be of independent interest.


The remainder of the paper is organized as follows. In section \ref{sec:notion_model_prelim}, we formulate the JSSR problem and introduce our source model for $\matX$. We also review the M-SBL 
algorithm~\cite{Wipf_07_msbl} and interpret the M-SBL cost function as a 
Bregman matrix divergence. In section \ref{sec:prelim_concepts}, we cover some 
preliminary concepts \textcolor{black}{that} are used while analyzing the support recovery performance of the constrained and unconstrained variant of the SBL procedure. In section \ref{sec:abstract_perr_bound_derivation}, we derive an abstract upper bound for the support 
error probability, which is used in section \ref{sec:sufficient_conditions} to derive our main result, namely, the sufficient conditions for vanishing support error 
probability in M-SBL. 
In section \ref{sec:discussion}, we discuss the implications of the new results in the context of several interesting special cases. Our final conclusions are presented in section~\ref{sec:conclusion}.

\subsection{Notation} \label{sec:notations}
Throughout this paper, scalar variables are denoted by lowercase alphabets and vectors are denoted by boldface lowercase alphabets. Matrices are denoted by boldface uppercase alphabets 
and calligraphic uppercase alphabets denote sets. 

Given a vector $\vecx$, $\vecx(i)$ represents its $i^{\text{th}}$ entry. 
$\text{supp}(\vecx)$ denotes the support of $\vecx$, the set of indices 
corresponding to nonzero entries in $\vecx$. Likewise, 
$\mathcal{R}(\matX)$ denotes the set of indices of all nonzero rows in $\matX$ and 
is called the row-support of $\matX$.
\textcolor{black}{For any $n \in \mathbb{N}$, $[n]\triangleq \{1, 2, \ldots, N\}$.}
For any $n$ dimensional vector $\vecx$ and index set $\setS \subseteq [n]$, the vector $\vecx_{\setS}$ 
is an $|\setS| \times 1$ sized vector retaining only those entries of $\vecx$ \textcolor{black}{that} are indexed by elements of $\setS$. 
Likewise, $\matA_{\setS}$ is a submatrix comprising the columns of $\matA$ indexed by $\setS$. 
$\text{Null}(\matA)$ and $\text{Col}(\matA)$ denote the 
null space and column space of the matrix $\matA$, respectively. 
The spectral, Frobenius and maximum absolute row sum matrix norms of $\matA$ 
are denoted by $\lvvv \matA \rvvv_{2}$, $\lvv \matA \rvv_{F}$, and $\lvvv \matA \rvvv_{\infty}$, respectively. 
$\PP(\mathcal{E})$ denotes the probability of event $\mathcal{E}$. $\mathcal{N}(\mu, \matSigma)$ denotes 
the Gaussian probability density with mean $\mu$ and covariance matrix $\matSigma$. 
For any square matrix $\matC$, $\text{tr}(\matC)$ and $\lv\matC \rv$ denote its 
trace and determinant, respectively. $S_{++}^{n}$ denotes the set of all $n \times n$ positive definite matrices. 

Given positive sequences $\lc f_{n} \rc_{n = 1}^{\infty}$ and $\lc g_{n} \rc_{n = 1}^{\infty}$, $g_{n} = O(f_{n})$ denotes that there exists an $N \in \mathbb{N}$ and a universal constant $C > 0$ such that $g_{n} \le C f_{n}$ for $n \ge N$. Similarly, $g_{n} = \Omega(f_{n})$ denotes that $g_{n} \ge c f_{n}$ for $n \ge N$ and some  constant $c > 0$. Lastly, $f_{n} = \Theta(g_{n})$ implies that $c f_{n} \le g_{n} \le C f_{n}$ for $n \ge N$.

\section{System Model and the M-SBL Algorithm} \label{sec:notion_model_prelim}

\subsection{Joint-sparse support recovery (JSSR)} \label{sec:joint_sparse_supp_recov_prob}
We now formally state the JSSR problem. Suppose $\vecxall$ are $L$ distinct 
joint-sparse vectors in $\Real^{n}$ 
with a common nonzero support denoted by the index set $\setS^{*} \subseteq [n]$. 
Let $K$ be the maximum size of the common support, i.e., $\lv \setS^{*} \rv \le K$. 
In JSSR, we are interested in recovering $\setS^{*}$ from  noisy underdetermined linear measurements $\vecyall$ generated as 
\begin{equation} \label{mmv_meas_model}
\vecy_{j} = \matA \vecx_{j} + \vecw_{j}, \;\;\; 1 \le j \le L.
\end{equation}
We assume that the sensing matrix $\matA \in \Real^{m \times n}$ is a non-degenerate matrix with $m \le n$, i.e., 
any $m$ columns of $\matA$ are linearly independent, or $\text{spark}(\matA) = m + 1$. 
The noise vector $\vecw \in \Real^{m}$ is zero mean Gaussian distributed with diagonal covariance matrix $\sigma^{2} \matI_{m}$. 
The linear measurement model in \eqref{mmv_meas_model} can be rewritten in a compact MMV form as 
$\matY = \matA \matX + \matW$,
where $\matY = \ls \vecyall \rs$, $\matX = \ls \vecxall \rs$ and 
$\matW = \ls \vecw_{1}, \vecw_{2}, \dots, \vecw_{L} \rs$ are 
the observation, signal and noise matrices, respectively. 
Since the columns of $\matX$ are jointly sparse with common support $\setS^{*}$,  
$\matX$ is a row sparse matrix with row support $\mathcal{R}(\matX) = \setS^{*}$. 

\subsection{Gaussian source assumption} \label{sec:signal_model}
We assume that if the $i^{\text{th}}$ row of the unknown signal 
matrix~$\matX$ is nonzero, then it is a Gaussian ensemble of $L$ i.i.d.\ zero mean random variables with 
a common variance $\vgamma^*(i)$ which lies in the interval $[\vgammamin, \vgammamax]$. We refer to this as \textbf{Assumption (A1)}. 
%
%
An immediate consequence of (\textbf{A1}) is that there exists a 
bounded, nonnegative, and at most $K$ sparse vector, 
$\vgamma^{*} \in \Real^{n}_{+}$, such that the columns $\vecx_{j}$ are i.i.d.\ $\mathcal{N}(0, \matGamma^{*})$ with 
$\matGamma^{*} \triangleq \diag{(\vgamma^{*})}$. 
Furthermore, $\mathcal{R}(\matX)$ and $\text{supp}(\vgamma^{*})$ are the same 
and both are equal to $\setS^{*}$. 
\ifdefined \SKIPTEMP
and hence $\matX$ adheres to the joint sparsity model JSM-2 in~\cite{Duarte09DCS_main}. 
\fi

\ifdefined \SKIPTEMP
Compared to \cite{NehoraiGTang10} which also considers a Gaussian source model but with binary valued variance parameters, 
ours is a more general source model which allows for real valued and distinct row-variances. 
In fact, the signal model in \cite{NehoraiGTang10} can be obtained as a special case by setting $\vgammamin = \vgammamax = 1$ 
in the assumption (A1). 
\fi


\subsection{Multiple sparse Bayesian learning (M-SBL)} \label{sec:msbl_overview} 
We now review M-SBL \cite{Wipf_07_msbl}, a type-II maximum likelihood (ML) 
procedure for estimating joint-sparse signals from compressive linear measurements. 
M-SBL models the columns of $\matX$ as i.i.d.\ $\mathcal{N}(0, \matGamma)$, 
%
where $\matGamma = \diag{(\vgamma)}$, and $\vgamma = \ls \vgamma(1), \vgamma(2), \dots, \vgamma(n) \rs^{T}$ 
is an $n \times 1$ vector of unknown nonnegative variance parameters. The elements of $\vgamma$ are collectively called hyperparameters as 
they represent the parameters of the signal prior. 
Since the hyperparameter~$\vgamma(i)$ models the common variance of the $i$th 
row of $\matX$, 
if $\vgamma(i) = 0$, it drives the posterior estimate of $\vecx_{j}(i)$ to zero for $1 \le j \le L$. 
Consequently, if $\vgamma$ is estimated to be a sparse vector, it induces joint 
sparsity in $\matX$.

In M-SBL, the hyperparameter vector $\vgamma$ is chosen to maximize the 
Bayesian evidence $p(\matY ; \vgamma)$, which is tantamount to finding 
the ML estimate of $\vgamma$. Let $\hat{\vgamma}_{\text{ML}}$ denote the ML estimate of $\vgamma$, i.e.,
\begin{equation} \label{ml_estimate_of_vgamma}
	\hat{\vgamma} = \underset{\vgamma \in \Real^{n}_{+}}{\text{arg max }} \log{p(\matY ; \vgamma)}.
\end{equation}
The Gaussian prior on $\vecx_{j}$ combined with the linear measurement model induces Gaussian observations, i.e., 
$p(\vecy_{j}; \vgamma) = \mathcal{N}(0, \sigma^{2} \matI_{m} + \matA \matGamma \matA^{T})$. 
For a fixed $\vgamma$, the MMVs $\vecy_{j}$ are mutually independent. Hence, it follows that 
\begin{eqnarray} \label{msbl_log_ll}
	\log{p(\matY ; \vgamma)} &=& \sum_{j= 1}^{L} \log{p(\vecy_{j}; \vgamma)} 
	\nonumber \\
	& \propto & - L \log{\left| \matSigma_{\vgamma} \right|} - 
	\text{Tr}\lb \matSigma_{\vgamma}^{-1} \matY \matY^{T} \rb,  
\end{eqnarray}
where $\matSigma_{\vgamma} = \sigma^{2} \matI_{m} + \matA 
\matGamma \matA^{T}$. The log likelihood $\log{p(\matY ; \vgamma)}$ in~\eqref{msbl_log_ll} is 
a nonconvex function of $\vgamma$ and its global maximizer $\hat{\vgamma}$ 
\textcolor{black}{is not available} in closed form. However, its local maximizers can still be found via fixed point iterations or the Expectation-Maximization (EM) procedure. 
In \cite{Wipf_07_msbl}, it is empirically shown that the EM procedure 
faithfully recovers the true support $\setS^{*}$, 
provided $m$ and $L$ are sufficiently large. 



\ifdefined \SKIPTEMP
\subsection{\textcolor{black}{An Interesting Property of Solution of the M-SBL optimization}}
\textcolor{black}{We now comment on $\hat{\vgamma}$, the solution of M-SBL optimization in \eqref{ml_estimate_of_vgamma}. 
Due to the elementwise nonnegativity constraint on the hyperparameter $\vgamma$ in M-SBL, it follows 
from the complementary slackness Karush-Kuhn-Tucker (KKT) condition that $\hat{\vgamma}$ must necessarily satisfy: 
\begin{equation}
	\frac{\partial \log p(\matY; \hat{\vgamma}) }{\partial \vgamma(i)} = 0, \text{ if } \hat{\vgamma}(i) > 0, i \in [n]. 
	\label{eqn_msbl_kkt_implication1}
\end{equation}
From $\eqref{eqn_msbl_kkt_implication1}$, it follows that if $\hat{\vgamma}$ has a nonzero support $\hat{\setS}$, then it must 
satisfy
\begin{equation}
	\veca_{i}^{T} \matSigma_{\hat{\vgamma}}^{-1}
	\lb \matSigma_{\hat{\vgamma}} - \matR_{\vecy\vecy} \rb
	\matSigma_{\hat{\vgamma}}^{-1} \veca_{i} = 0
	\;\;\;\forall\; i \in \hat{\setS}, 
	\label{eqn_msbl_kkt_implication2}
\end{equation}
where $\matSigma_{\hat{\vgamma}} = \sigma^2 \matI_m + \matA \hat{\matGamma} \matA^T$ with $\hat{\matGamma} = \diag(\hat{\vgamma})$. 
If $\hat{\vgamma}$ contains $m$ or more nonzero entries, i.e., $|\hat{\setS}| \ge m$, and since the columns of $\matA$ are taken to be in general position (i.e., $\text{spark}(\matA) = m+1$) and $\matSigma_{\hat{\vgamma}} \in \setS_{++}^{m}$, there exist $m$ distinct linearly independent vectors, namely $\lc \matSigma_{\hat{\vgamma}}^{-1} \veca_i \rc_{i \in \hat{\setS}}$ that together span the nullspace of the Hermitian difference matrix $\matSigma_{\hat{\vgamma}} - \matR_{\vecy\vecy}$. 
Thus, for $|\hat{\setS}| \ge m$, the difference $\matSigma_{\hat{\vgamma}} - \matR_{\vecy\vecy}$ has its nullspace spanning the entire $\Real^m$. Consequently, the column-space of $\matSigma_{\hat{\vgamma}} - \matR_{\vecy\vecy}$ is the trivial set $\{\mathbf{0}_{m}\}$, or equivalently, any M-SBL solution $\hat{\vgamma}$ 
with nonzero support size equal to $m$ or higher must necessarily satisfy
\begin{equation}
\label{eqn_msbl_kkt_consequence}
\matSigma_{\hat{\vgamma}} - \matR_{\vecy\vecy} = \mathbf{0}_{m \times m}. 
\end{equation}
Corollary~\ref{corr_dense_gamma_prop} follows as an immediate implication of the conditional relation in~\eqref{eqn_msbl_kkt_consequence}.} 
\textcolor{black}{
\begin{corollary} \label{corr_dense_gamma_prop}
If the solution of M-SBL optimization in $\eqref{ml_estimate_of_vgamma}$, denoted by $\hat{\vgamma}$, contains $m$ or more nonzero entries, then 
\begin{equation}
\| \hat{\vgamma} \|_{1} \le \frac{m \lvvv \matR_{\vecy \vecy} \rvvv_{2}}{(1 - \alpha)},
\nonumber 
\end{equation}
where $\| \veca_{i} \|_{2}^{2} \ge 1 - \alpha, \forall i \in [n]$.
\end{corollary}
\begin{proof}
From \eqref{eqn_msbl_kkt_consequence}, for $\|\hat{\vgamma}\|_{0} \ge m $, we have 
\begin{align}
\lvvv \matR_{\vecy \vecy} \rvvv_{2} &= \lvvv \matSigma_{\hat{\vgamma}} \rvvv_{2} 
\ge \lvvv \matA \diag(\hat{\vgamma}) \matA^{T} \rvvv_{2}
\ge \frac{\text{tr}(\matA \diag(\hat{\vgamma}) \matA^{T})}{m}
\nonumber \\
& = \frac{1}{m}\sum_{i \in \text{supp}(\hat{\vgamma})} \hat{\vgamma}(i) \| \veca_{i} \|_{2}^{2} 
\ge \frac{\|\vgamma \|_{1}(1 - \alpha) }{m}. 
\label{eqn_intr38}
\end{align} 
By rearranging terms in the final inequality in \eqref{eqn_intr38}, we obtain the desired corollary.
\end{proof}
}
\fi

\subsection{The M-SBL objective is a Bregman matrix divergence} \label{sec:bregman_div_interp}
We now present an interesting interpretation of M-SBL's log-marginalized likelihood objective in \eqref{msbl_log_ll} which 
facilitates a deeper understanding of what is accomplished by its maximization. 
We begin by introducing the Bregman matrix divergence $\D_{\varphi}(\matX, \matY)$ between 
any two $n \times n$ positive definite matrices $\matX$ and $\matY$ as 
\begin{equation} \label{matrix_bregman_divergence}
 \D_{\varphi}(\matX, \matY) \triangleq \varphi(\matX) - \varphi(\matY) - \langle \nabla \varphi(\matY), \matX - \matY \rangle, 
\end{equation}
where $\varphi: S_{++}^{n} \to \Real$ is a convex function with $\nabla \varphi(\matY)$ as its first order 
derivative evaluated at $\matY$. In \eqref{matrix_bregman_divergence}, the matrix inner product 
$\langle \matX, \matY\rangle$ is evaluated as 
$\text{tr} \lb \matX \matY^{T} \rb $.
For the specific case of $\varphi(\cdot) = - \log{|\cdot|}$, 
a strongly convex function, we obtain the Bregman LogDet matrix 
divergence given by
\begin{equation} \label{logdet_matrix_divergence}
 \D_{\text{logdet}}(\matX, \matY) = \text{tr}\lb \matX \matY^{-1} \rb - \log{ \left| \matX \matY^{-1} \right| - n}. 
\end{equation}
By termwise comparison of \eqref{msbl_log_ll} and \eqref{logdet_matrix_divergence}, we observe that the negative log likelihood 
$-\log{p(\matY; \vgamma)}\!$ and $\D_{\text{logdet}}(\matR_{\matY}, \matSigma_{\vgamma})$ are the same up to a constant. 
In fact, in \cite[Theorem~$6$]{ArindamBanerjee05ClusteringWithBregman}, it is shown that there is a one-to-one correspondence between every regular exponential family of probability distributions and a unique and distinct Bregman divergence.

In the divergence term $\D_{\text{logdet}}(\matR_{\matY}, \matSigma_{\vgamma})$, the first argument 
$\matR_{\matY} \triangleq \frac{1}{L} \matY \matY^{T}$ is the sample covariance matrix 
of the observations $\matY$ and the second argument $\matSigma_{\vgamma} = \sigma^{2}\matI + \matA \matGamma \matA^{T}$ is the parameterized covariance matrix of $\matY$. 
This connection between M-SBL's log likelihood cost and the LogDet 
divergence reveals that by maximizing the M-SBL cost, we seek a $\vgamma$ \textcolor{black}{that} minimizes 
the distance between $\matR_{\matY}$ and $\matSigma_{\vgamma}$, 
with point wise distances measured using the Bregman LogDet divergence. Thus, the M-SBL algorithm, at its core, is essentially a 
\textit{second order moment matching} or \textit{covariance matching} procedure for \textcolor{black}{finding} $\vgamma$ such that the associated covariance matrix $\matSigma_{\vgamma}$ 
is closest to the sample covariance matrix, in the Bregman LogDet divergence sense. 

This new interpretation of the M-SBL cost as a Bregman matrix divergence elicits two interesting questions:
\begin{enumerate}[label=\roman*]
 \item Are there other matrix divergences besides LogDet 
	 Bregman matrix divergence which are better suited 
	 for covariance matching?
 \item How to exploit the structural similarities between the M-SBL cost and 
	the Bregman (LogDet) matrix divergence to devise faster and more robust 
	techniques for the type-II likelihood maximization? 
\end{enumerate}
It is our opinion that exploring the use of other matrix divergences for 
covariance matching is  
worth further investigation with the potential for new, improved algorithms for 
support recovery. Preliminary results in this direction have been quite 
encouraging. For example, in \cite{Khanna17RDCMP}, an $\alpha$-R\'{e}nyi 
divergence objective is considered for covariance matching, and a fast greedy algorithm is developed for joint-sparse support recovery. 
\ifdefined \VERBOSE
More details about the connection between the exponential family of distributions 
and Bregman divergences can be found in \cite{DhillonTropp08MatrixNearness}. 
\fi

\section{Some Preliminary Concepts} \label{sec:prelim_concepts}
In this section, we review a few key definitions and results 
which will be used in the later sections.

\ifdefined \SKIPTEMP
\subsection{Restricted Isometry Property}
A matrix $\matA \in \Real^{m \times n}$ is said to satisfy the restricted isometry property (RIP) of order $k$ if there 
exists a constant $\delta^{\matA}_{k} \in \lb 0, 1\rb$ such that 
\begin{equation} \label{defn_rip}
 (1 - \delta_{k}^{\matA}) \lvv \vecx \rvv_{2}^{2} \le 
 \lvv \matA \vecx \rvv_{2}^{2} \le
  (1 + \delta_{k}^{\matA}) \lvv \vecx \rvv_{2}^{2} 
\end{equation}
holds for any $k$-sparse vector $\vecx \in \Real^{n}$.
The smallest such $\delta_{k}^{\matA}$ is called the $k^{\text{th}}$ order restricted isometry constant (RIC) of~$\matA$. 
\fi

\subsection{$\epsilon$-Cover, $\epsilon$-net and covering number}
Suppose $\mathcal{T}$ is a set equipped with a pseudo-metric $d$. 
For any set $\mathcal{A} \subseteq \mathcal{T}$, its $\epsilon$-cover is defined as the coverage 
of $\mathcal{A}$ with open balls of radius $\epsilon$ and centers in $\mathcal{T}$. 
The set  $\mathcal{A}^{\epsilon}$ comprising the centers of these covering balls is called an $\epsilon$-net of $\mathcal{A}$. 
The minimum number of $\epsilon$-balls \textcolor{black}{that} can cover $\mathcal{A}$ is called the $\epsilon$-covering 
number of $\mathcal{A}$, and is given by
\begin{equation} \label{defn_covering_num}
N_{\text{cov}}^{\epsilon} \lb \mathcal{A}, d \rb = \min \lc \lv \mathcal{A}^{\epsilon} \rv: \mathcal{A}^{\epsilon} \text{ is an } \epsilon\text{-net of }\mathcal{A} \rc.
\nonumber
\end{equation}
In computational theory of learning, $\epsilon$-net constructs are often useful in converting a union 
over the elements of a continuous set to a finite sized union.

\ifdefined \SKIPTEMP
\begin{proposition}[\cite{Vershynin09RoleOfSparsity}] \label{prop_covering_num_unit_ball}
Let $\mathcal{B}(0, 1)$ be a unit ball in $\Real^{n}$ centered at $0$. Then, its $\epsilon$-covering number with respect 
to the standard Euclidean metric is bounded as 
$N^{\epsilon}_{\text{cov}}\lb \mathcal{B}(0,1), \lvv \cdot  \rvv_{2} \rb \le \lb 3/\epsilon \rb^{n}$.
\end{proposition}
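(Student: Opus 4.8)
The plan is to prove this via the standard volumetric packing argument rather than by constructing a cover directly. First I would extract a maximal $\epsilon$-\emph{separated} subset $\mathcal{P} = \lc \vecx_{1}, \dots, \vecx_{N} \rc \subseteq \mathcal{B}(0,1)$, i.e., a set whose points satisfy $\lvv \vecx_{i} - \vecx_{j} \rvv_{2} \ge \epsilon$ for all $i \neq j$ and which is maximal under inclusion with respect to this separation property; such a set exists by a greedy construction (or Zorn's lemma). The crucial observation is that maximality forces $\mathcal{P}$ to be an $\epsilon$-net: if some $\vecy \in \mathcal{B}(0,1)$ were at distance at least $\epsilon$ from every $\vecx_{i}$, then $\mathcal{P} \cup \lc \vecy \rc$ would still be $\epsilon$-separated, contradicting maximality. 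Hence every point of $\mathcal{B}(0,1)$ lies within distance $\epsilon$ of some $\vecx_{i}$, so $\mathcal{P}$ is a valid $\epsilon$-net and $N^{\epsilon}_{\text{cov}}\lb \mathcal{B}(0,1), \lvv \cdot \rvv_{2} \rb \le N$. It then remains only to bound the cardinality $N$.

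To bound $N$, I would switch to a packing/volume comparison. Since the centers are pairwise $\epsilon$-separated, the open balls $\mathcal{B}\lb \vecx_{i}, \epsilon/2 \rb$ are mutually disjoint; moreover, each $\vecx_{i} \in \mathcal{B}(0,1)$ implies $\mathcal{B}\lb \vecx_{i}, \epsilon/2 \rb \subseteq \mathcal{B}\lb 0, 1 + \epsilon/2 \rb$. Summing Lebesgue volumes over these disjoint balls and using that a Euclidean ball of radius $r$ in $\Real^{n}$ has volume $c_{n} r^{n}$ for a dimension-dependent constant $c_{n}$, disjointness and containment yield $N \, c_{n} \lb \epsilon/2 \rb^{n} \le c_{n} \lb 1 + \epsilon/2 \rb^{n}$. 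The constant $c_{n}$ cancels, giving $N \le \lb (1 + \epsilon/2)/(\epsilon/2) \rb^{n} = \lb 1 + 2/\epsilon \rb^{n}$.

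Finally, I would simplify to the stated bound. In the regime of interest $\epsilon \in (0, 1]$ we have $1 + 2/\epsilon \le 3/\epsilon$, whence $N \le \lb 3/\epsilon \rb^{n}$ and the claim follows. I do not anticipate any serious obstacle in this argument; the only steps requiring care are the equivalence between a maximal separated set and a genuine $\epsilon$-net (specifically the direction that maximality implies covering), and verifying that the radius-$\epsilon/2$ packing balls remain inside the inflated ball $\mathcal{B}(0, 1 + \epsilon/2)$ so that the volume inequality is legitimate.
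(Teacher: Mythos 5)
Your proof is correct and follows exactly the standard volumetric packing argument that the paper itself invokes (it states this proposition by citation to Vershynin and, in Appendix C, refers to it as following from ``volumetric arguments''). The two points you flag as needing care --- maximal $\epsilon$-separated sets being $\epsilon$-nets, and the containment $\mathcal{B}(\vecx_i, \epsilon/2) \subseteq \mathcal{B}(0, 1+\epsilon/2)$ --- are handled correctly, and your explicit restriction to $\epsilon \in (0,1]$ for the final simplification $1 + 2/\epsilon \le 3/\epsilon$ is the right caveat.
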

\fi

\subsection{R\'{e}nyi divergence}
Let $\lb \mathcal{X}, \mathcal{F} \rb$ be a measurable space and $P$ and $Q$ be two probability 
measures on $\mathcal{F}$ with densities $p$ and $q$, respectively, with respect to the dominating  
Lebesgue measure $\mu$ on $\mathcal{F}$. 
Then, for $\alpha \in \Real^{+} \backslash {1}$, the \textit{R\'{e}nyi divergence} of order $\alpha$ between $P$ and $Q$, denoted $\DD_{\alpha}(p || q)$, is defined as 
\begin{equation} \label{defn_renyi_div}
\DD_{\alpha}(p || q) = \frac{1}{\alpha -1} \log \int_{\mathcal{X}} p(x)^{\alpha} q(x)^{1 - \alpha}  \mu(dx).
\nonumber 
\end{equation}
$\DD_{\alpha}(p || q)$ is a nondecreasing function of $\alpha$. For $\alpha \in [0, 1)$, 
$\DD_{\alpha}(p||q) < \DD_{\text{KL}}(p||q)$, with $\displaystyle \lim_{\alpha \to 1} $ $ \DD_{\alpha}(p || q)$ $ = \DD_{\text{KL}}(p || q)$, where 
$\DD_{\text{KL}}$ is the Kullback-Leibler divergence~\cite{Erven14RenyiDivKLDiv}. 
For $p = \mathcal{N}(0, \matSigma_{1})$ and $q = \mathcal{N}(0, \matSigma_{2})$, the $\alpha$-R\'{e}nyi divergence $\DD_{\alpha}(p||q)$ is available in closed form \cite{Gil13RenyiDivClosedForm}, 
\begin{equation} \label{renyi_div_multivar_gauss}
 \DD_{\alpha}(p||q)  = \frac{1}{2(1 - \alpha)} \log{ \frac{\lv (1- \alpha) \matSigma_{1} + \alpha \matSigma_{2} \rv}{ \lv \matSigma_{1} \rv^{1-\alpha}  \lv \matSigma_{2} \rv^{\alpha} }}.
\end{equation}
Proposition~\ref{prop_renyi_div_lb_analytical} provides a lower bound for the $\frac{1}{2}$-R\'{e}nyi divergence between two multivariate Gaussian distributions.
\begin{proposition}\label{prop_renyi_div_lb_analytical}
Let $p_{1}$ and $p_{2}$ be two multivariate Gaussian distributions with 
zero mean and positive definite covariance matrices $\matSigma_{1}$ and 
$\matSigma_{2}$, respectively. Then, the $\frac{1}{2}$-R\'{e}nyi divergence 
between $p_{1}$ and $p_{2}$ is bounded as  
\begin{eqnarray}\label{renyi_div_lb_analytical}
\DD_{\frac{1}{2}}(p_{1}, p_{2}) \ge 
\frac{1}{2} \text{tr}\lb \lb \matSigma_{1} - \matSigma_{2} \rb \lb \matSigma_{1} + \matSigma_{2} \rb^{-1} \lb \matSigma_{1} - \matSigma_{2}\rb \lb \matSigma_{1} + \matSigma_{2}\rb^{-1} \rb.
\nonumber
\end{eqnarray}
\end{proposition}
\begin{proof}
See Appendix \ref{app:proof_thm_renyi_div_lb_analytical}.
\end{proof}

\subsection{Concentration of sample covariance matrix}
\begin{proposition}[Vershynin \cite{Vershynin11RMT}]  \label{prop_conc_spectralnorm_covmat}
Let $\vecyall \in \Real^{m}$ be $L$ independent samples from $\mathcal{N}(0, \matSigma)$, 
and let $\matSigma_{L} = \frac{1}{L}\sum_{j=1}^{L}\vecy_{j}\vecy_{j}^{T}$ denote the sample covariance matrix. 
Then, for any $\epsilon > 0$, 
\begin{equation} \label{eqn_covariance_conc}
\lvvv \matSigma_{L} - \matSigma \rvvv_{2} \le \epsilon \lvvv \matSigma \rvvv_{2}, 
\end{equation}
holds with probability exceeding $1-\delta$ provided $L \ge \frac{C}{\epsilon^2}\log{\frac{2}{\delta}}$, $C$ being an absolute constant.
\end{proposition}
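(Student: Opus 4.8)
The plan is to reduce the general Gaussian case to the isotropic case and then run a covering-number argument over the unit sphere. First I would whiten the samples: writing $\vecy_{j} = \matSigma^{1/2}\vecg_{j}$ with $\vecg_{j}$ i.i.d.\ $\mathcal{N}(0, \matI_{m})$, the sample covariance factors as $\matSigma_{L} = \matSigma^{1/2}\matG_{L}\matSigma^{1/2}$ where $\matG_{L} = \frac{1}{L}\sum_{j=1}^{L}\vecg_{j}\vecg_{j}^{T}$. Since $\matSigma_{L} - \matSigma = \matSigma^{1/2}(\matG_{L} - \matI_{m})\matSigma^{1/2}$, submultiplicativity of the spectral norm gives $\lvvv \matSigma_{L} - \matSigma \rvvv_{2} \le \lvvv \matSigma \rvvv_{2}\,\lvvv \matG_{L} - \matI_{m} \rvvv_{2}$. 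Hence it suffices to establish $\lvvv \matG_{L} - \matI_{m} \rvvv_{2} \le \epsilon$ with probability at least $1 - \delta$, which removes $\matSigma$ from the problem entirely.

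Next I would discretize the supremum defining the spectral norm. Because $\matG_{L} - \matI_{m}$ is symmetric, $\lvvv \matG_{L} - \matI_{m} \rvvv_{2} = \sup_{\vecu \in S^{m-1}} \lv \vecu^{T}(\matG_{L} - \matI_{m})\vecu \rv$. Taking a $1/4$-net $\mathcal{C}$ of the sphere, whose cardinality is controlled by the covering-number bound of Proposition \ref{prop_covering_num_unit_ball}, a standard approximation inequality yields $\lvvv \matG_{L} - \matI_{m} \rvvv_{2} \le 2\max_{\vecu \in \mathcal{C}} \lv \vecu^{T}(\matG_{L} - \matI_{m})\vecu \rv$, converting the supremum over a continuum into a maximum over at most $9^{m}$ points.

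For each fixed $\vecu \in \mathcal{C}$ I would then concentrate the quadratic form. Writing $Z_{j} = \vecu^{T}\vecg_{j} \sim \mathcal{N}(0,1)$, the quantity $\vecu^{T}(\matG_{L} - \matI_{m})\vecu = \frac{1}{L}\sum_{j=1}^{L}(Z_{j}^{2} - 1)$ is a centered average of i.i.d.\ $\chi^{2}_{1}$ variables. Since $Z_{j}^{2} - 1$ is sub-exponential rather than sub-Gaussian, the right tool is a Bernstein-type inequality, giving $\PP\big(\lv \frac{1}{L}\sum_{j}(Z_{j}^{2}-1) \rv \ge t\big) \le 2\exp(-cL\min(t^{2},t))$ for an absolute constant $c$. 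A union bound over $\mathcal{C}$, with $t = \epsilon/2$, then bounds the failure probability by $9^{m}\cdot 2\exp(-cL\epsilon^{2}/4)$ for $\epsilon \le 2$; requiring this to be at most $\delta$ and solving for $L$ gives the claimed sample complexity.

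The main obstacle is the tension between the net cardinality $9^{m}$ and the stated, essentially dimension-free sample complexity $L \ge \frac{C}{\epsilon^{2}}\log\frac{2}{\delta}$: the honest union bound produces $L \gtrsim \frac{1}{\epsilon^{2}}(m + \log\frac{2}{\delta})$, so recovering the stated form requires folding the dimensional term into the absolute constant $C$ (legitimate when $m$ is held fixed, as it is throughout the paper) or restricting to the regime $\log(2/\delta) \gtrsim m$. The other delicate point is keeping the sub-exponential tail in its correct $\min(t^{2},t)$ form, so that the quadratic branch governs the small-$\epsilon$ regime; conflating it with a sub-Gaussian bound would silently alter the dependence on $\epsilon$.
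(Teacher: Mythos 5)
The paper gives no proof of this proposition; it is quoted directly from Vershynin's notes on non-asymptotic random matrix theory. Your argument (whitening to reduce to the isotropic case, a $1/4$-net of the sphere with the $9^{m}$ cardinality bound, Bernstein's inequality for the sub-exponential variables $Z_{j}^{2}-1$, and a union bound) is precisely the standard proof of that cited result, and it is correct. Your closing observation is also the right one to make: the honest bound is $L \gtrsim \epsilon^{-2}\lb m + \log(2/\delta)\rb$, and the statement as printed suppresses the dimension $m$ into the absolute constant $C$, which is harmless for the paper's purposes (where $m$ is fixed and only the dependence on $L$ and $\delta$ is used downstream in Theorem \ref{thm_abstract_suff_condition}) but should be kept in mind if one tracks how $C$ scales with the problem size.
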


\subsection{Spectral norm bounds for Gaussian matrices} 
\begin{proposition}[Corollary $5.35$ in \cite{Vershynin11RMT}] \label{prop_gaussian_spectral_bound}
Let $\matA$ be an $m \times n$ matrix whose entries are i.i.d. $\mathcal{N}(0,1)$. Then for every $t \ge 0$, 
\begin{equation}
\lvvv \matA \rvvv_{2} \le \sqrt{m} + \sqrt{n} + t,
\nonumber
\end{equation}
with probability at least $1 - 2e^{-t^2/2}$.
\end{proposition}

The following corollary presents a probabilistic bound for the spectral norm of a submatrix of 
an $m \times n$ sized Gaussian matrix obtained by sampling its columns. 
\begin{corollary} \label{corr_Gaussian_submatrix_spectral_bound}
Let $\matA$ be an $m \times n$ sized matrix whose entries are i.i.d. $\mathcal{N}(0,1)$. Then, for any $\setS \subseteq [n], |\setS| \le k$, the submatrix $\matA_{\setS}$ 
obtained by sampling the columns of $\matA$ indexed by $\setS$ satisfies 
\begin{equation}
\lvvv \matA_{\setS} \rvvv_{2} \le \sqrt{m} + \sqrt{k} + \sqrt{6 k \log n}    
\nonumber.
\end{equation}
with probability exceeding $1 - 2 n^{-k}$. 
\end{corollary}
\begin{proof}
See Appendix~\ref{app:proof_corr_Gaussian_submatrix_spectral_bound}.
\end{proof}

\section{Support Error Analysis} 
\label{sec:abstract_perr_bound_derivation}
Towards studying the sparsistency of the M-SBL solution, we first derive a 
Chernoff bound for the support error probability incurred by the solution of a 
constrained version of M-SBL optimization under assumption~\textbf{A1}. We 
begin by introducing some of the frequently used notation in the table below.
\begin{table}[h]
\centering
\renewcommand{\arraystretch}{1.2}
\begin{tabular}{|l|p{12cm}|}
\hline
$\setS^{*}$ & True row support of $\matX$. \\ \hline
$\vgamma^{*}$ & Principal diagonal of the common covariance matrix  $\matGamma^*$ of   
the i.i.d. columns in $\matX$. Consequently, $\text{supp}(\vgamma^{*}) = \setS^*$.\\
\hline
$K$ & Maximum number of nonzero rows in $\matX$\\ \hline 
$\setS_{k}$ & The collection of all support sets of $k$ or lesser 
 size, i.e., $\setS_{k} = \lc \setS \subseteq [n], |\setS| \le k \rc$ \\ \hline
$\Theta(\setS)$ & Bounded hyperparameter set associated 
with the support set $\setS$, formally defined as $ \Theta(\setS) \! \triangleq \!
 \lc 
 \vgamma \in \Real^{n}_{+}: \text{supp}(\vgamma) = \setS,
 \vgammamin \preceq \vgamma_{\setS} \preceq \vgammamax 
 \rc$.\\ \hline
$\Theta_{k}$ & The collection of all $k$ or less sparse vectors in $\Real^{n}_{+}$  
with nonzero coefficients lying in $[\vgammamin, \vgammamax]$. 
By definition, we have $\Theta_{k} = \bigcup_{\setS \in \setS_{k}}\Theta(\setS)$.\\ \hline
\end{tabular}
\end{table}

By assumption \textbf{A1} on $\matX$, $\vgamma^{*}$ belongs to the bounded parameter set $\Theta_{n}$.
Therefore, in order to estimate $\vgamma^{*}$ from the measurement vectors 
$\matY$, we consider solving a constrained variant of the M-SBL optimization in 
\eqref{ml_estimate_of_vgamma}, which we refer to as the cM-SBL problem:
\begin{equation} \label{defn_cmsbl}
	\hspace{-1cm} \text{cM-SBL:} \hspace{0.5cm}
	\hat{\vgamma} = \underset{\vgamma \in \Theta_{n}}
	{\text{arg max }} \LL(\matY; \vgamma).
\end{equation}
The cM-SBL objective $\LL(\matY; \vgamma)$ is the same as the M-SBL's log-marginalized likelihood $\;$ $\log p(\matY; \vgamma)$ defined 
in~\eqref{msbl_log_ll}. 
The row support of $\matX$ is estimated as the support of $\hat{\vgamma}$, where $\hat{\vgamma}$ is a solution of \eqref{defn_cmsbl}.
Consider the set of ``bad" MMVs,  
\begin{equation} 
\mathcal{E}_{\setS^{*}} \triangleq \lc \matY \in \Real^{m \times L} : \text{supp}\lb \hat{\vgamma} \rb \neq \setS^{*} \rc, 
\label{bad_mmv_set}
\end{equation}
which result in erroneous estimation of $\setS^*$ via 
cM-SBL. In other words, $\mathcal{E}_{\setS^{*}}$ is the collection of 
undesired MMVs for which the M-SBL objective is globally maximized by some 
$\vgamma \in \Theta(\setS)$, $\setS \neq \setS^*$, i.e., 
\begin{equation}\label{error_event_as_union}
	\mathcal{E}_{\setS^{*}} = \!\!\!\!
	\bigcup_{\setS \in \setS_{n} \backslash \lc \setS^{*} \rc} 	\lc \matY : \!\!
	\max_{\vgamma \in \Theta(\setS)}
	\LL(\matY; \vgamma)
	\ge \!
	\max_{\vgamma^{\prime} \in \Theta(\setS^{*})} \LL(\matY; \vgamma^{\prime})
	\rc.
\end{equation}
We are interested in finding the conditions under which $\PP(\mathcal{E}_{\setS^*})$ can be made arbitrarily small.
Since $\displaystyle \max_{\vgamma^{\prime} \in \Theta(\setS^{*})}
\LL(\matY; \vgamma^{\prime}) \ge \LL(\matY; \vgamma^{*})$, it follows that
\begin{eqnarray}
	&&\hspace{-1.1cm} \mathcal{E}_{\setS^{*}} \subseteq \!\!\!
	\bigcup_{\setS \in \setS_{n} \backslash \lc \setS^{*} \rc} 	  \lc \matY :
	\max_{\vgamma \in \Theta(\setS)}
	\LL(\matY; \vgamma)
	\ge \LL(\matY; \vgamma^{*}) \rc
	\nonumber \\
	&& \hspace{-0.5cm} = \!\!\!
	\bigcup_{\setS \in \setS_{n} \backslash \lc \setS^{*} \rc} 	  
	\bigcup_{\vgamma \in \Theta(\setS)} \!\!
	\lc \matY :
	\LL(\matY; \vgamma) - \LL(\matY; \vgamma^{*}) \ge 0 \rc.
	\label{err_evt_as_continuous_union}
\end{eqnarray}
The continuous union over infinitely many elements of $\Theta(\setS)$ in \eqref{err_evt_as_continuous_union} can be relaxed to a finite sized union by 
using the following $\epsilon$-net argument.
Consider $\Theta^{\epsilon}(\setS)$, a finite sized $\epsilon$-net of the 
hyperparameter set $\Theta(\setS)$, 
such that for any $\vgamma \in \Theta(\setS)$, there exists an element $\vgamma^{\prime} \in \Theta^{\epsilon}(\setS)$ satisfying $\lv \LL(\matY; \vgamma) - \LL(\matY; \vgamma^{\prime}) \rv \le \epsilon$. 
Proposition~\ref{prop_cov_num_enet} gives an upper bound on the size of such an $\epsilon$-net. 
\begin{proposition} \label{prop_cov_num_enet}
Given a support set $\setS \subseteq [n]$, there exists a finite set 
$\Theta^{\epsilon}(\setS) \subset \Theta(\setS)$ such that it simultaneously satisfies
\begin{enumerate}[label=(\roman*)]
	\item For any $\vgamma \in \Theta(\setS)$, there exists 
		a $\vgamma^{\prime} \in \Theta^{\epsilon}(\setS)$ 
		such that $\lv \LL(\matY; \vgamma) - \LL(\matY; \vgamma^{\prime}) \rv \le \epsilon$.  
	\item $ \lv \Theta^{\epsilon}(\setS) \rv \le
	    \max \lc 1,
		\lb 3 C_{\LL, \setS} (\vgammamax - \vgammamin) \sqrt{|\setS|} / \epsilon \rb^{|\setS|} \rc$, 
		where $C_{\LL, \setS}$ is the Lipschitz constant of $\LL(\matY; \vgamma)$ with respect to $\vgamma$ in the bounded domain $\Theta(\setS)$.
\end{enumerate}
The set $\Theta^{\epsilon}(\setS)$ is an $\epsilon$-net of $\Theta(\setS)$.
\end{proposition}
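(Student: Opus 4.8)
The plan is to reduce the construction of a net in the log-likelihood pseudo-metric $d(\vgamma,\vgamma') \triangleq \lv \LL(\matY;\vgamma) - \LL(\matY;\vgamma')\rv$ to that of an ordinary Euclidean net of $\Theta(\setS)$, and then to bound the size of the latter using the elementary geometry of $\Theta(\setS)$ together with Proposition~\ref{prop_covering_num_unit_ball}. Write $s \triangleq \lv\setS\rv$. Up to the fixed embedding of its $s$ active coordinates, $\Theta(\setS)$ is the axis-aligned box $[\vgammamin,\vgammamax]^{s}\subset\Real^{s}$, whose Euclidean diameter equals $(\vgammamax-\vgammamin)\sqrt{s}$.

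First I would record the Lipschitz property underlying the constant $C_{\LL,\setS}$. On the bounded set $\Theta(\setS)$ the covariance $\matSigma_{\vgamma} = \sigma^{2}\matI + \matA\matGamma\matA^{T}$ has its eigenvalues confined to $[\sigma^{2},\, \sigma^{2} + \lvvv\matA\rvvv_{2}^{2}\,\vgammamax]$, so it stays in a compact subset of $S_{++}^{m}$ on which both $\log\lv\matSigma_{\vgamma}\rv$ and $\matSigma_{\vgamma}^{-1}$ are smooth with bounded derivatives. Since $\LL(\matY;\vgamma)$ equals $-L\log\lv\matSigma_{\vgamma}\rv - \tr\lb\matSigma_{\vgamma}^{-1}\matY\matY^{T}\rb$ up to an additive constant (cf.~\eqref{msbl_log_ll}), it is continuously differentiable in $\vgamma$ with gradient bounded uniformly over $\Theta(\setS)$; this yields the finite Lipschitz constant $C_{\LL,\setS}$, i.e.\ $\lv\LL(\matY;\vgamma)-\LL(\matY;\vgamma')\rv \le C_{\LL,\setS}\lvv\vgamma-\vgamma'\rvv_{2}$ for all $\vgamma,\vgamma'\in\Theta(\setS)$.

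Next I would set $\delta = \epsilon/C_{\LL,\setS}$ and take $\Theta^{\epsilon}(\setS)$ to be a $\delta$-net of $\Theta(\setS)$ in the Euclidean metric with centers inside $\Theta(\setS)$ (e.g.\ the points of a maximal $\delta$-packing of $\Theta(\setS)$), so that $\Theta^{\epsilon}(\setS)\subset\Theta(\setS)$. For any $\vgamma\in\Theta(\setS)$ there is then a $\vgamma'\in\Theta^{\epsilon}(\setS)$ with $\lvv\vgamma-\vgamma'\rvv_{2}\le\delta$, whence the Lipschitz bound gives $\lv\LL(\matY;\vgamma)-\LL(\matY;\vgamma')\rv \le C_{\LL,\setS}\,\delta = \epsilon$; this is exactly property~(i) and the closing assertion that $\Theta^{\epsilon}(\setS)$ is an $\epsilon$-net in the pseudo-metric $d$. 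For the cardinality, since $\Theta(\setS)$ has diameter $(\vgammamax-\vgammamin)\sqrt{s}$ it lies in a Euclidean ball of radius $R=(\vgammamax-\vgammamin)\sqrt{s}$ about one of its own points, so by the scaling identity $N_{\text{cov}}^{\delta}\lb\B(0,R),\lvv\cdot\rvv_{2}\rb = N_{\text{cov}}^{\delta/R}\lb\B(0,1),\lvv\cdot\rvv_{2}\rb$ and Proposition~\ref{prop_covering_num_unit_ball},
\begin{equation*}
\lv\Theta^{\epsilon}(\setS)\rv \le N_{\text{cov}}^{\delta}\lb\Theta(\setS),\lvv\cdot\rvv_{2}\rb \le \lb 3R/\delta\rb^{s} = \lb 3 C_{\LL,\setS}(\vgammamax-\vgammamin)\sqrt{s}/\epsilon\rb^{s},
\end{equation*}
which is the bound in~(ii). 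When this expression falls below $1$ we have $\delta > 3R$, so $\delta$ exceeds the diameter of $\Theta(\setS)$ and a single ball of radius $\delta$ about any point of $\Theta(\setS)$ already covers the box; one net point then suffices, which is what the $\max\{1,\cdot\}$ records.

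I expect the only delicate point to be reconciling the requirement $\Theta^{\epsilon}(\setS)\subset\Theta(\setS)$ (centers inside the set) with the stated constant $3$: a cover obtained merely by rescaling the ambient unit-ball net of Proposition~\ref{prop_covering_num_unit_ball} need not have its centers in $\Theta(\setS)$. This is resolved by instead taking a maximal $\delta$-packing of $\Theta(\setS)$, whose centers lie in $\Theta(\setS)$ by construction and which is automatically a $\delta$-cover by maximality; a volumetric comparison of the disjoint half-radius balls against a ball of radius $R+\delta/2$ reproduces the same bound $\lb 3R/\delta\rb^{s}$ in the regime $\delta\le R$. Everything else is routine once $C_{\LL,\setS}$ is in hand, so the substantive content is the Lipschitz reduction of the first two paragraphs; the covering-number estimate is then a direct application of the cited proposition.
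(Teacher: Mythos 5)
Your proposal is correct and follows essentially the same route as the paper's proof: reduce the $\epsilon$-net in the log-likelihood pseudo-metric to a $\delta$-net in the Euclidean metric with $\delta = \epsilon/C_{\LL,\setS}$ via the Lipschitz bound, then bound the Euclidean covering number of $\Theta(\setS)$ (a box of diameter $(\vgammamax-\vgammamin)\sqrt{|\setS|}$) by $(3R/\delta)^{|\setS|}$ using the volumetric estimate behind Proposition~\ref{prop_covering_num_unit_ball}. The one point where you diverge is the mechanism for keeping the net centers inside $\Theta(\setS)$: you take a maximal $\delta$-packing (a cover by maximality, with cardinality controlled by a volume comparison), whereas the paper covers a $\delta$-blow-up of $\Theta(\setS)$ and projects the exterior centers onto the convex set $\Theta(\setS)$, verifying via the projection inequality that the projected points still $\delta$-cover; both devices are standard and produce the same bound.
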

\begin{proof}
	 See Appendix \ref{app:proof_prop_cov_num_enet}.
\end{proof}
From Proposition~\ref{prop_cov_num_enet}-(ii), we observe that both the construction as well as size of $\Theta^{\epsilon}(\setS)$ depends on the Lipschitz continuity 
of the log-likelihood $\LL(\matY; \vgamma)$ with respect to $\vgamma$. 
By virtue of data-dependent nature of $\LL(\matY; \vgamma)$, its Lipschitz constant 
$\mathcal{C}_{\LL, \setS}$ depends on the instantaneous value of $\matY$. 
To make the rest of the analysis independent of~$\matY$, we introduce a new MMV set $\setG$, conditioned on 
which, the Lipschitz constant~$\mathcal{C}_{\LL, \setS}$ is uniformly bounded solely in terms of second-order statistics of $\matY$. A possible choice of $\setG$ could be   
\begin{equation}
\setG \triangleq \lc \matY  \subset \Real^{m \times L} : \lvvv \frac{1}{L} 
\matY \matY^{T} \rvvv_{2} \le 
2 \lvvv \mathbb{E} \ls \vecy_{1} \vecy_{1}^{T} \rs \rvvv_{2} \rc.
\label{defn_bdd_spectral_norm_mmv_set}
\end{equation}
By Proposition~\ref{loglikelihood_Lipschitz_char} in Appendix~\ref{proof_prop_kcov_ub}, for $\matY \in \setG$, $\LL(\matY; \vgamma)$ is uniformly continuous with a Lipschitz constant that depends only 
on the spectral norm of $\mathbb{E}[\vecy_{1} \vecy_{1}^{T}]$. Hence, the $\epsilon$-net can now be constructed entirely independent of $\matY$. We denote this $\epsilon$-net by $\Theta^{\epsilon}(\setS)\vert_{\setG}$.

Since for arbitrary sets $\setA$ and $\setB$, 
$\setA \subseteq (\setA \cap \setB) \cup \setB^{c}$, the RHS in \eqref{err_evt_as_continuous_union} relaxes as 
\begin{equation} 
\mathcal{E}_{\setS^{*}} \!\subseteq \!
 	\lc \!\!
 	\bigcup_{\substack{\setS \in  \setS_{n} \backslash \setS^{*} }} 
 	\bigcup_{\substack{\vgamma \in \Theta(\setS)}}  \!\!\!\!
 	 \lc \! \LL(\matY; \vgamma) - \LL(\matY; \vgamma^{*}) \ge 0 \rc 
 	 \cap \setG  \! \rc 
 	 \cup \; \setG^{c}. 
\end{equation}
The continuous union over $\Theta(\setS)$ relaxes to a finite sized union over $\Theta^{\epsilon}(\setS)\vert_{\setG}$ as shown below.
\begin{eqnarray} 
\mathcal{E}_{\setS^{*}} 
& & \hspace{-0.7cm} \subseteq\!\!
\lc \!\!
\bigcup_{\substack{\setS \in \setS_{n} \backslash \setS^{*} }} \!
 \bigcup_{\substack{\vgamma \in \Theta^{\epsilon}(\setS)\vert_{\setG}}} \!\!\!\!
 \lc \LL(\matY; \vgamma) \! -\! \LL(\matY; \vgamma^{*}) \! \ge\! -\epsilon \rc 
 \cap \setG  \!\rc \!
 \cup  \setG^{c}
 \nonumber \\
&& \hspace{-0.8cm} \subseteq \!
\lc \!
\bigcup_{\substack{\setS \in \setS_{n} \backslash \setS^{*} }} \;
 \bigcup_{\substack{\vgamma \in \Theta^{\epsilon}(\setS)\vert_{\setG}}} \!\!\!\!\!
 \lc \LL(\matY; \vgamma) - \LL(\matY; \vgamma^{*}) \ge -\epsilon \rc \rc \!
 \cup  \setG^{c}. 
 \nonumber
\end{eqnarray}
By applying the union bound, we obtain
\begin{align}
 \PP\lb \mathcal{E}_{\setS^{*}} \rb  & \le  
 \sum_{\substack{\setS \in \setS_{n} \backslash \setS^{*} }} 
 \sum_{\substack{\vgamma \in \Theta^{\epsilon}(\setS)\vert_{\setG}}}
\PP \lb \LL(\matY; \vgamma) - \LL(\matY; \vgamma^{*}) \ge -\epsilon \rb 
 +  \PP \lb \setG^{c} \rb. 
\label{err_prob_ub_as_double_sum} 
\end{align} 
From \eqref{err_prob_ub_as_double_sum}, the support error probability $  \PP(\mathcal{E}_{\setS^{*}})$ will be small when the summands, $\PP \lb \LL(\matY; \vgamma) \right. $ $ \left. - \LL(\matY; \vgamma^{*}) \ge -\epsilon \rb$, 
$\vgamma \in \Theta^{\epsilon}(\setS) \vert_{\setG} $, are individually sufficiently small so that their collective contribution remains small, and $\PP \lb \setG^{c} \rb$ is also small.
In Theorem \ref{thm_model_mismatch_ldp}, we show that each summand corresponds to a large deviation event which occurs with an exponentially decaying probability.

\begin{theorem} \label{thm_model_mismatch_ldp}
For $\vgamma \in \Real^{n}_{+}$, let $p_{\vgamma}$ denote the marginal probability density of the columns of $\matY$ 
induced by the joint-sparse columns of $\matX$ drawn independently from 
$\mathcal{N}\lb 0, \diag{(\vgamma)} \rb$. Then, the 
log-likelihood  $\LL(\matY; \vgamma) = \sum_{j = 1}^{L}\log{p_{\vgamma}(\vecy_{j})}$ satisfies 
the following large deviation property.
\begin{equation} \label{eqn_model_mismatch_ldp}
 \PP \lb \LL(\matY; \vgamma) - \LL(\matY; \vgamma^{*}) 
 \ge - 
 \epsilon \rb 
 \le 
 \exp{\lb - L \psi^* \lb - \frac{\epsilon}{L} \rb \rb},
\end{equation}
where $\psi^{*}(\cdot)$ is the Legendre transform\footnote
{
For any convex function $f: \mathcal{X} \to \Real$ on a 
convex set $\mathcal{X}\subseteq \Real^{n}$, its 
Legendre transform is the function 
$f^{*}$ defined by 
\begin{equation} \label{defn_legendre_transform}
	f^{*}(\vecz) = \sup_{\vecx \in \mathcal{X}}  
	\lb \langle \vecz, \vecx \rangle  - f(\vecx)\rb.
	\nonumber
\end{equation}
}
of $\psi(t) \triangleq (t-1)\DD_{t}(p_{\vgamma}, p_{\vgamma^*})$, 
and $\DD_{t}$ is the $t$-R\'{e}nyi divergence (of order 
$t > 0$) between the probability densities $p_{\vgamma}$ and 
$p_{\vgamma^{*}}$.
\end{theorem}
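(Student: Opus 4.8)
The plan is to recognize that the left-hand side of \eqref{eqn_model_mismatch_ldp} is the upper tail of a sum of i.i.d.\ random variables, and then apply the Cram\'{e}r--Chernoff large deviation bound, identifying the cumulant generating function of the summands with the $t$-R\'{e}nyi divergence. First I would define the per-sample log-likelihood ratio
\[
Z_j \triangleq \log p_{\vgamma}(\vecy_j) - \log p_{\vgamma^{*}}(\vecy_j), \qquad 1 \le j \le L .
\]
Since the columns $\vecy_j$ of $\matY$ are i.i.d.\ draws from $p_{\vgamma^{*}}$ (the data is generated from $\matGamma^{*}$), the variables $Z_1, \dots, Z_L$ are i.i.d., and $\LL(\matY; \vgamma) - \LL(\matY; \vgamma^{*}) = \sum_{j=1}^{L} Z_j$. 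The event in \eqref{eqn_model_mismatch_ldp} is therefore $\{\sum_{j=1}^{L} Z_j \ge -\epsilon\}$, an upper-deviation event: the per-sample mean $\EE_{p_{\vgamma^{*}}}[Z_1] = -D_{\mathrm{KL}}(p_{\vgamma^{*}} \,\|\, p_{\vgamma}) \le 0$ is nonpositive, so the threshold $-\epsilon$ lies above the mean $L\,\EE[Z_1]$ of the sum, which is why a Chernoff bound with a nonnegative exponential parameter applies.

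The crux is the moment generating function of $Z_1$ under $p_{\vgamma^{*}}$. For $t > 0$,
\[
\EE_{p_{\vgamma^{*}}}\!\left[e^{t Z_1}\right]
= \int \left(\frac{p_{\vgamma}(\vecy)}{p_{\vgamma^{*}}(\vecy)}\right)^{t} p_{\vgamma^{*}}(\vecy)\, d\vecy
= \int p_{\vgamma}(\vecy)^{t}\, p_{\vgamma^{*}}(\vecy)^{1-t}\, d\vecy .
\]
Comparing with the definition of the $t$-R\'{e}nyi divergence in \eqref{defn_renyi_div} (taking $\alpha = t$, $p = p_{\vgamma}$, $q = p_{\vgamma^{*}}$) shows that this integral equals $\exp\!\big((t-1)\DD_{t}(p_{\vgamma} \,\|\, p_{\vgamma^{*}})\big) = \exp(\psi(t))$. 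Hence $\psi(t)$ is \emph{exactly} the log-MGF (cumulant generating function) of $Z_1$, and its Legendre transform $\psi^{*}$ is the associated Cram\'{e}r rate function.

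Finally I would invoke the exponential Markov inequality: for any $t > 0$,
\[
\PP\!\left(\sum_{j=1}^{L} Z_j \ge -\epsilon\right)
\le e^{t \epsilon}\, \EE\!\left[e^{t \sum_{j} Z_j}\right]
= \exp\!\big(t\epsilon + L\,\psi(t)\big),
\]
using independence in the last step. Minimizing the exponent over $t > 0$ and factoring out $L$ gives
\[
\inf_{t > 0}\big(t\epsilon + L\psi(t)\big)
= -L \sup_{t > 0}\!\Big(\big(-\tfrac{\epsilon}{L}\big) t - \psi(t)\Big)
= -L\,\psi^{*}(-\epsilon/L),
\]
which is precisely the claimed exponent in \eqref{eqn_model_mismatch_ldp}.

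I expect the main obstacle to be the book-keeping around the domain and finiteness of the MGF. The closed-form R\'{e}nyi divergence between the Gaussians $p_{\vgamma}=\mathcal{N}(0,\matSigma_{\vgamma})$ and $p_{\vgamma^{*}}=\mathcal{N}(0,\matSigma_{\vgamma^{*}})$ requires $(1-t)\matSigma_{\vgamma} + t\,\matSigma_{\vgamma^{*}} \succ 0$, which holds automatically for $t \in [0,1]$ (a convex combination of positive definite matrices) but may fail for larger $t$, where $\psi(t)$ can be $+\infty$. One must therefore verify that restricting the Chernoff optimization to the range $t > 0$ on which $\psi$ is finite still yields the stated Legendre transform, and confirm the sign conventions so that the minimizing $t$ lands in this admissible range; since $\psi(0)=0$ and $\psi'(0)=\EE[Z_1]\le 0$ while the threshold is $-\epsilon/L \ge \EE[Z_1]$, the optimizer indeed lies at $t \ge 0$, so the bound is nontrivial.
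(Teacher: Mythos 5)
Your proposal is correct and follows essentially the same route as the paper's proof in Appendix~\ref{app:thmproof_err_evt_as_ldp}: a Chernoff/exponential-Markov bound on the i.i.d.\ sum of per-sample log-likelihood ratios, identification of the log-MGF $\log\EE_{p_{\vgamma^*}}[e^{tZ_1}] = (t-1)\DD_t(p_{\vgamma},p_{\vgamma^*}) = \psi(t)$ via the R\'enyi divergence definition, and optimization over $t>0$ to obtain the Legendre transform $\psi^*(-\epsilon/L)$. Your additional remarks on the finiteness domain of the MGF and the sign conventions are a sensible refinement but do not change the argument.
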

\begin{proof}
	See Appendix \ref{app:thmproof_err_evt_as_ldp}.
\end{proof}

Note that, when the measurement noise is Gaussian, the marginal density $p_{\vgamma}(\vecy_{j})$ of the individual observations $\vecy_{j}$ is also Gaussian with zero mean and covariance matrix $\matSigma_{\vgamma} = \sigma^{2} \matI_{m} + \matA \matGamma \matA^{T}$. If $\sigma^{2} > 0$, both marginals 
$p_{\vgamma}$ and $p_{\vgamma^{*}}$ are non-degenerate and hence the R\'{e}nyi divergence  
$\DD_{t}(p_{\vgamma}, p_{\vgamma^*})$ in Theorem~\ref{thm_model_mismatch_ldp} is 
well defined. We now restate Theorem \ref{thm_model_mismatch_ldp} as  
Corollary~\ref{coro_model_mismatch_ldp}, 
which is the final form of the large deviation result for~$\LL(\matY; \vgamma)$ used later for bounding $\PP(\mathcal{E}_{\setS^{*}})$.
\begin{corollary} \label{coro_model_mismatch_ldp}
For any $\vgamma \in \Real^{n}_{+}$, and the true variance parameters $\vgamma^{*}$, let the 
associated marginal densities $p_{\vgamma}$ and~$p_{\vgamma^*}$ be defined as in Theorem 
\ref{thm_model_mismatch_ldp}, and suppose $\sigma^2 > 0$. Then, the log-likelihood $\LL(\matY; \vgamma)$ satisfies the large deviation property
\begin{equation}
\PP \lb \!\LL(\matY;\! \vgamma) \!-\! \LL(\matY;\! \vgamma^{*}) 
\ge \!
- \frac{L \DD_{\frac{1}{2}}(p_{\vgamma}, p_{\vgamma^{*}}\!)}{2} \!\rb  
\le\!
e^{ 
- \frac{L {\DD_{\frac{1}{2}} (p_{\vgamma}, p_{\vgamma^{*}} )}}{4} }.
\label{large_dev_event}
\end{equation}
\end{corollary}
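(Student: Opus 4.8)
The plan is to obtain Corollary \ref{coro_model_mismatch_ldp} as a direct specialization of Theorem \ref{thm_model_mismatch_ldp}, by making a judicious choice of the deviation level $\epsilon$ and then using a one-point lower bound on the Legendre transform. Since $\sigma^{2} > 0$, both marginals $p_{\vgamma}$ and $p_{\vgamma^{*}}$ are non-degenerate zero-mean Gaussians, so the R\'enyi divergence $\DD_{1/2}(p_{\vgamma}, p_{\vgamma^{*}})$, and hence $\psi(t) = (t-1)\DD_{t}(p_{\vgamma}, p_{\vgamma^{*}})$, are well defined; this legitimizes all the manipulations that follow and is precisely why the hypothesis $\sigma^2>0$ is imposed.

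First I would set $\epsilon = L\,\DD_{1/2}(p_{\vgamma}, p_{\vgamma^{*}})/2$ in the bound \eqref{eqn_model_mismatch_ldp} of Theorem \ref{thm_model_mismatch_ldp}. With this choice the event on the left-hand side matches exactly the one appearing in \eqref{large_dev_event}, and the exponent becomes $-L\,\psi^{*}(-\epsilon/L) = -L\,\psi^{*}\lb -\DD_{1/2}/2 \rb$.

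The key step is to lower bound $\psi^{*}\lb -\DD_{1/2}/2 \rb$. By definition the Legendre transform is a supremum, $\psi^{*}(z) = \sup_{t}\,(zt - \psi(t))$, so evaluating the objective at any single admissible $t$ yields a valid lower bound. Choosing the symmetric order $t = 1/2$ (which is admissible since $t>0$) and using $\psi(1/2) = (1/2-1)\DD_{1/2} = -\DD_{1/2}/2$, I obtain
\[
\psi^{*}\lb -\DD_{1/2}/2 \rb \ge \lb -\DD_{1/2}/2 \rb \cdot \tfrac{1}{2} - \psi(1/2) = -\DD_{1/2}/4 + \DD_{1/2}/2 = \DD_{1/2}/4 .
\]

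Finally, since $x \mapsto \exp(-Lx)$ is decreasing, the lower bound on $\psi^{*}$ translates into the claimed upper bound $\exp\lb -L\,\psi^{*}(-\DD_{1/2}/2) \rb \le \exp\lb -L\,\DD_{1/2}/4 \rb$, which is exactly \eqref{large_dev_event}. I expect no genuine obstacle here beyond the bookkeeping, since the substantive work — the large deviation estimate and the identification of its rate with the Legendre transform of $\psi$ — is already carried by Theorem \ref{thm_model_mismatch_ldp}. The only mild subtlety worth flagging is that the supremum defining $\psi^{*}$ is generally attained at some $t \neq 1/2$, so the resulting rate $\DD_{1/2}/4$ may be loose; but as the Corollary only asserts this specific rate, evaluating the supremum at $t = 1/2$ is sufficient for our purpose of later bounding $\PP(\mathcal{E}_{\setS^{*}})$.
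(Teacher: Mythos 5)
Your proposal is correct and follows essentially the same route as the paper: substitute $\epsilon = L\,\DD_{1/2}(p_{\vgamma}, p_{\vgamma^{*}})/2$ into Theorem \ref{thm_model_mismatch_ldp} and lower bound the Legendre transform $\psi^{*}$ by evaluating its defining supremum at the single point $t = 1/2$, using $\psi(1/2) = -\DD_{1/2}/2$ to arrive at the rate $\DD_{1/2}/4$. Your remark that $t=1/2$ is a deliberately suboptimal Chernoff parameter also matches the paper's own commentary following the corollary.
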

\begin{proof}
The large deviation result is obtained by replacing $ \psi^* \lb - \frac{\epsilon}{L} \rb$ in 
Theorem \ref{thm_model_mismatch_ldp} by its lower bound 
$- \frac{t \epsilon}{L} - \psi \lb t \rb$, 
followed by setting $t = 1/2$ and $\epsilon = L \DD_{\frac{1}{2}}(p_{\vgamma} , p_{\vgamma^{*}})/2$.
\end{proof}
Note that, in the above, we have used the suboptimal choice $t=1/2$ for the Chernoff parameter~$t$, since its optimal value is not available in closed form. However, this suboptimal selection of $t$ is inconsequential as it figures only as a multiplicative factor in the final MMV complexity. By using Corollary \ref{coro_model_mismatch_ldp} in \eqref{err_prob_ub_as_double_sum}, we can bound 
$\PP \lb \mathcal{E}_{\setS^{*}}\rb$ as  
\begin{equation} \label{perr_err_exp_form1}
\PP(\mathcal{E}_{\setS^{*}}) \le 
\sum_{\setS \in \setS_{n} \backslash \setS^{*}} \!\!\!\!
\lv \Theta^{\epsilon}(\setS) \vert_{\setG} \rv \exp{\lb - \frac{L \DD_{\setS}^{*}}{4} \rb} \;+\; \PP\lb \setG^{c}\rb,
\end{equation}
with $\epsilon = \frac{L \DD^{*}_{\setS}}{2}$ and $\DD^{*}_{\setS}$ defined as 
\begin{equation}
\DD^{*}_{\setS} \triangleq \inf_{\vgamma \in \Theta(\setS)} \DD_{\frac{1}{2}} \lb p_{\vgamma}, p_{\vgamma^*} \rb.
\label{eqn_minD}
\end{equation}
Suppose the support $\setS$ differs from $\setS^{*}$ in exactly $k_{d}^{\setS, \setS^*}$ locations, then 
\begin{equation} 
\PP(\mathcal{E}_{\setS^{*}})  \le  \sum_{\setS \in \setS_{n} \backslash \setS^{*}} 
\exp{\lb  - L k_{d}^{\setS, \setS^{*}} \lb \frac{\eta}{4} - \frac{\kappa_{\text{cov}}}{L} \rb\rb} + \PP\lb \setG^{c}\rb,
\label{perr_form2}
\end{equation}
where
\begin{eqnarray} 
\label{eta_char}
\eta & \triangleq &  \min_{\setS \in \setS_{n} \backslash \setS^*}  \frac{\DD_{\setS}^{*}}{k_{d}^{\setS, \setS^*}} , 
\\
\kappa_{\text{cov}} &\triangleq& \max_{\setS \in \setS_{n} \backslash \setS^*}  \frac{ \log \big| \Theta^{\epsilon}(\setS) \vert_{\setG} \big|}{k_{d}^{\setS, \setS^*}} , \; \epsilon = \frac{L \DD_{\setS}^*}{2}.
\label{epsilon_cover_cardinality_bound}
\end{eqnarray}

Using the above, we can state the following theorem.
\begin{theorem} \label{thm_abstract_suff_condition}
Suppose $\setS^{*}$ is the true row support of the unknown $\matX$ satisfying assumption \textbf{A1} and $|\setS^{*}| \le K$. Then, for any $\delta \in (0,1)$, $\PP(\mathcal{E}_{\setS^{*}}) 
\le 2 \delta$, if
\begin{equation} \label{abstract_mmv_bound}
L \ge 
\max  \lc \frac{8}{\eta} \log {\lb \frac{6enK}{\delta} \rb }, 
\frac{8 \kappa_{\text{cov}}}{\eta} , 
C \log{\frac{2}{\delta}} \rc.
\nonumber
\end{equation}
Here, $\eta$ and $\kappa_{\text{cov}}$ are as defined in \eqref{eta_char}
and \eqref{epsilon_cover_cardinality_bound}, respectively, and $C>0$ is a universal numerical constant.
\end{theorem}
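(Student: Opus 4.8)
The plan is to convert the explicit upper bound \eqref{perr_form2} into the stated sample-complexity condition by controlling its two additive pieces—the sum over incorrect supports and the tail probability $\PP(\setG^{c})$—separately, and matching each to one of the three terms inside the maximum. I would first observe that the condition $L \ge \frac{8}{\eta}\log\kappa_{\text{cov}}$ guarantees $\frac{\eta}{4} - \frac{\log\kappa_{\text{cov}}}{L} \ge \frac{\eta}{8}$, so that every summand in \eqref{perr_form2} is bounded above by $\exp(-\tfrac{L\eta}{8}\, k_d^{\setS,\setS^{*}})$. This is precisely the role of the second term in the maximum: it keeps the per-support error exponent bounded away from zero despite the $\log\kappa_{\text{cov}}$ penalty incurred by the $\epsilon$-net discretization.

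Next I would carry out the combinatorial bookkeeping of the sum $\sum_{\setS \in \setS_{K} \setminus \setS^{*}} \exp(-\tfrac{L\eta}{8}\, k_d^{\setS,\setS^{*}})$. Grouping the incorrect supports by their Hamming distance $k_d = |\setS \triangle \setS^{*}|$ from $\setS^{*}$, and noting that $\setS, \setS^{*} \in \setS_{K}$ forces $1 \le k_d \le 2K$, the number of supports at distance exactly $k_d$ is at most $\binom{n}{k_d} \le (en)^{k_d}$, since each such $\setS$ is determined by choosing its $k_d$ flipped coordinates. This reduces the sum to a geometric-type series $\sum_{k_d=1}^{2K}(en)^{k_d}\, e^{-L\eta k_d/8}$, whose total is at most $\delta$ once $en\, e^{-L\eta/8}$ is made sufficiently small; solving this threshold yields the first term $\frac{8}{\eta}\log\!\big(3enK\tfrac{1+\delta}{\delta}\big)$ in the maximum, the factors $3$, $K$, and $(1+\delta)/\delta$ arising as slack from bounding the at-most-$2K$ distance classes and summing the geometric tail.

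For the residual term I would bound $\PP(\setG^{c})$ using the sample-covariance concentration in Proposition \ref{prop_conc_spectralnorm_covmat}. Since under assumption \textbf{A1} the columns of $\matY$ are i.i.d.\ $\mathcal{N}(0,\matSigma_{\vgamma^{*}})$ with $\mathbb{E}[\vecy_{1}\vecy_{1}^{T}] = \matSigma_{\vgamma^{*}}$, the definition \eqref{defn_bdd_spectral_norm_mmv_set} of $\setG$ together with the triangle inequality gives $\setG^{c} \subseteq \{\lvvv \matR_{\matY} - \matSigma_{\vgamma^{*}}\rvvv_{2} > \lvvv \matSigma_{\vgamma^{*}}\rvvv_{2}\}$, where $\matR_{\matY} = \tfrac{1}{L}\matY\matY^{T}$. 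Applying Proposition \ref{prop_conc_spectralnorm_covmat} with $\epsilon = 1$ then shows $\PP(\setG^{c}) \le \delta$ whenever $L \ge C\log(2/\delta)$, which is exactly the third term of the maximum. Adding the two $\delta$ contributions yields $\PP(\mathcal{E}_{\setS^{*}}) \le 2\delta$, as claimed.

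The main obstacle is not any single deep estimate—the heavy lifting is already carried by Corollary \ref{coro_model_mismatch_ldp} and Proposition \ref{renyi_div_lb_gaussian_case}—but rather the careful bookkeeping needed to collapse the double union over $\big(\setS,\ \vgamma \in \Theta^{\epsilon}(\setS)\vert_{\setG}\big)$ into a clean closed-form threshold. One must simultaneously keep the error exponent positive (via the $\kappa_{\text{cov}}$ term), ensure the geometric series over Hamming distances converges, and verify that the concentration event $\setG$ is entered with high probability; reconciling the resulting constants into the compact expression $3enK(1+\delta)/\delta$ is where most of the routine-but-delicate effort lies.
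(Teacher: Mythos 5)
Your proposal is correct and follows essentially the same route as the paper's proof: use $L \ge \tfrac{8}{\eta}\log\kappa_{\text{cov}}$ to keep the exponent at $\eta/8$, group the wrong supports by the size $k_d$ of the symmetric difference, bound the resulting geometric series by $\delta$ via the first threshold, and control $\PP(\setG^{c})$ by $\delta$ via Proposition~\ref{prop_conc_spectralnorm_covmat}. The only (harmless) deviation is your counting of supports at distance $k_d$ as $\binom{n}{k_d} \le (en)^{k_d}$ via the symmetric-difference set, which is in fact tighter than the paper's $(3enK)^{k_d}$ bound and still compatible with the stated threshold.
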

\begin{proof}
See Appendix~\ref{app:proof_thm_abstract_suff_condition}.
\end{proof}
In Theorem \ref{thm_abstract_suff_condition}, we finally have an abstract bound on the 
sufficient number of MMVs, $L$, which guarantees vanishing support error 
probability in cM-SBL, given that  
the true support is $\setS^*$. However, the MMV bound is meaningful only when 
$\eta$ \eqref{eta_char} is strictly positive. We now 
proceed to deduce the conditions for which 
\begin{enumerate}
\item $\eta > 0$,
\item $\eta$ and $\kappa_{\text{cov}}$ scale favorably with the 
MMV problem dimensions. 
\end{enumerate}

\subsection{Bounds for $\eta$ and $\kappa_{\text{cov}}$}
To understand how small $\eta$ in the MMV bound in  
Theorem~\ref{thm_abstract_suff_condition} can be, we 
first derive a lower bound on~$\DD_{\setS}^{*}$ for any $\setS \subseteq [n]$, in the following proposition. 
\begin{proposition} \label{renyi_div_lb_gaussian_case} 
Let $p_{\vgamma}$ denote the parameterized multivariate Gaussian density with 
zero mean and covariance matrix $\matSigma_{\vgamma} = \sigma^{2} \matI + \matA 
\matGamma \matA^{T}$, 
$\matGamma = \diag(\vgamma)$. 
For any pair $\vgamma, \vgamma^* \in \Real^{n}_{+}$ such that $\setS = 
\text{supp}(\vgamma)$ and $\setS^* = \text{supp}(\vgamma^*)$, the 
$\frac{1}{2}$-R\'{e}nyi divergence between $p_{\vgamma}$ and 
$p_{\vgamma^*}$ satisfies
\begin{equation}
\DD_{\frac{1}{2}} \lb p_{\vgamma}, p_{\vgamma^*} \rb
\ge 
\frac{ \| (\matA \odot \matA) (\vgamma - \vgamma^*) \|_{2}^{2}}
{4 \lb \sigma^2 + \vgammamax \sigma_{\text{max}}^{2}(\matA_{\setS \cup 
\setS^*}) \rb^{2}},
\nonumber 
\end{equation}
where $\matA \odot \matA$ denotes the columnwise Khatri-Rao product of~$\matA$ 
with itself and  $\sigma_{\text{max}}(\cdot)$ denotes maximum singular value of 
the input matrix.
\end{proposition}
\begin{proof}
See Appendix \ref{app:renyi_div_lb_gaussian_case}.
\end{proof}
From Proposition~\ref{renyi_div_lb_gaussian_case}, it can be observed that as long as the null space of $\matA \odot \matA$ is devoid of any vectors of the form 
$\vgamma - \vgamma^*$ (i.e., difference of a nonnegative vector and a 
nonnegative $K$-sparse vector), then $\eta$ as 
defined in \eqref{eta_char} is always strictly positive. This condition for strictly positive $\eta$ can be formalized as the \textit{nonnegative restricted null space property} of $\matA \odot \matA$, defined next. 
\begin{defn} \label{defn_rnsp}
A matrix is said to satisfy the nonnegative restricted null space property (NN-RNSP) of order $k$ if its null space does not contain any vectors that are expressible as the difference between a $k$ (or lesser) sparse nonnegative vector and an arbitrary nonnegative vector. 
\end{defn}

The requirement that $\matA \odot \matA$ satisfies a 
restricted null-space property similar to Definition~\ref{defn_rnsp} 
has been highlighted in~\cite{QiaoPal19MMV} in the context of MMV-based 
support recovery using correlation-aware priors. In \cite{QiaoPal19MMV}, 
it is shown that $\matA \odot \matA$ exhibits the NN-RNSP for appropriate 
sparse-array designs of the sensing matrix $\matA$ provided the $k$-sparse 
nonnegative vector in Definition~\ref{defn_rnsp} satisfies a certain 
\textit{support-separability condition}. In the theorem below, we present an interesting robust null-space property of $\matA \odot \matA$ which is satisfied 
under the mild condition that the columns of $\matA$ are approximately normalized. This property will be 
crucial in establishing $\matA \odot \matA$'s NN-RNSP compliance towards ensuring the positivity of~$\eta$, and also dispenses with the restrictive support-separability 
condition required in \cite{QiaoPal19MMV}.

\begin{theorem}[\textit{\textbf{Strong Robust Null Space Property of Self Khatri-Rao Products}}] \label{thm_strong_rnsp_kr}
Let $\matA$ be an $m \times n$ sized real matrix with columns $\veca_{i}$ satisfying 
$\lvv \veca_{i} \rvv_{2}^{2} \in \ls 1 - \alpha, 1 + \alpha \rs$ for some $\alpha \in (0,1)$ for $i \in [n]$. 
Then, the self Khatri-Rao product $\matA \odot \matA$ satisfies the following 
robust null space property:
\begin{equation}
\lvv (\matA \odot \matA) \vecv \rvv_{2}^{2} \ge 
\frac{(1 - \alpha)^2}{2m} \lb \lvv \vecv_{+} \rvv_{1}^{2} +  \lvv \vecv_{-} \rvv_{1}^{2} \rb
\nonumber 
\end{equation} 
for all $\vecv \in \Real^{n}$ such that $\lvv \vecv_{+} \rvv_{1} \ge 4 \lb \frac{1 + \alpha}{1-\alpha} \rb^{2} \lvv \vecv_{-} \rvv_{1}$.
Here, $\vecv_{+}$ and $\vecv_{-}$ are nonnegative vectors containing the absolute values of the positive and negative elements of $\vecv$, respectively, such that $\vecv = \vecv_{+} - \vecv_{-}$.
\end{theorem}
\begin{proof}
	See Appendix~\ref{app:proof_thm_strong_rnsp_kr}.
\end{proof}
An interesting consequence of Theorem~\ref{thm_strong_rnsp_kr} is that as long 
as~$\matA$ is approximately column normalized, the null space of $\matA \odot 
\matA$ does not contain any vectors of the form $\vgamma - \vgamma^*$ when 
$\vgamma$ and $\vgamma^*$ have widely different nonzero supports, 
particularly if $\| \vgamma \|_{0} \ge 4 \lb \frac{\vgammamax}{\vgammamin} 
\rb \lb \frac{1 + 
\alpha}{1- \alpha} \rb^{2} \| 
\vgamma^* \|_{0}$.
However, when the support sizes of $\vgamma$ and $\vgamma^*$ are comparable, 
it is not as straightforward to ascertain whether $\vgamma - \vgamma^*$ lies in the 
null space of $\matA \odot \matA$ or not. 
In Proposition~\ref{prop_eta_lb}, we state verifiable sufficient conditions that 
subsume the NN-RNSP condition for $\matA \odot \matA$, which in turn 
guarantees that $\eta$ in \eqref{eta_char} is always strictly positive.
\begin{proposition} \label{prop_eta_lb}
Let the sensing matrix $\matA = \ls \veca_{1}, \veca_{2}, \ldots , \veca_{n} \rs$ satisfy the following two properties\footnote{Properties $P1$ and $P2$ ensure that $\text{Null}(\matA \odot \matA)$ does not contain any vectors 
of the form $\vgamma - \vgamma^*$ when $\| \vgamma \|_{0} \ge K + 
K_{\text{threshold}}$ and when $\| \vgamma \|_{0} < K + K_{\text{threshold}}$, respectively.}
\begin{description}
\item[(P1).] $\exists \alpha \in (0,1)$ such that $\| \veca_{i} \|_{2}^{2} \in \ls 1- \alpha, 1+ \alpha\rs, \forall i \in [n]$.
\item[(P2).] For $K_{\text{threshold}} \triangleq 4 K \left(\frac{\vgammamax}{\vgammamin} \lb \frac{1 + \alpha}{1- \alpha} \rb^{2} \right)$, $\exists \beta > 0$ such that 
$\| (\matA \odot \matA) \vecv \|_{2}^{2} \ge \beta \| \vecv \|_{2}^{2}$
for all $k$-sparse vectors $\vecv \in \Real^{n}$ and $1 \le k \le (K + K_{\text{threshold}})$.
\end{description} 
Then, $\eta$ defined in \eqref{eta_char} is lower bounded as
\begin{align}
\eta &\ge 
\frac{\vgammamin^2}{4 \lb \sigma^2 +  \vgammamax \rb^{2}}  
\min \lb  
\frac{\beta}{\max \lb 1,\delta^{2}_{(K + K_{\text{threshold}})} \rb }, 
 \frac{(1 - \alpha)^2}{4m} \!\!\!\! \min_{\substack{\setS \subseteq [n], \\ |\setS \backslash \setS^*| + 
		|\setS^* \backslash \setS| > K_{\text{threshold}} } }
\!\!\!\frac{|\setS \cup \setS^*|}{\max \lb 1,  \delta^{2}_{|\setS \cup \setS^*|} \rb}
\rb,
\label{eta_lb}
\end{align}
where $ \delta_{k} \triangleq \displaystyle \max_{\setS \subseteq [n]: |\setS| 
\le k} \lvvv 
\matA_{\setS}^{T} \matA_{\setS}\rvvv_{2}$ for any $k \in [n]$.
\end{proposition}
\begin{proof}
	See Appendix \ref{app:proof_prop_eta_lb}.
\end{proof}
By substituting $\eta$ in Proposition~\ref{prop_cov_num_enet} with its lower 
bound in Proposition~\ref{prop_eta_lb}, one can upper bound 
$\kappa_{\text{cov}}$ as follows.
\begin{proposition} \label{prop_kcov_ub}
For the same setting as Proposition~\ref{prop_eta_lb}, 
\begin{equation}
\kappa_{\text{cov}} \le 
 \ls (K + 5) \lb 
\log m + \log K + \log \Delta_{\kappa_{\text{cov}}} \rb
+ 2 \log n
\rs^{+},
\nonumber
\end{equation} 
where 
\begin{align}
\Delta_{\kappa_{\text{cov}}} & \triangleq   
\frac{24}{\vgammamin^2} \lb \frac{\vgammamax}{\vgammamin} - 1 \rb \lb 3 +  
\frac{2\vgammamax}{\sigma^2} \rb 
(\sigma^2 + \vgammamax)^2
\max \lb \frac{ \delta^{2}_{K + 
K_{\text{threshold}}}}{\beta} , \frac{9}{2} \frac{(1+ \alpha)^2}{(1 - \alpha)^{2}} \rb 
\nonumber
\end{align} 
and $[\cdot]^{+} =\max(\cdot, 0)$.
\end{proposition}
\begin{proof}
	See Appendix \ref{proof_prop_kcov_ub}.
\end{proof}
The above bounds for $\eta$ and $\kappa_{\text{cov}}$ are valid for any deterministic or randomly constructed sensing matrix~$\matA$. The validity of these bounds is contingent upon showing the existence of a strictly positive $\beta$ that satisfies condition~$P2$ of Proposition~\ref{prop_eta_lb}, which tantamounts to showing 
that any submatrix of $\matA \odot \matA$ obtained by sampling its $K + K_{\text{threshold}}$ columns is nonsingular. 
The quantity $\beta$ is commonly referred to as the \textit{restricted minimum singular value} of the self Khatri-Rao product $\matA \odot \matA$. 
To illustrate how the derived bounds for $\eta$ and 
$\kappa_{\text{cov}}$ scale with the MMV problem size, we consider two example scenarios in the following corollary wherein~$\matA$ is randomly constructed and the associated $\eta$ is known to be strictly positive. 
\begin{corollary} \label{corr_gaussian_meas_map_eta_char}
Let $\matA$ be an $m \times n$ sized matrix with i.i.d. $\mathcal{N} \lb 0, \frac{1}{m} \rb$ entries. Then, we have 
\begin{enumerate}[label=(\roman*)]
\item $\eta = \Omega \lb \frac{K}{n} \rb$ for $m = \Theta(K \log n)$,  
\item $\eta = \Omega \lb \frac{\sqrt{K}}{n} \rb$ for $m = \Theta(\sqrt{K} \log n)$,  
\end{enumerate}
and $\kappa_{\text{cov}} = O \lb K \log K + K \log \log n + \log n \rb$ in both cases, with probability exceeding $1 - \Theta(n^{-2})$. 
\end{corollary}
\begin{proof}
	See Appendix \ref{app:proof_corr_gaussian_meas_map_eta_char}.
\end{proof}

\ifdefined \SKIP
That is,
\begin{equation} \label{worst_err_exponent}
\DD^{*}_{1/2} \triangleq 
\min_{\vgamma \in \Theta^{\epsilon}(\setS) \vert_{\setG}, \; \setS \in \setS_{K} \backslash \setS^{*}} 
\DD_{1/2}\lb p_{\vgamma}, p_{\vgamma^{*}} \rb.
\end{equation}
We introduce $\eta$ as a strictly positive lower bound of the worst case exponent $\DD^{*}_{1/2}$, 
which is later shown to depend only on $\matA, \sigma^{2}, \vgammamin$ and $\vgammamax$. 
\textcolor{black}{Using $\eta$,} the upper bound for $\PP(\mathcal{E}_{\setS^{*}})$ in~\eqref{perr_err_exp_form1} simplifies to 
\begin{eqnarray} \label{perr_abstract_bound}
\PP(\mathcal{E}_{\setS^{*}})  \!\!\!\! &\le& \!\!\!\!
\lv \setS_{K} \backslash \setS^{*} \rv
\lb \sup_{\setS \in \setS_{K}} \lv \Theta^{\epsilon}(\setS) \vert_{\setG} \rv \rb
\exp{\lb - \eta L / 4 \rb} + \PP \lb \setG^{c}\rb
\nonumber \\
&& \hspace{-1.2cm} \le 
\lv \setS_{K} \backslash \setS^{*} \rv
\lb \sup_{\setS \in \setS_{K}} \lv \Theta^{\frac{\eta L}{2}	}(\setS) \vert_{\setG} \rv \rb
\exp{\lb - \eta L / 4 \rb} + \PP \lb \setG^{c}\rb
\nonumber \\
&& \hspace{-1.2cm} \le 
\lv \setS_{K} \rv \;
\lb \kappa_{\text{cov}} \rb \;
\exp{\lb - \eta L / 4 \rb} + \PP \lb \setG^{c}\rb
\nonumber \\
&& \hspace{-1.2cm} = 
\exp{\lb -L \lb 
\frac{\eta}{4}
- \frac{\log{\lv \setS_{K} \rv}}{L}
- \frac{\log{\kappa_{\text{cov}}  }}{L}
\rb\rb} + \PP \lb \setG^{c}\rb.  
\label{eq:Pe_sstar_ub_eta}
\end{eqnarray}
In the above, the second inequality is a consequence of the fact that 
the size of an $\epsilon$-net of any set decreases monotonically with $\epsilon$, and that $\DD_{1/2}^{*} \ge \eta $. In the third inequality, 
we introduce a new constant $\kappa_{\text{cov}}$ defined as 
\begin{equation} \label{epsilon_cover_cardinality_bound}
\kappa_{\text{cov}} \triangleq \max_{\setS \in \setS_{K}} \big| \Theta^{\epsilon}(\setS) \vert_{\setG} \big| 
\end{equation}
evaluated at $\epsilon = \eta L/2$. 
The exponential bound in~\eqref{perr_abstract_bound} reveals that if $\eta$ is strictly positive, 
then for a sufficiently large number of MMVs, specifically 
$L > \frac{4}{\eta}  \lb \log{\lv \setS_{K} \rv} + \log{\kappa_{\text{cov}}} \rb$, \textcolor{black}{the first bounding term for} 
$\PP(\mathcal{E}_{\setS^{*}})$ decays exponentially with the number of MMVs, 
and therefore can be driven arbitrarily close to zero. 
\textcolor{black}{Further, by Proposition \ref{prop_conc_spectralnorm_covmat}, $\PP(\setG^{c}) \le \delta/2$ for $L \ge C \log{(4/\delta)}$, where 
$C > 0$ is a numerical constant.}
In fact, for any $\delta > 0$, $\PP(\mathcal{E}_{\setS^{*}}) \le \delta$ is guaranteed if 
\begin{equation} \label{abstract_mmv_bound_probabilistic}
 L \ge \max \lb \frac{4}{\eta}  \lb \log{\lv \setS_{K} \rv} + \log{\kappa_{\text{cov}}} + \log{\frac{2}{\delta}} \rb , 
 C \log{\frac{4}{\delta}}\rb,  
\end{equation}
provided $\eta$ is strictly positive.
Thus, in \eqref{abstract_mmv_bound_probabilistic}, we have an abstract bound on the number of MMVs which guarantees 
an arbitrarily small probability of erroneous support recovery by cM-SBL($K$), given that the true support is $\setS^{*}$.

The condition $\eta >0$ entails the measurement matrix $\matA$ satisfying certain isometry properties, which is elaborated upon in the following section. 
\fi


\ifdefined \SKIP
Next, in Proposition \ref{prop_supp_combi_cnt}, we show that 
$|\setS_{K}|$, the second term in the error exponent in 
\eqref{perr_abstract_bound}, depends only on $n$ and $K$, 
and it grows at most polynomially with $n$.
\begin{proposition} \label{prop_supp_combi_cnt}
Let $\setS_{K} = \lc \setS \subseteq [n], |\setS| \le K \rc$, for 
$K, n \in \mathbb{N}, K \le n$. Then, $|\setS_{K}| \le \displaystyle \frac{3}{2}(en)^{K}$.
\end{proposition}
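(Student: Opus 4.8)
The plan is to recognize $\setS_K$ as the family of all subsets of $[n]$ of size at most $K$, so that its cardinality is precisely the partial sum of binomial coefficients
\[
|\setS_K| = \sum_{k=0}^{K} \binom{n}{k}.
\]
The whole argument then reduces to bounding this partial sum by $\frac{3}{2}(en)^K$ using only elementary estimates on $\binom{n}{k}$.

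First I would apply the standard inequality $\binom{n}{k} \le n^k/k!$, valid for every $0 \le k \le n$, to each term, giving $|\setS_K| \le \sum_{k=0}^{K} n^k/k!$. The key observation is that, since $n \ge 1$ and $k \le K$, we have $n^k \le n^K$, so the factor $n^K$ can be pulled out of the sum:
\[
|\setS_K| \le n^K \sum_{k=0}^{K} \frac{1}{k!} \le n^K \sum_{k=0}^{\infty}\frac{1}{k!} = e\, n^K.
\]

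It then remains to absorb the leading constant $e$ into the target form. Since $K \ge 1$ we have $e^K \ge e$, hence $e\, n^K \le e^K n^K = (en)^K \le \frac{3}{2}(en)^K$, which is exactly the claim; in the degenerate case $K=0$ the left side equals $1$ while the right side equals $\frac{3}{2}$, so the bound holds trivially.

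This is a routine counting estimate rather than a deep result, so I do not anticipate a genuine obstacle; the only care required is in the order of the two reductions---pulling out $n^K$ before summing the reciprocal factorials, and then invoking $K \ge 1$ to convert the residual factor $e$ into the $e^K$ supplied by $(en)^K$. An equally short alternative would replace the $n^k/k!$ step by the sharper bound $\sum_{k=0}^{K} \binom{n}{k} \le (en/K)^K$ and note $(en/K)^K \le (en)^K$; I would nonetheless prefer the factorial-sum version, since it is self-contained and makes both the origin of the constant $e$ and the slack in the factor $\frac{3}{2}$ fully transparent.
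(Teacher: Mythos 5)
Your proof is correct. You identify $|\setS_{K}| = \sum_{k=0}^{K}\binom{n}{k}$, bound each term by $\binom{n}{k}\le n^{k}/k! \le n^{K}/k!$, and sum the reciprocal factorials to get $|\setS_{K}|\le e\,n^{K}$, which for $K\ge 1$ is at most $(en)^{K}\le \frac{3}{2}(en)^{K}$, with the $K=0$ case checked separately; every step is valid, and you are right that the only delicate points are pulling out $n^{K}$ before summing and invoking $K\ge 1$ to absorb the constant $e$. The paper's own proof is only a one-line sketch stating that Stirling's approximation is used to bound the partial binomial sum, i.e., the standard route through $k!\ge (k/e)^{k}$, giving $\binom{n}{k}\le (en/k)^{k}$ and, after summing, a bound of the form $\frac{3}{2}(en)^{K}$ with the constant $\frac{3}{2}$ absorbing the lower-order terms. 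Your factorial-series argument is genuinely different in its key inequality: it avoids Stirling entirely, is fully self-contained, and in fact establishes the intermediate bound $e\,n^{K}$, which already implies the claim with room to spare. What the Stirling route buys in exchange is the sharper form $\sum_{k=0}^{K}\binom{n}{k}\le (en/K)^{K}$, which beats $e\,n^{K}$ for every $K\ge 2$ (e.g., $e^{2}n^{2}/4 < e\,n^{2}$) and matters in regimes where $K$ is large; but since the proposition only asserts the $\frac{3}{2}(en)^{K}$ bound, your more elementary argument is entirely adequate, and its transparency about where the constants come from is a genuine advantage over the paper's unexpanded sketch.
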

\begin{proof}
The proof is straightforward. It uses Striling's approximation to bound the partial binomial sum. 
\end{proof}
\fi

\section{Exact Support Recovery \textcolor{black}{Using Sparse Bayesian Learning} \\ - Sufficient Conditions } 
\label{sec:sufficient_conditions}
Define $\mathcal{M}^{k}_{m,n}(\alpha, \beta)$ to be the set of all $m \times n$ 
sized real valued \textcolor{black}{sensing} matrices $\matA$ satisfying the following two properties:
\begin{enumerate}
\item $ \| \veca_{i} \|_{2}^{2} \in [1 - \alpha, 1 + \alpha], \forall i \in 
[n]$, 
where $\veca_{i}$ denotes the $i^{\text{th}}$ column of $\matA$
\item $\|(\matA\odot \matA) \vecv \|_{2} \ge \beta \| \vecv \|_{2}^{2}$ for 
all 
$k$ or less sparse vectors $\vecv \in \Real^{n}$.
\end{enumerate}
Equipped with the newly defined set $\mathcal{M}^{k}_{m,n}(\alpha, \beta)$ and the explicit bounds for $\eta$ and $\kappa_{\text{cov}}$ in Propositions~\ref{prop_eta_lb} and~\ref{prop_kcov_ub}, respectively,  
we now state the sufficient conditions for vanishing support error probability in cM-SBL. 

\begin{theorem} \label{thm_suff_conditions_form1}
Suppose $\matX$ has row support $\setS^{*}$, $|\setS^{*}| \le~K$, and 
satisfies assumption $A1$. 
Let $\hat{\vgamma}$ denote a solution of the cM-SBL optimization in \eqref{defn_cmsbl}. Then, for any $\delta \in (0,\frac{1}{2})$, $\text{supp}\lb \hat{\vgamma} \rb = \setS^{*}$ with probability exceeding $1-2\delta$, provided the following two conditions are satisfied.
\begin{description}
 \item[C1.] The sensing matrix $\matA \in \mathcal{M}^{K_{o}}_{m,n}(\alpha, 
 \beta)$, 
 where $K_{o}= K \!\lb \! 1 \!+\! 4 \! \lb \frac{\vgammamax}{\vgammamin} \rb  \!
 \lb \frac{1+ \alpha}{1 - \alpha}\rb^{2} \rb$ for some $\alpha \in (0,1)$ and $\beta > 0$.
 \item[C2.] The number of MMVs, $L$, satisfies
 \begin{align}
 \hspace{-0.25cm}
 L  &\ge  \frac{8}{\eta} \max \lc \log\lb \frac{6enK}{\delta} \rb,  \kappa_{\text{cov}} \rc,
 \nonumber
\label{cmsblK_sufficient_condition}
\end{align} 
where $\eta$ and $\kappa_{\text{cov}}$ depend on $\alpha$ and $\beta$ 
as described by the Propositions~\ref{prop_eta_lb} and \ref{prop_kcov_ub}, respectively.
\end{description}
\end{theorem}
\begin{proof}
Under condition \textbf{C1}, $\eta$ as defined in \eqref{eta_char} is rendered strictly positive due 
to Proposition~\ref{prop_eta_lb}. Further, condition \textbf{C2} ensures that the abstract 
MMV bound in Theorem~\ref{thm_abstract_suff_condition} is satisfied. Therefore, it follows directly from Theorem \ref{thm_abstract_suff_condition} that $\PP(\mathcal{E}_{\setS^*}) \le 2 \delta$.
\end{proof}
The following corollary of Theorem~\ref{thm_suff_conditions_form1} states 
an additional condition besides \textbf{C1} and \textbf{C2} that 
guarantees support consistency of any solution of the M-SBL optimization.
\begin{corollary}[Exact support recovery in M-SBL] 
\label{corr_transfer_results_to_msbl}
For the same setting as Theorem~\ref{thm_suff_conditions_form1}, 
let $\hat{\vgamma}$ denote any global solution of the M-SBL optimization in 
\eqref{ml_estimate_of_vgamma}. 
If $\hat{\vgamma} \in \Theta_{n}$, and the conditions \textbf{C1} and \textbf{C2} 
hold for any $\delta \in (0,\frac{1}{2})$, then 
$\text{supp}(\hat{\vgamma}) = \setS^{*}$ with probability exceeding $1 - 2\delta$. 
\end{corollary}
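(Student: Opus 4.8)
The plan is to show that, under the stated hypothesis $\hat{\vgamma} \in \Theta_{K}$, the unconstrained M-SBL maximizer in \eqref{ml_estimate_of_vgamma} coincides with a maximizer of the constrained problem cM-SBL($K$) in \eqref{defn_cmsbl}, so that the support-recovery guarantee of Theorem~\ref{thm_suff_conditions_form1} transfers verbatim. The essential observation is that the feasible set of cM-SBL($K$), namely $\Theta_{K}$, is contained in $\Real^{n}_{+}$, the feasible set of the unconstrained M-SBL problem. Hence the two problems differ only in the domain of optimization, and the constrained maximum of $\LL(\matY; \cdot)$ can never exceed the unconstrained maximum.

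First I would establish the optimality transfer. Since $\hat{\vgamma}$ globally maximizes $\LL(\matY; \cdot)$ over $\Real^{n}_{+} \supseteq \Theta_{K}$, we have $\LL(\matY; \hat{\vgamma}) \ge \max_{\vgamma \in \Theta_{K}} \LL(\matY; \vgamma)$. Under the hypothesis $\hat{\vgamma} \in \Theta_{K}$, the reverse inequality also holds trivially, because $\hat{\vgamma}$ is then a feasible point of cM-SBL($K$). Combining the two, $\hat{\vgamma}$ attains the constrained maximum, i.e.\ it is a valid solution of cM-SBL($K$). Thus, the additional premise $\hat{\vgamma} \in \Theta_{K}$ is precisely what forces the two optimization problems to share an optimizer, and no probabilistic reasoning is needed for this step.

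Next I would invoke the analysis underlying Theorem~\ref{thm_suff_conditions_form1}. Recall from \eqref{error_event_as_union} that the bad-MMV event $\mathcal{E}_{\setS^{*}}$ is the union, over all wrong supports $\setS \neq \setS^{*}$, of the events in which the best achievable likelihood on $\Theta(\setS)$ matches or exceeds the best achievable likelihood on $\Theta(\setS^{*})$. Consequently, on the complement $\mathcal{E}_{\setS^{*}}^{c}$ every global maximizer of $\LL(\matY; \cdot)$ over $\Theta_{K}$ must have support exactly $\setS^{*}$. By Theorem~\ref{thm_abstract_suff_condition}, conditions \textbf{C1} and \textbf{C2} guarantee $\PP(\mathcal{E}_{\setS^{*}}) \le 2\delta$, so $\PP(\mathcal{E}_{\setS^{*}}^{c}) \ge 1 - 2\delta$. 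Since the previous step identifies $\hat{\vgamma}$ as one such constrained maximizer, $\text{supp}(\hat{\vgamma}) = \setS^{*}$ holds on $\mathcal{E}_{\setS^{*}}^{c}$, and therefore with probability exceeding $1 - 2\delta$.

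The step requiring the most care is the last one: I must ensure that the high-probability guarantee controls the support of the particular maximizer $\hat{\vgamma}$, rather than that of some hand-picked constrained optimizer, and that it remains valid even when the constrained optimizer is not unique. This is handled precisely because $\mathcal{E}_{\setS^{*}}$ is built as a union over \emph{all} competing supports in \eqref{error_event_as_union}, so its complement simultaneously rules out every wrong-support maximizer. The only genuine assumption is $\hat{\vgamma} \in \Theta_{K}$, which cannot be certified a priori for the unconstrained problem; the corollary therefore states it as a hypothesis, and under it the conclusion follows from Theorem~\ref{thm_suff_conditions_form1} with no further work.
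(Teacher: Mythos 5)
Your proposal is correct and follows essentially the same route as the paper: observe that since $\hat{\vgamma} \in \Theta_{K}$ and $\hat{\vgamma}$ maximizes the likelihood over the larger set $\Real^{n}_{+}$, it is also a solution of the constrained problem cM-SBL($K$), and then invoke Theorem~\ref{thm_suff_conditions_form1}. Your extra remark that the event $\mathcal{E}_{\setS^{*}}$ is built as a union over all competing supports, so that the guarantee covers every constrained maximizer and not just a particular one, is a valid and welcome clarification of a point the paper leaves implicit, but it does not change the argument.
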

\begin{proof}	
Since $\hat{\vgamma}$ belongs to $\Theta_{n}$ and maximizes 
the M-SBL objective $\log{p(\matY; \vgamma)}$, it follows that $\hat{\vgamma}$ 
is also a solution to the constrained cM-SBL optimization in 
\eqref{defn_cmsbl}. 
Therefore, the statement of Corollary \ref{corr_transfer_results_to_msbl} follows from 
Theorem~\ref{thm_suff_conditions_form1}.
\end{proof}
By Corollary~\ref{corr_transfer_results_to_msbl}, retrospective to the nonzero coefficients of the M-SBL solution 
$\hat{\vgamma}$ 
lying inside $[\vgammamin, \vgammamax]$, the support error probability vanishes 
under conditions \textbf{C1} and \textbf{C2}. \textcolor{black}{It is to be noted that any probablistic guarantees of support consistency of the M-SBL solution, $\hat{\vgamma}$, for the case where $\hat{\vgamma} \notin \Theta_{n}$ are yet to be established. 
We conjecture that any global solution of the M-SBL optimization belongs to the set $\Theta_{n}$ with overwhelming probability, where $\vgammamin$ and $\vgammamax$ are constants independent of the MMV problem dimensions. However, showing such a result remains an open problem for future research.   
}

Compared to the existing M-SBL optimization-based support recovery guarantees in \cite{NehoraiGTang10} and \cite{Koochakzadeh18CorrAwareMMV}, 
which assume that $\hat{\vgamma}$ and $\vgamma^*$ are $K$-sparse binary valued vectors, the sparsistency guarantees derived here are applicable with wider 
scope. Our guarantees are valid for any 
real-valued $K$-sparse $\vgamma^*$, and 
for any M-SBL solution $\hat{\vgamma}$ provided its nonzero coefficients 
are bounded between $\vgammamin$ 
and $\vgammamax$. 

\ifdefined \SKIPTEMP
In the above, we have analyzed the support error probability for the case where the 
true row support of $\matX$ is fixed to $\setS^*$. Noting that $\mathcal{R}(\matX)$ is equally likely to assume any one of the $\binom{n}{1} + \binom{n}{2} + \dots + \binom{n}{K}$ possible supports in the collection $\setS_{K}$, the average support error probability can be evaluated as
\begin{equation} \label{defn_avg_prob_err}
 P_{\text{avg}}^{\text{err}} = 
 \sum_{\setS^* \in \setS_{K}} \PP(\mathcal{R}(\matX)  = \setS^*)  
 \PP \lb  \mathcal{E}_{\setS^*} \rb, 
\end{equation}
where $\mathcal{E}_{\setS^*}$ is as defined in \eqref{bad_mmv_set}. 
For example, if all supports in $\setS_{K}$ are equiprobable, 
then $P_{\text{avg}}^{\text{err}} \le 2\delta$ 
under \textbf{C1} and \textbf{C2}.
\fi

\section{Discussion}\label{sec:discussion}
We now consider a few special cases of the MMV problem and reinterpret the sufficient conditions specified in Theorem~\ref{thm_suff_conditions_form1} which guarantee vanishing support error probability. 

\ifdefined \SKIPTEMP
\subsection{Column Normalized Measurement Matrices} 
According to \textbf{C2} in Theorem \ref{thm_suff_conditions_form1}, the number of 
sufficient MMVs grows quadratically with 
$a_{*}$, the largest absolute value of entries in $\matA$. 
If $\matA$ has unit norm columns, 
then $a_{*}$ scales inversely with the number of rows in $\matA$ 
as $\mathcal{O}\lb\frac{1}{\sqrt{m}}\rb$. 
For example, if $\matA$ is a random matrix 
with i.i.d. $\mathcal{N}(0, \frac{1}{\sqrt{m}})$ entries, 
then $|a_*| \le \mathcal{O}\lb \sqrt{\frac{\log n}{m}} \rb$ with high probability. 
Likewise, if $\matA$ is a partial DFT sensing matrix, then $|a_*| = \frac{1}{\sqrt{m}}$. 

Often, column normalization is essential for accurate modelling of 
the norm-preserving nature of the application specific measurement modality.
In light of this, if $\matA$ has normalized columns, then one must 
account for $\mathcal{O}(\frac{1}{\sqrt{m}})$ scaling of $a_*$ while 
interpreting the sufficient number of MMVs in $\textbf{C2}$. 
\fi

\ifdefined \SKIPTEMP
\subsection{Continuous versus Binary Hyperparameters}
In \cite{NehoraiGTang10, Koochakzadeh18CorrAwareMMV}, the support recovery is 
formulated as a multiple hypothesis testing problem by assuming that each column of $\matX$ is i.i.d. $\mathcal{N}(0, \diag{(\mathbf{1}_{\setS^{*}})})$, where 
$\setS^{*}$ is the $k$-sparse support of $\matX$. For this choice of signal prior, 
finding the true support via type-II likelihood maximization 
as in~\eqref{ml_estimate_of_vgamma} is no longer a continuous variable optimization 
but rather a combinatorial search over all 
$k$-sparse vertices of the hypercube $\lc 0, 1\rc^{n}$, as described below.
\begin{equation}
 \hat{\vgamma} = \underset{\vgamma \in \lc 0, 1\rc^{n}, \lvv \vgamma \rvv_{0} = K}{\text{arg max }} \log{p \lb \matY; \vgamma \rb} .
\end{equation}
In \cite{NehoraiGTang10}, it has been shown that for a $2K$-RIP compliant $\matA$, 
and binary hyperparameters $\vgamma$, $\mathcal{O}\lb \frac{\log n}{\log \log n}\rb$ MMVs suffice for the support error probability to vanish. 
Binary valued hyperparameters can be accommodated as a special case 
by setting $\vgammamin = \vgammamax =~1$ in our source signal model. 
For $\vgammamin = \vgammamax$, according to Proposition~\ref{prop_cov_num_enet}, 
the $\epsilon$-net $\Theta^{\epsilon}(\setS)$ collapses to a single point for all 
$\setS \in \setS_{K}$, which in turn implies that $\kappa_{\text{cov}} = 1$.
Thus, for the binary hyperparameter case, by setting $\kappa_{\text{cov}} = 1$ in \textbf{C2}, 
it can be concluded that $L \ge \max \lb \Omega (\log{n}), \Omega\lb \frac{m \log n}{K} \rb \rb $ suffices for vanishing support error probability. 
\fi

\subsection{The case of binary hyperparameters and known support size}
In \cite{NehoraiGTang10} and \cite{Koochakzadeh18CorrAwareMMV}, support recovery is treated as a multiple hypothesis testing problem by assuming that each column of~$\matX$ is i.i.d. $\mathcal{N}(0, \diag{(\mathbf{1}_{\setS^{*}})})$, where $\setS^{*}$ is the true $K$-sparse row-support of $\matX$. In 
this case, finding the true support via type-II likelihood maximization as in~\eqref{ml_estimate_of_vgamma} can be reformulated as a combinatorial search over all 
$k$-sparse vertices of the hypercube $\lc 0, 1\rc^{n}$, as described below.
\begin{equation}
\hat{\vgamma} = \underset{\vgamma \in \lc 0, 1\rc^{n}, \| \vgamma \|_{0} \le K}{\text{arg max }} \log{p \lb \matY; \vgamma \rb} .
\end{equation}
The binary valued hyperparameters can be accommodated as a special case 
of our source model by simply setting $\vgammamin = \vgammamax =~1$. 
For $\vgammamin = \vgammamax$, according to Proposition~\ref{prop_cov_num_enet}, 
the $\epsilon$-net $\Theta^{\epsilon}(\setS)$ collapses to a single point for all 
$\setS \in \setS_{n}$, which ultimately amounts to $\kappa_{\text{cov}} = 0$.
In~\cite{NehoraiGTang10, Koochakzadeh18CorrAwareMMV}, the correct support has 
to be identified from $\binom{n}{K}$  candidate support hypothesis. Under this 
restrained setting, the lower bound for $\eta$ in \eqref{eta_lb} simplifies to 
\begin{equation}
\eta \ge 
\frac{\beta}{4(\sigma^2 + 1)^2 \delta_{2K}^2},
\label{eta_for_binary_vgamma}
\end{equation}
where $\beta$ denotes the smallest restricted singular value of \mbox{$\matA \odot \matA$} of order $2K$.
By setting $\kappa_{\text{cov}} = 0$  and using \eqref{eta_for_binary_vgamma} 
in Theorem~\ref{thm_suff_conditions_form1}, we can conclude that  
$\PP(\mathcal{E}_{\setS^*})$ is at most $2\epsilon$ if   
\begin{equation}
L \ge  \frac{32(\sigma^2 + 1)^2 \delta_{2K}^2}{\beta} \log \lb \frac{6enK}{\epsilon} \rb.
\label{eqn_mmv_sufficient_binary_gamma}
\end{equation}
For the special case where the sensing matrix $\matA$ comprises i.i.d. 
$\mathcal{N}\lb 0, \frac{1}{m}\rb$ entries and has $m = \Theta(K \log 
n)$ rows, we have $\delta_{2K} = O(1)$ from 
Corollary~\ref{corr_Gaussian_submatrix_spectral_bound}, and $\beta = \Omega(1)$ 
from \cite[Theorem $3$]{Sak19ErratumKRRIC}. Therefore, it follows from 
\eqref{eqn_mmv_sufficient_binary_gamma} and 
Corollary~\ref{corr_transfer_results_to_msbl} that $L \ge \Omega\lb \log n \rb$ 
suffices to ensure that any $K$ or less sparse binary vector in $\Real^{n}$ that maximizes 
the M-SBL objective also recovers the true row-support of~$\matX$ with overwhelming 
probability. In this regard, our MMV bound matches with the one derived 
in~\cite{NehoraiGTang10}, $L \gg \frac{\log n}{\log \log n}$, up to an additional $\log \log n$ 
factor. 

A more interesting case arises when $m = \Theta(\sqrt{K} \log n)$. From \eqref{eqn_random_A_results_intr0a} and \eqref{eqn_random_A_deltak_ub2} in Appendix~\ref{app:proof_corr_gaussian_meas_map_eta_char}, we observe that $\beta = \Omega(1)$ and $\delta_{2K} = O(\sqrt{K})$. Substituting this in \eqref{eqn_mmv_sufficient_binary_gamma}, we conclude that $L = \Omega(K \log n)$ suffices to ensure that an M-SBL solution from the restrained set of $K$-sparse binary vectors recovers the true support with overwhelming probability.

\subsection{The case of continuous-valued hyperparameters and unknown support size}
The sparsitency of a continuous-valued solution of the M-SBL optimization has been investigated in \cite{alex2019nonbayesian} under the assumption that the nonzero amplitudes of $\vgamma^*$ are known a priori and their guarantees apply to the case where the search for $\vgamma$ is restricted to $K$ or less sparse vectors in 
$\Theta_{K}$. Our analysis dispenses with both of these restrictive assumptions. 

When the sensing matrix $\matA$ consists of i.i.d. $\mathcal{N}\lb 0, \frac{1}{m} \rb$ entries, by substituting $\eta$ and~$\kappa_{\text{cov}}$ in Theorem~\ref{thm_suff_conditions_form1} with their 
respective lower and upper bounds from Corollary~\ref{corr_gaussian_meas_map_eta_char}, we see that the M-SBL solution in~$\Theta_{n}$ recovers the true support with vanishing support error probability provided $m = \Theta(K \log n)$ and $L = \Omega\lb \frac{n}{K} \log n + n \log K + n \log \log n \rb$. On the other hand, when $m = \Theta(\sqrt{K} \log n)$, $L = \Omega \lb \frac{n}{\sqrt{K}} \log n + n \sqrt{K} \log K + n \sqrt{K} \log \log n \rb$ is sufficient.

By restricting the hyperparameter search in M-SBL to $\Theta_{K}$, in \cite{alex2019nonbayesian}, it is shown that the $L = \Omega\lb K \log^{2}\lb\frac{n}{K}\rb \log(nK) \rb$ is sufficient for exact support recovery when $m = \Omega(\sqrt{K} \log n)$. Comparing with our MMV bound, the same value of $m$ is sufficient, but we pay an extra penalty of a roughly $\frac{n}{\sqrt{K}}$ factor in the MMV complexity, in order to circumvent the 
restrictive assumptions on the hyperparameter search domain in \cite{alex2019nonbayesian}.


\subsection{The case when $\matA \odot \matA$ is full column rank} 
The works in \cite{PPal15PushingTheLimits} and 
\cite{BhaskaraMoitra14SmoothedAna} have discussed random as well as 
deterministic constructions of the sensing matrix 
$\matA$ for which the $m^2 \times n$ sized self Khatri-Rao product 
$\matA~\odot~\matA$ is full column rank provided $n \le \frac{m^2 + 
m}{2}$. In this case, when the columns of $\matA$ are approximately normalized, $\matA \odot \matA$ satisfies the 
$K^{\text{th}}$ order NN-RNSP (Definition~\ref{defn_rnsp}) by default for any $1\le K \le n$, which 
in turn implies that $\eta$ in \eqref{eta_char} is always strictly positive. 
Hence, by Theorem~\ref{thm_abstract_suff_condition}, it follows that 
the support error probability decays exponentially with the number of MMVs even for $K = 
\Theta(m^2)$.

\subsection{The case when $\matA \odot \matA$ is rank deficient} 
For $n > \frac{m^2 + m}{2}$, the $m^2 \times n$ sized $\matA \odot \matA$ is 
rank-deficient. In \cite{Koochakzadeh18CorrAwareMMV} and \cite{PPal14SBLParamIdentifiability}, it is argued 
that this leads to parameter identifiability issues in M-SBL based support 
reconstruction. Specifically, the M-SBL objective can attain the same value 
for multiple distinct $\vgamma$, thereby fostering multiple 
global maxima with potentially different supports. 
However, both \cite{Koochakzadeh18CorrAwareMMV} and \cite{PPal14SBLParamIdentifiability} do not take the non-negativity of the hyperparameters in~$\vgamma$ into consideration. In Proposition~\ref{prop_eta_lb}, we have shown that as long as 
$\matA \odot \matA$ satisfies NN-RNSP of order $K$, $\eta$ is always strictly positive and 
consequently $K$-sized supports can be recovered exactly with arbitrarily 
high probability using finitely many MMVs. Interestingly, the NN-RNSP 
condition can hold even when $\matA \odot \matA$ is column 
rank deficient, which allows cM-SBL to recover any $K = O(m^2)$-sized supports 
exactly even when $ n > \frac{m^2 + m}{2}$. 


\ifdefined \SKIPTEMP
This result is valid as long as $\matA \odot \matA$ satisfies RIP of order $2K$. From the previous chapter, we know that for random $\matA$ with i.i.d. subgaussian entries, $\matA \odot \matA$ satisfies the $2K$ order RIP w.h.p. pro vided $m = \mathcal{O}\lb \sqrt{K}\log^{1.5} n \rb$. Hence, as per our result, if 
$m = \mathcal{O} \lb \sqrt{K} \log^{1.5} n \rb$ and $L = \mathcal{O}\lb \frac{K}{\sqrt{\log n}}\rb$, successful support recovery is possible w.h.p. In contrast, the sufficient conditions derived in \cite{NehoraiGTang10} require that $m =\Omega \lb K \log  \frac{n}{K}  \rb$, which is a significantly stronger 
condition on $m$ than ours.  
\fi


\subsection{The case of noiseless measurements}
If $K < \text{spark}(\matA) -1$, it can be shown that as the noise variance $\sigma^{2} \to 0$, 
the error exponent $\DD_{\frac{1}{2}}\lb p_{\vgamma}, p_{\vgamma^*}\rb$ 
in~\eqref{large_dev_event} grows unbounded in the event of a support mismatch, $\text{supp}(\vgamma) \neq \text{supp}(\vgamma^*)$, culiminating in vanishing support error probability even when $L = 1$.
This is formally proved below.

The $1/2$-R\'{e}nyi divergence between two multivariate Gaussian 
densities $p_{\vgamma_{i}}(\vecy) \sim \mathcal{N}(0, \matSigma_{\vgamma_{i}})$, 
$i = 1, 2$ is given by 
\begin{eqnarray}
\DD_{\frac{1}{2}} \lb p_{\vgamma_{1}}, p_{\vgamma_{2}} \rb 
&=&
\log{\lv \frac{\matSigma_{\vgamma_{1}} + \matSigma_{\vgamma_{2}}}{2} \rv} 
- \frac{1}{2} \log{\lv \matSigma_{\vgamma_{1}} \matSigma_{\vgamma_{2}} \rv} 
\nonumber \\
& = & 
\log{\lv \frac{ \matH^{1/2} + \matH^{-1/2} }{2} \rv}. 
\end{eqnarray}
where $\matH \triangleq \matSigma_{\vgamma_{1}}^{1/2} 
\matSigma_{\vgamma_{2}}^{-1} \matSigma_{\vgamma_{1}}^{1/2}$ is referred to as 
the \emph{discrimination matrix}. 
Since $\matH$ is a normal matrix, it is unitarily diagonalizable. Let $\matH = \matU \matLambda \matU^{T}$, where 
$\matLambda = \diag{\lb \lambda_{1}, \ldots, \lambda_{m} \rb}$ with $\lambda_{i}$'s being the strictly positive eigenvalues of $\matH$ for any $\sigma^2 > 0$, 
and $\matU$ being a unitary matrix with the eigenvectors of~$\matH$ as its columns. The $1/2$-R\'{e}nyi divergence can be expressed 
in terms of $\lambda_{i}$ as 
\begin{align}
\DD_{\frac{1}{2}} \lb p_{\vgamma_{1}}, p_{\vgamma_{2}} \rb &=
\sum_{i = 1}^{m} \log{\lb \lb \lambda_{i}^{1/2} + \lambda_{i}^{-1/2} \rb/2 \rb} 
\nonumber \\
& \hspace{0cm} \geq
\log{\lb \frac{1}{2} \lb \lb \lambda_{\text{max}}(\matH) \rb^{1/2} + \lb \lambda_{\text{max}}(\matH) \rb^{-1/2} \rb \rb }.
\label{renyi_div_lb_form2}
\end{align}
The above inequality is obtained by dropping all positive terms in the summation except the one term which corresponds to 
$\lambda_{\text{max}}(\matH)$, the maximum eigenvalue of $\matH$. Proposition~\ref{renyi_div_lb_gaussian_case_form2} below 
relates $\lambda_{\text{max}}(\matH)$ to the noise variance $\sigma^{2}$.
\begin{proposition} \label{renyi_div_lb_gaussian_case_form2} 
If $K < \text{spark($\matA$)}-1$, then for any $\vgamma_{1}, \vgamma_{2} \in 
\Real^{n}_{+}$ such that $\text{supp}(\vgamma_{1}) \neq 
\text{supp}(\vgamma_{2})$ and $|\text{supp}(\vgamma_{2})| \le K$, the maximum 
eigenvalue of $\matH \triangleq \matSigma_{\vgamma_{1}}^{1/2} 
\matSigma_{\vgamma_{2}}^{-1} \matSigma_{\vgamma_{1}}^{1/2}$ satisfies
\begin{equation}
\lambda_{\text{max}} \lb \matH \rb \ge \frac{c_{1}}{\sigma^{2}}
\nonumber
\end{equation}
for some constant $c_{1} > 0$ independent of $\sigma^{2}$.
\end{proposition}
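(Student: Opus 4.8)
The plan is to combine the variational (generalized Rayleigh quotient) characterization of $\lambda_{\max}(\matH)$ with the spark hypothesis so as to exhibit a single test vector that forces this eigenvalue to blow up as $\sigma^{2} \to 0$. First I would observe that $\matH = \matSigma_{\vgamma_{1}}^{1/2}\matSigma_{\vgamma_{2}}^{-1}\matSigma_{\vgamma_{1}}^{1/2}$ is symmetric positive definite and, via the similarity $\matSigma_{\vgamma_{1}}^{-1/2}\matH\,\matSigma_{\vgamma_{1}}^{1/2} = \matSigma_{\vgamma_{2}}^{-1}\matSigma_{\vgamma_{1}}$, its spectrum coincides with the generalized eigenvalues of the matrix pencil $(\matSigma_{\vgamma_{1}}, \matSigma_{\vgamma_{2}})$. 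Since both $\matSigma_{\vgamma_{1}}, \matSigma_{\vgamma_{2}} \in S_{++}^{m}$, the Courant--Fischer theorem yields
\[
\lambda_{\max}(\matH) = \max_{\vecx \neq 0} \frac{\vecx^{T}\matSigma_{\vgamma_{1}}\vecx}{\vecx^{T}\matSigma_{\vgamma_{2}}\vecx}.
\]
Writing $\matSigma_{\vgamma_{i}} = \sigma^{2}\matI + \matA\matGamma_{i}\matA^{T}$, each quadratic form splits as $\vecx^{T}\matSigma_{\vgamma_{i}}\vecx = \sigma^{2}\lvv \vecx \rvv_{2}^{2} + \sum_{j}\vgamma_{i}(j)(\veca_{j}^{T}\vecx)^{2}$. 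The strategy is therefore to construct a unit-norm $\vecx$ that annihilates every column active in $\vgamma_{2}$ while retaining nonzero correlation with a column active in $\vgamma_{1}$ but not in $\vgamma_{2}$, which collapses the denominator to $\sigma^{2}$ and keeps the numerator bounded away from zero.

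Next I would fix an index $i_{0} \in \text{supp}(\vgamma_{1}) \setminus \text{supp}(\vgamma_{2})$, which exists by hypothesis. The key structural step is the spark argument. Since $|\text{supp}(\vgamma_{2})| \le K$ and $i_{0} \notin \text{supp}(\vgamma_{2})$, the column set $\{\veca_{j} : j \in \text{supp}(\vgamma_{2})\} \cup \{\veca_{i_{0}}\}$ consists of at most $K+1$ distinct columns of $\matA$, and the assumption $K < \text{spark}(\matA) - 1$ guarantees that any $K+1$ columns are linearly independent. Hence $\veca_{i_{0}} \notin \text{span}\{\veca_{j} : j \in \text{supp}(\vgamma_{2})\}$, so the orthogonal projection of $\veca_{i_{0}}$ onto the orthogonal complement of that span is nonzero. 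Normalizing it produces a unit vector $\vecx$ with $\veca_{j}^{T}\vecx = 0$ for all $j \in \text{supp}(\vgamma_{2})$ and $\veca_{i_{0}}^{T}\vecx = \beta$, where $\beta \triangleq \text{dist}\!\left(\veca_{i_{0}}, \text{span}\{\veca_{j} : j \in \text{supp}(\vgamma_{2})\}\right) > 0$ depends only on $\matA$ and the two supports, and in particular is independent of $\sigma^{2}$.

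Finally I would substitute this $\vecx$ into the Rayleigh quotient. The denominator reduces to $\vecx^{T}\matSigma_{\vgamma_{2}}\vecx = \sigma^{2}$, since every $\vgamma_{2}$-active column is orthogonal to $\vecx$, while the numerator satisfies $\vecx^{T}\matSigma_{\vgamma_{1}}\vecx \ge \vgamma_{1}(i_{0})(\veca_{i_{0}}^{T}\vecx)^{2} \ge \vgammamin\,\beta^{2}$, because $\vgamma_{1}(i_{0}) \ge \vgammamin$. This gives
\[
\lambda_{\max}(\matH) \ge \frac{\vgammamin\,\beta^{2}}{\sigma^{2}},
\]
so the claim holds with $c_{1} = \vgammamin\,\beta^{2}$; a single constant valid uniformly over the noiseless analysis follows by minimizing $\beta$ over the finitely many admissible support pairs $(\text{supp}(\vgamma_{1}), \text{supp}(\vgamma_{2}))$. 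I expect the only genuine obstacle to be the spark/linear-independence step, i.e.\ certifying that a test vector simultaneously orthogonal to the $\vgamma_{2}$-support yet non-orthogonal to $\veca_{i_{0}}$ truly exists; once that test vector is in hand, the two Rayleigh-quotient estimates are entirely routine.
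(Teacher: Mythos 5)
Your proposal is correct, and it reaches the conclusion by a genuinely different (and arguably cleaner) route than the paper. The paper bounds $\lambda_{\max}(\matH)$ from below by the average eigenvalue, $\frac{1}{m}\text{tr}\lb \matSigma_{\vgamma_{2}}^{-1}\matSigma_{\vgamma_{1}}\rb$, splits $\matSigma_{\vgamma_{2}}^{-1}$ via its eigendecomposition into the parts acting on $\text{Col}(\matA_{\setS_{2}})$ and its orthogonal complement (where all eigenvalues equal $\sigma^{2}$, which is where the $1/\sigma^{2}$ factor comes from), and then argues \emph{by contradiction} that the remaining sum $\sum_{i}\vecu_{2^{\perp},i}^{T}\matA_{\setS_{1}\backslash\setS_{2}}\matGamma_{1,\setS_{1}\backslash\setS_{2}}\matA_{\setS_{1}\backslash\setS_{2}}^{T}\vecu_{2^{\perp},i}$ has at least one strictly positive term, using the same spark fact you invoke. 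Your argument instead uses the generalized Rayleigh quotient $\max_{\vecx}\frac{\vecx^{T}\matSigma_{\vgamma_{1}}\vecx}{\vecx^{T}\matSigma_{\vgamma_{2}}\vecx}$ with a single explicitly constructed test vector, namely the normalized residual of $\veca_{i_{0}}$ after projecting out $\text{span}\{\veca_{j}: j\in\text{supp}(\vgamma_{2})\}$; the computation $\veca_{i_{0}}^{T}\vecx = \lvv P^{\perp}\veca_{i_{0}}\rvv_{2}$ and the collapse of the denominator to exactly $\sigma^{2}$ are both verified correctly. The two proofs share the essential ingredient --- the spark condition certifying that $\veca_{i_{0}}$ survives projection onto $\text{Col}(\matA_{\setS_{2}})^{\perp}$ --- but yours buys an explicit, interpretable constant $c_{1} = \vgammamin\,\text{dist}\lb\veca_{i_{0}},\text{span}\{\veca_{j}\}\rb^{2}$ (uniformizable over the finitely many support pairs, as you note), whereas the paper's contradiction argument only establishes existence of some $c_{1}>0$. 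Both are valid; your version would make the dependence of the noiseless error exponent on the geometry of $\matA$ more transparent.
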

\begin{proof}
	See Appendix \ref{app:proof_renyi_div_lb_gaussian_case_form2}.
	\nonumber
\end{proof}
According to Proposition~\ref{renyi_div_lb_gaussian_case_form2}, 
in the limit $\sigma^{2} \to 0$, $\lambda_{\text{max}}(\matH) \to \infty$, and 
consequently, $\DD_{\frac{1}{2}} \!\lb p_{\vgamma_{1}}\!, p_{\vgamma_{2}} \!\rb\!$ grows unbounded 
(due to~\eqref{renyi_div_lb_form2})
whenever $\text{supp}(\vgamma_{1}) \neq \text{supp}(\vgamma_{2})$ and 
$K < \text{spark}(\matA)-1$. 
Based on this observation, 
we now state Theorem~\ref{thm_suff_conditions_form2} which lays forward the sufficient conditions for exact support recovery in the noiseless case.\\
\begin{theorem} \label{thm_suff_conditions_form2} 
Consider the noiseless MMV problem, with observations $\matY = \matA \matX$ corresponding to an unknown $\matX$ satisfying assumption \textbf{A1}.
Suppose $\setS^{*}$ is the true nonzero row support of $\matX$ with 
$|\setS^{*}| \le K$. Further, let $\hat{\vgamma}$ be a solution of the MSBL optimization in \eqref{ml_estimate_of_vgamma}, then $\text{supp}(\hat{\vgamma}) = \setS^{*}$ 
almost surely, provided that $K < \text{spark}(\matA)-1$. This result holds even in the SMV case, i.e., when $L=1$.
\end{theorem}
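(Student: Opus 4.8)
The plan is to study the M-SBL objective $\LL(\matY;\vgamma) = -L\log|\matSigma_\vgamma| - \tr(\matSigma_\vgamma^{-1}\matY\matY^T)$, with $\matSigma_\vgamma = \sigma^2\matI + \matA\matGamma\matA^T$, in the vanishing-noise regime $\sigma^2\to 0$, since the strictly noiseless objective is degenerate and its supremum is unattained. First I would fix the almost-sure structure of the data: under \textbf{(A1)} the nonzero rows of $\matX$ are Gaussian ensembles, so almost surely every column obeys $\vecy_j\in\text{Col}(\matA_{\setS^*})$ and the restriction of $\vecx_j$ to $\setS^*$ is entrywise nonzero; for $L=1$ the lone vector $\vecy_1$ is then a \emph{generic} point of the $|\setS^*|$-dimensional subspace $\text{Col}(\matA_{\setS^*})$. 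The elementary fact that a generic point of a subspace $V$ lies in a subspace $W$ if and only if $V\subseteq W$ lets me replace the data-dependent range condition $\text{Col}(\matY)\subseteq\text{Col}(\matA_\setS)$ by the purely geometric condition $\text{Col}(\matA_{\setS^*})\subseteq\text{Col}(\matA_\setS)$, almost surely and uniformly for every $L\ge 1$; this equivalence is exactly what will let a \emph{single} MMV suffice.

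Next I would compute, for each candidate support $\setS$ with $|\setS|\le K$, the $\sigma^2\to 0$ behaviour of $\sup_{\text{supp}(\vgamma)=\setS}\LL(\matY;\vgamma)$. Because $|\setS|\le K<\text{spark}(\matA)$, the columns of $\matA_\setS$ are independent and $\matA_\setS\matGamma_\setS\matA_\setS^T$ has rank exactly $|\setS|$; hence $\matSigma_\vgamma$ carries $|\setS|$ bounded eigenvalues and $m-|\setS|\ge 1$ eigenvalues equal to $\sigma^2$, giving $-L\log|\matSigma_\vgamma| = L(m-|\setS|)\log(1/\sigma^2) + O(1)$. The quadratic term $\tr(\matSigma_\vgamma^{-1}\matY\matY^T)$ stays bounded when $\text{Col}(\matY)\subseteq\text{Col}(\matA_\setS)$ and otherwise grows at order $1/\sigma^2$. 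Since $1/\sigma^2$ dominates $\log(1/\sigma^2)$, the per-support supremum diverges to $+\infty$ at rate $L(m-|\setS|)\log(1/\sigma^2)$ when the range condition holds, and to $-\infty$ otherwise. This is precisely the likelihood-level incarnation of the R\'enyi-divergence blow-up recorded in Proposition~\ref{renyi_div_lb_gaussian_case_form2} together with~\eqref{renyi_div_lb_form2}.

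It then remains to rank the supports. The almost-sure range condition reads $\text{Col}(\matA_{\setS^*})\subseteq\text{Col}(\matA_\setS)$ and forces $|\setS|\ge|\setS^*|$, so every $\setS$ with $|\setS|<|\setS^*|$ --- in particular every proper subset of $\setS^*$ --- violates it and yields $\LL\to-\infty$. When $|\setS|=|\setS^*|$ the containment is an equality of equidimensional subspaces; invoking $K<\text{spark}(\matA)-1$, any $|\setS^*|+1\le K+1\le\text{spark}(\matA)-1$ columns of $\matA$ are independent, so no column $\veca_j$ with $j\in\setS^*\setminus\setS$ can lie in $\text{Col}(\matA_\setS)$, and the only admissible size-$|\setS^*|$ support is $\setS^*$ itself. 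Supersets $\setS\supsetneq\setS^*$ do satisfy the range condition but grow only at the strictly smaller rate $L(m-|\setS|)\log(1/\sigma^2)$. Hence $\setS^*$ uniquely attains the maximal rate $L(m-|\setS^*|)\log(1/\sigma^2)$, so for all sufficiently small $\sigma^2$ the maximiser $\hat{\vgamma}$ must have $\text{supp}(\hat{\vgamma})=\setS^*$; taking $\sigma^2\to 0$ delivers the claim almost surely, for every $L\ge 1$ and in particular for $L=1$.

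I expect the crux to be this final comparison rather than the per-support asymptotics. Separating $\setS^*$ from its strict supersets, which also meet the range condition, requires the second-order rate-of-growth argument above, together with the observation that a superset's supremum is approached only as its surplus hyperparameters tend to zero, i.e.\ on the boundary face that collapses the effective support back to $\setS^*$, so that the genuine interior maximiser lives on the $\setS^*$-face. The accompanying technical point is to make the $O(1)$ remainder in $-L\log|\matSigma_\vgamma|$ uniform across the finitely many supports, so that a single realization-dependent threshold $\sigma^2_0(\matY)$ works simultaneously. The spark hypothesis $K<\text{spark}(\matA)-1$ is exactly what discards all equal- and smaller-sized competitors, and the $\mathcal{O}(m^2)$-type covering count that constrained the noisy sample complexity does not reappear here, because the relevant union is now over finitely many supports rather than a continuum of hyperparameters --- which is what permits a single MMV.
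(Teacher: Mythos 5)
Your argument is correct in outline, but it reaches Theorem~\ref{thm_suff_conditions_form2} by a genuinely different route than the paper. The paper stays inside its probabilistic machinery: it observes that $\lc \text{supp}(\hat{\vgamma}) \neq \setS^{*}\rc \subseteq \lc \LL(\matY;\hat{\vgamma}) \ge \LL(\matY;\vgamma^{*})\rc$, invokes the Chernoff-type bound of Corollary~\ref{coro_model_mismatch_ldp}, lower bounds the resulting $1/2$-R\'enyi exponent via \eqref{renyi_div_lb_form2} in terms of $\lambda_{\max}(\matH)$, and then uses Proposition~\ref{renyi_div_lb_gaussian_case_form2} ($\lambda_{\max}(\matH) \ge c_1/\sigma^2$ under the spark condition) to drive the error probability to zero as $\sigma^2 \to 0$ for any $L \ge 1$. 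You instead condition on an almost-sure genericity event for the data, expand the likelihood per support as $\sigma^{2}\to 0$, and rank the finitely many candidate supports by the growth rates $L(m-|\setS|)\log(1/\sigma^2)$, with the spark condition serving to show that $\text{Col}(\matA_{\setS^*}) \subseteq \text{Col}(\matA_\setS)$ forces $\setS \supseteq \setS^*$. Your route buys three things: a genuinely almost-sure conclusion rather than a vanishing error probability; a transparent explanation of why one MMV suffices (a single generic point of $\text{Col}(\matA_{\setS^*})$ already certifies the subspace); and a uniform comparison over all supports that sidesteps the delicate step in the paper where a fixed-$\vgamma$ tail bound is applied to the data-dependent maximizer $\hat{\vgamma}$. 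What it costs is the uniformity bookkeeping you flag at the end — bounding the $O(1)$ remainders over each compact $\Theta(\setS)$ and taking the maximum over the finitely many supports so that a single realization-dependent noise threshold works — which is routine but must be carried out; note also that since $\Theta(\setS)$ keeps every active hyperparameter in $[\vgammamin,\vgammamax]$, the boundary-face concern you raise for strict supersets does not actually arise. Both proofs share the same mild informality of interpreting the noiseless problem as the $\sigma^2 \to 0$ limit of the regularized likelihood.
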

\begin{proof} 
Under assumption~\textbf{A1}, there exists a $\vgamma^{*} \in \Theta_{K}$ 
such that every column in $\matX$ is i.i.d. $\mathcal{N}(0, \diag{(\vgamma^{*})})$, and 
$\text{supp}(\vgamma^{*}) =~\setS^{*}$. 
Since $\hat{\vgamma}$ globally maximizes the MSBL objective 
$\LL(\matY; \vgamma)$, it follows that 
$\LL(\matY; \hat{\vgamma}) \ge \LL(\matY; \vgamma^{*})$ 
if $\hat{\vgamma} \neq \vgamma^{*}$, i.e., the following chain of implications holds.
\begin{eqnarray}
\lc \text{supp}(\hat{\vgamma}) \neq \setS^{*}\rc &=& 
\lc \text{supp}(\hat{\vgamma}) \neq \text{supp}(\vgamma^{*}) \rc
\nonumber \\
&\subseteq & \lc \hat{\vgamma} \neq \vgamma^{*} \rc 
\nonumber \\
&\subseteq& \lc \LL \lb \matY; \hat{\vgamma} \rb \ge \LL \lb \matY; \vgamma^{*} \rb \rc.
\nonumber
\end{eqnarray}
By applying Corollary~\ref{coro_model_mismatch_ldp}, this further implies that 
\begin{eqnarray} 
\PP\lb \text{supp}(\hat{\vgamma}) \neq \setS^{*} \rb 
&\le &
\PP \lb \LL \lb \matY; \hat{\vgamma} \rb \ge \LL \lb \matY; \vgamma^{*} \rb  \rb 
\nonumber \\
&\le&
\exp{\lb - \frac{L \DD_{\frac{1}{2}} \lb p_{\hat{\vgamma}}, p_{\vgamma^{*}} \rb}{4} \rb}. 
\nonumber
\end{eqnarray}
By using the lower bound in \eqref{renyi_div_lb_form2} for 
$\DD_{\frac{1}{2}} \lb p_{\hat{\vgamma}}, p_{\vgamma^{*}} \rb  $, we have
\begin{equation}
\PP\lb \text{supp}(\hat{\vgamma}) \neq \setS^{*} \rb \le
\ls \frac{1}{2} \lb \sqrt{\lambda_{\max}(\matH)} + \frac{1}{\sqrt{\lambda_{\max}(\matH)}} \rb \rs^{-\frac{L}{4}},
\label{noiseless_perr_ub}
\end{equation}
where $\matH = \matSigma_{\hat{\vgamma}}^{1/2} \matSigma_{\vgamma^{*}}^{-1} \matSigma_{\hat{\vgamma}}^{1/2} $.
Since $\vgamma^{*}$ is at most $K$-sparse, 
as long as $K < \text{spark}(\matA) -1$, by Proposition~\ref{renyi_div_lb_gaussian_case_form2}, 
$\sigma^{2} \to 0$ results in $\lambda_{\text{max}}\lb \matH \rb \to \infty$ 
which in turn drives the RHS in~\eqref{noiseless_perr_ub} to zero for $L \ge 1$. 
\end{proof}

From Theorem~\ref{thm_suff_conditions_form2}, we conclude that, in the noiseless scenario ($\sigma^2 \to 0 $) and for $\matX$ satisfying assumption \textbf{A1}, MSBL requires only a single measurement vector ($L=1$) to perfectly recover any $K < \text{spark}(\matA) -1$ sized support. 
If the sensing matrix $\matA$ has full spark, i.e., 
$\text{spark}(\matA) = m+1$, MSBL can recover $m-1$ or lesser 
sparse supports exactly from $m$ noiseless measurements of a single sparse 
vector. It is noteworthy that $\matA$ satisfies the full spark condition under very mild assumptions, e.g., the entries of $\matA$ are drawn independently from a continuous probability distribution. This result is an improvement over the sufficient conditions for exact support recovery by MSBL in \cite[Theorem 1]{Wipf_07_msbl}. Unlike in \cite{Wipf_07_msbl}, we do not require the nonzero rows of $\matX$ to be orthogonal. Also, our result improves over the $k \le \frac{m}{2}$ condition shown in~\cite{NehoraiGTang10}. 

\ifdefined \SKIPTEMP
\subsection{Impact of Measurement Noise on Sufficient MMVs}
For $\sigma^{2} > 0$, the error exponent term $\DD_{\setS}^*$ in 
\eqref{eqn_minD} is never unbounded. As a consequence, unlike in the noiseless 
case, a single MMV is no longer sufficient, 
and multiple MMVs are needed to drive the support error probability to zero.

A close inspection of the MMV bound in Theorem~\ref{thm_suff_conditions_form1} 
reveals that the noise variance influences the error probability in a twofold manner: 
(i) through $\eta$, and (ii) through the size of the $\epsilon$-net or $\kappa_{\text{cov}}$. 
As $\sigma^{2}$ increases, $\eta$ decreases polynomially (see~\eqref{eta_lb}) while $\kappa_{\text{cov}}$ increases at most logarithmically (Proposition~\ref{prop_kcov_ub}). The overall effect 
is captured by condition~\textbf{C2} in Theorem~\ref{thm_suff_conditions_form1} that 
if $\sigma^2$ is very high relative to $\vgammamax \delta_K$, the number of MMVs that are sufficient to guarantee the desired probability of error grows quadratically with $\sigma^2$.
As the noise variance approaches zero, the MMV bound in~\textbf{C2} loosens and is not informative. 
\fi

\section{Conclusions}\label{sec:conclusion}

We analyzed the sample complexity of error free recovery of the common 
nonzero support of multiple joint-sparse Gaussian sources from their 
compressive measurements. We established the finite MMV, high-probability 
consistency of the nonzero support inferred from a constrained \mbox{type-II 
ML} estimate of the variance hyperparameters belonging to a correlation-aware 
Gaussian source prior. The nonzero coefficients of the type-II ML estimate are constrained to lie in 
a known interval $[\vgammamin, \vgammamax]$. Our support consistency guarantee also applies to any global solution of the M-SBL optimization when the nonzero coefficients of the solutions satisfy the same interval constraint. 

We also showed that a single noiseless MMV suffices for perfect recovery of 
any $K$-sparse support, provided that $K < \text{spark}(\matA)-1$. In case 
of noisy MMVs, we showed that any $K$-sized support can be recovered exactly with high probability using finitely many MMVs, provided $\matA \odot \matA$ admits a positive minimum restricted singular value of order $\Theta(K)$. We also presented an interesting interpretation of M-SBL's marginalized log-likelihood cost as a Bregman matrix divergence, which highlights that the M-SBL algorithm is, in principle, a covariance 
matching algorithm. 
There still remain the following open questions regarding M-SBL-based support recovery: 
\begin{enumerate}[label=(\roman*)]
 \item What are necessary conditions for exact support recovery in terms of the number of required MMVs?
 \item Is there a criterion under which all stationary points of the M-SBL objective also yield the correct support estimate?
 \item How is the support recovery performance impacted by inter and intra vector 
 correlations in the signals?
\end{enumerate}
Answering these questions would be interesting directions for future work.

\appendix

\subsection{Proof of Proposition~\ref{prop_renyi_div_lb_analytical}} \label{app:proof_thm_renyi_div_lb_analytical}
By using the property: $\log|\matX \matY| = \log|\matX| + \log|\matY|$ for any positive definite matrices $\matX$ and $\matY$, From \eqref{renyi_div_multivar_gauss}, we have
\begin{eqnarray}
\DD_{\frac{1}{2}}(p_{1}, p_{2}) &=& \log \lv \frac{\matSigma_{1} + \matSigma_{2}}{2} \rv 
- \frac{1}{2} \log \lv \matSigma_{1} \rv - \frac{1}{2} \log \lv \matSigma_{2} \rv
\nonumber \\
& \hspace{0cm} =& \log \lv \frac{\matI_{m} + \matSigma_{1}^{-1/2} \matSigma_{2} \matSigma_{1}^{-1/2} }{2} \rv 
+ \frac{1}{2} \log \lv \matSigma_{1} \rv - \frac{1}{2} \log \lv \matSigma_{2} \rv
\nonumber \\
& \hspace{0cm} =& \log \lv \frac{\matI_{m} + \matSigma_{1}^{-1/2} \matSigma_{2} \matSigma_{1}^{-1/2} }{2} \rv 
+ \frac{1}{2} \log \lv \matSigma_{1}^{1/2} \matSigma_{2}^{-1} \matSigma_{1}^{1/2} \rv
\nonumber \\
& \hspace{0cm} =& \log \lv \frac{\matI_{m} + \matH^{-1}}{2} \rv 
+ \frac{1}{2} \log \lv \matH \rv 
= \log \lv \frac{\matH^{1/2} + \matH^{-1/2}}{2} \rv,
\label{eqn_ren_div_lb_proof1}
\end{eqnarray}
where $\matH \triangleq \matSigma_{1}^{1/2} \matSigma_{2}^{-1} \matSigma_{1}^{1/2}$. 
Denote the eigenvalue decomposition of $\matH$ by $\matU \Lambda \matU^{T}$, with the real and positive diagonal entries of $\Lambda = \diag \lb \lambda_{1}, \lambda_{2}, \ldots, \lambda_{m} \rb$ denoting the eigenvalues of $\matH$, and 
the orthonormal columns of $\matU$ denoting the eigenvectors of $\matH$. 
\textcolor{black}{By the non-negativity of the Bregman Log-Det divergence between positive definite matrices \cite{ArindamBanerjee05ClusteringWithBregman}, for any $\matX \in S^{m}_{++}$, it follows that} $\log|\matX| \ge \text{tr}\lb \matI - \matX^{-1} \rb$. Therefore, using \eqref{eqn_ren_div_lb_proof1}, we have
\begin{eqnarray}
\DD_{\frac{1}{2}}\lb p_{1}, p_{2} \rb &\ge& 
\text{tr} \lb \matI_{m} - \lb \frac{\matH^{1/2} + \matH^{-1/2}}{2} \rb^{-1} \rb
\nonumber \\
&=&  
\sum_{i = 1}^{m} \lb 1 - \lb \frac{\lambda_{i}^{1/2} + \lambda_{i}^{-1/2} }{2}\rb^{-1} \rb
\ge
\sum_{i = 1}^{m} \frac{(\lambda_{i} -1)^2}{2(1 + \lambda_{i})^2} 
\nonumber \\
&=& 
\frac{1}{2} \text{tr} \lb (\matH - \matI_{m}) (\matI_{m} + \matH)^{-1} (\matH - \matI_{m}) (\matI_{m} + \matH)^{-1} \rb.
\label{eqn_ren_div_lb_proof2}
\end{eqnarray}

Plugging back $\matH = \matSigma_{1}^{1/2} \matSigma_{2}^{-1} \matSigma_{1}^{1/2}$ in 
\eqref{eqn_ren_div_lb_proof2}, we obtain the desired lower bound for $\DD_{\frac{1}{2}}(p_{1}, p_{2})$ as shown below.
\begin{eqnarray}
D_{\frac{1}{2}}\lb p_{1}, p_{2} \rb &\ge& 
\frac{1}{2}\text{tr} \lb (\matSigma_{1}^{1/2} \matSigma_{2}^{-1} \matSigma_{1}^{1/2} - \matI_{m}) (\matI_{m} + \matH)^{-1}
(\matSigma_{1}^{1/2} \matSigma_{2}^{-1} \matSigma_{1}^{1/2} - \matI_{m}) (\matI_{m} + \matH)^{-1}\rb
\nonumber \\
& =& \frac{1}{2}\text{tr} \lb \matSigma_{1}^{1/2} (\matSigma_{2}^{-1} - \matSigma_{1}^{-1}) 
\matSigma_{1}^{1/2} (\matI_{m} + \matH)^{-1} 
\matSigma_{1}^{1/2} (\matSigma_{2}^{-1} - \matSigma_{1}^{-1}) \matSigma_{1}^{1/2} (\matI_{m} + \matH)^{-1}\rb
\nonumber \\
&=& \frac{1}{2}\text{tr} \lb \matSigma_{1}^{-1} (\matSigma_{1} - \matSigma_{2}) \matSigma_{2}^{-1} \matSigma_{1}^{1/2} (\matI_{m} + \matH)^{-1} 
\matSigma_{1}^{1/2} \matSigma_{1}^{-1} (\matSigma_{1} - \matSigma_{2}) \matSigma_{2}^{-1} \matSigma_{1}^{1/2} (\matI_{m} + \matH)^{-1} \matSigma_{1}^{1/2} \rb
\nonumber \\
&=& \frac{1}{2}\text{tr} \ls \lb \matSigma_{1} - \matSigma_{2}\rb \lb \matSigma_{2}^{-1} \matSigma_{1}^{1/2} (\matI_{m} + \matH)^{-1} \matSigma_{1}^{-1/2}\rb
\lb \matSigma_{1} - \matSigma_{2}\rb \lb \matSigma_{2}^{-1} \matSigma_{1}^{1/2} (\matI_{m} + \matH)^{-1} \matSigma_{1}^{-1/2}\rb \rs
\nonumber \\
&=& \frac{1}{2}\text{tr} \ls \lb \matSigma_{1} - \matSigma_{2}\rb \lb \matSigma_{2}^{-1} (\matI_{m} + \matSigma_{1}\matSigma_{2}^{-1})^{-1} \rb 
\lb \matSigma_{1} - \matSigma_{2}\rb \lb \matSigma_{2}^{-1} (\matI_{m} + \matSigma_{1} \matSigma_{2}^{-1})^{-1} \rb \rs
\nonumber \\
&=& \frac{1}{2}\text{tr} \ls \lb \matSigma_{1} - \matSigma_{2}\rb \lb  
\matSigma_{2} + \matSigma_{1}\rb^{-1} \lb \matSigma_{1} - \matSigma_{2}\rb \lb \matSigma_{2} + \matSigma_{1} \rb^{-1} \rs.
\nonumber
\end{eqnarray}

\subsection{Proof of Corollary~\ref{corr_Gaussian_submatrix_spectral_bound}} 
\label{app:proof_corr_Gaussian_submatrix_spectral_bound}
For a fixed support $\setS, |\setS| \le k$, from Proposition~\ref{prop_gaussian_spectral_bound}, 
\begin{equation}
\PP \lb \lvvv \matA_{\setS} \rvvv_{2} \ge \sqrt{m} + \sqrt{k} + \sqrt{6k \log n} \rb  \le 
2e^{-3 k \log n}.
\nonumber
\end{equation}
By taking the union bound over $\binom{n}{1} + \binom{n}{2} + \ldots + \binom{n}{k} \le 
\lb \frac{3en}{2} \rb^{k}$ submatrices of $\matA$ containing $k$ or fewer columns, we get
\begin{eqnarray}
\PP \lb \bigcup_{\setS \subset [n]: |\setS| \le k} 
\lc \lvvv \matA_{\setS} \rvvv_{2} \ge \sqrt{m} + \sqrt{k} + \sqrt{6 k \log n}  \rc  \rb 
&\le&  \lb \frac{3ne}{2} \rb^{k} 2e^{-3 k \log n}
\nonumber \\
&\le &  2 e^{-3 k \log n  + k \log \lb 3ne/2 \rb } 
\le \frac{2}{n^{k}},
\nonumber
\end{eqnarray}
for $n > 5$.

\subsection{Proof of Proposition \ref{prop_cov_num_enet}} \label{app:proof_prop_cov_num_enet}
The following stepwise procedure shows how to construct a $\delta$-net of 
$\Theta(\setS)$ (with respect to the Euclidean distance metric) which is 
entirely contained in $\Theta(\setS)$.
\begin{enumerate}
	\item Consider an $\delta$-blow up of $\Theta(\setS)$, denoted by 
	\begin{equation}
	\Theta_{\uparrow\delta}(\setS) \triangleq \lc x: \exists x^{\prime} \in 
	\Theta(\setS) \text{ such that } \Vert x - x^{\prime}\Vert_{2} \le \delta \rc.
	\nonumber
	\end{equation}
	\item Let $\Theta^{\delta}_{\uparrow \delta}(\setS)$ be a $\delta$-net 
	of $\Theta_{\uparrow \delta}(\setS)$. Some points in 
	$\Theta^{\delta}_{\uparrow \delta}(\setS)$ may lie outside $\Theta(\setS)$.

	\item Let $\setP$ denote the set containing the projections of all points 
	in $\Theta^{\delta}_{\uparrow \delta}(\setS) \cap \Theta(\setS)^{c}$ 
	onto the set $\Theta(\setS)$. By construction, $\setP \subset \Theta(\setS)$, 
	and $|\setP| \le |\Theta^{\delta}_{\uparrow \delta}(\setS) \cap \Theta(\setS)^{c}|$.

	\item Then, $\Theta^{\delta}(\setS) \triangleq 
	\lb \Theta^{\delta}_{\uparrow \delta}(\setS) \cap \Theta(\setS)\rb
	\cup \setP$ is a valid $\delta$-net of $\Theta(\setS)$ which is entirely 
	contained in $\Theta(\setS)$.	
\end{enumerate}

To prove the validity of the above $\delta$-net construction, we need to show that 
for any $\vgamma \in \Theta(\setS)$, there exists an element $\veca$ in $\Theta^{\delta}(\setS)$ such that 
$\Vert \vgamma - \veca \Vert_{2}\le \delta$. Let $\vgamma$ be an arbitrary element 
in $\Theta(\setS)$. Then, $\vgamma$ also belongs to the larger set $\Theta_{\uparrow \delta}(\setS)$, and consequently, 
there exists $\vgamma^{\prime} \in \Theta_{\uparrow \delta}^{\delta}(\setS)$ such that $\Vert \vgamma - \vgamma^{\prime}\Vert_{2} \le \delta$. Now, there are two cases. 
(i) $\vgamma^{\prime} \in \Theta(\setS)$, and (ii) $\vgamma^{\prime} \notin \Theta(\setS)$. 

In case (i), $\vgamma^{\prime} \in \lb \Theta_{\uparrow \delta}^{\delta}(\setS) 
\cap \Theta(\setS)\rb$, and hence also belongs to $\Theta^{\delta}(\setS)$. Further, 
$\Vert \vgamma - \vgamma^{\prime}\Vert_{2} \le \delta$. Hence $\veca = \vgamma^{\prime}$ 
will work.

In case (ii), $\vgamma^{\prime} \in \Theta_{\uparrow \delta}^{\delta}(\setS) \cap 
\Theta(\setS)^{c}$. 
Let $\vgamma^{\prime \prime}$ be the projection of $\vgamma^{\prime}$ onto
$\Theta(\setS)$, then $\vgamma^{\prime \prime}$ must belong to $\setP$, and hence 
must also belong to $\Theta^{\delta}(\setS)$. 
Note that since $\vgamma^{\prime\prime}$ is the projection of $\vgamma^{\prime}$ 
onto the \emph{convex} set $\Theta(\setS)$, for any $\vgamma \in \Theta(\setS)$, we have 
$\langle \vgamma - \vgamma^{\prime\prime}, \vgamma^{\prime} - \vgamma^{\prime\prime}\rangle 
\le 0$. Further, we have 
\begin{eqnarray}
\delta &\ge& 
\lvv \vgamma - \vgamma^{\prime} \rvv_{2}^{2} = 
\lvv (\vgamma - \vgamma^{\prime\prime}) + 
(\vgamma^{\prime\prime} - \vgamma^{\prime}) \rvv_{2}^{2} 
\nonumber \\
&=& \lvv \vgamma - \vgamma^{\prime\prime} \rvv_{2}^{2} +
\lvv \vgamma^{\prime\prime} - \vgamma^{\prime} \rvv_{2}^{2} +
2 \langle \vgamma - \vgamma^{\prime\prime} , \vgamma^{\prime\prime} - \vgamma^{\prime} \rangle
\nonumber \\
&\ge& \lvv \vgamma - \vgamma^{\prime\prime} \rvv_{2}^{2}.
\label{epsilon_net_validity_chk}
\end{eqnarray}
The last inequality is obtained by dropping the last two nonnegative terms in the RHS. 
From \eqref{epsilon_net_validity_chk}, $\veca = \vgamma^{\prime\prime}$ will work.

Since case (i) and (ii) together are exhaustive, $\Theta^{\delta}(\setS)$ in step-$4$ is a 
valid $\delta$-net of $\Theta(\setS)$ which is entirely inside $\Theta(\setS)$.

\emph{Cardinality of $\Theta^{\delta}(\setS)$:}
The diameter of $\Theta(\setS)$ is $\sqrt{|\setS|}(\vgammamax - \vgammamin)$.
Based on the construction in step-4, the cardinality of $\Theta^{\delta}(\setS)$ 
can be upper bounded as: 
\begin{eqnarray}
|\Theta^{\delta}(\setS)| &\le& |\Theta^{\delta}_{\uparrow \delta}(\setS) \cap \Theta(\setS)| + |\Theta^{\delta}_{\uparrow \delta}(\setS) \cap \Theta(\setS)^{c}| 
\nonumber \\
&=&
|\Theta^{\delta}_{\uparrow \delta}(\setS)|
\le  
\lv \delta\text{-net of $\ell_{2}$ ball of radius $\sqrt{|\setS|}(\vgammamax - \vgammamin)$ in $\Real^{|\setS|}$}
\rv
\nonumber \\
&\le&
\max \lb 1, \lb 3 \sqrt{|\setS|}(\vgammamax - \vgammamin) / \delta \rb^{|\setS|} \rb.
\label{cardinality_epsilon_net}
\end{eqnarray}
The last step is an extension of the volumetric arguments in~\cite{Vershynin09RoleOfSparsity} to show that 
the $\delta$-covering number of a unit ball $\mathcal{B}_{1}(0)$ in $\Real^{k}$ with respect to the standard Euclidean norm 
$\lvv \cdot\rvv_{2}$ satisfies $\mathcal{N}_{\text{cov}}^{\delta} \lb \mathcal{B}_{1}(0), \lvv \cdot\rvv_{2}\rb \le \lb 3 / \delta \rb^{k}$. 
The $\texttt{max}$ operation with unity covers the case when $\delta$ is larger than the diameter of $\Theta(\setS)$ and the case when $\vgammamax = \vgammamin$.

Now consider the modified net $\Theta^{\epsilon / C_{\LL, \setS}}(\setS)$ obtained 
by setting $\delta = \frac{\epsilon}{ C_{\LL, \setS}}$ in steps $1$-$4$, 
where $C_{\LL, \setS}$ is the Lipschitz constant of $\LL(\matY, \vgamma)$ with respect to $\vgamma \in \Theta(\setS)$. We claim that $\Theta^{\epsilon/C_{\LL, \setS}}(\setS)$ is the desired set 
which simultaneously satisfies conditions (i) and (ii) stated in Proposition~\ref{prop_cov_num_enet}.  

To show condition (i), we observe that since $\Theta^{\epsilon/C_{\LL, \setS}}(\setS)$ is an $(\epsilon / C_{\LL, \setS})$-net 
of $\Theta(\setS)$ with respect to $\lvv\cdot\rvv_{2}$, for any $\vgamma \in \Theta(\setS)$, there exists a 
$\vgamma^{\prime} \in \Theta^{\epsilon/C_{\LL, \setS}}(\setS)$ such that 
$\lvv \vgamma - \vgamma^{\prime} \rvv_{2} \le \epsilon / C_{\LL, \setS}$.
Since $\LL(\matY, \vgamma)$ is $C_{\LL, \setS}$-Lipschitz in $\Theta(\setS)$, it follows that 
$\lv L(\matY, \vgamma) - L(\matY, \vgamma^{\prime}) \rv  \le C_{\LL, \setS} \lvv \vgamma - \vgamma^{\prime} \rvv_{2} \le 
\epsilon$.

Condition (ii) follows from \eqref{cardinality_epsilon_net} by setting 
$\delta = \epsilon / C_{\LL, \setS}$.

\subsection{Proof of Theorem \ref{thm_model_mismatch_ldp}} \label{app:thmproof_err_evt_as_ldp}
For continuous probability densities $p_{\vgamma}$ and $p_{\vgamma^{*}}$ defined on the observation 
space $\Real^{m}$, for any $\epsilon > 0$, the tail probability 
of the random variable $\log{\lb p_{\vgamma}(\matY) / p_{\vgamma^{*}}(\matY) \rb}$ has a Chernoff 
upper bound with parameter $t > 0$ as shown below. 
\begin{eqnarray}
	\PP \lb \log{\frac{p_{\vgamma}(\matY)}{p_{\vgamma^*}(\matY)}} \ge - \epsilon \rb &=&
	\PP \lb \sum_{j = 1}^{L} \log{\frac{p_{\vgamma}(\vecy_{j})}{p_{\vgamma^{*}}(\vecy_{j})}} \ge - \epsilon \rb  
	\nonumber \\
	&\le&  
	\EE_{p_{\vgamma^{*}}} \ls \exp{ \lb t \sum_{j = 1}^{L} \log{\frac{p_{\vgamma}(\vecy_{j})}{p_{\vgamma^{*}}(\vecy_{j})} }\rb}\rs  
	\exp{(t \epsilon)}
	\nonumber \\
	&=&  
	\lb \EE_{p_{\vgamma^{*}}} \ls \exp{ \lb t \log{\frac{p_{\vgamma}(\vecy)}{p_{\vgamma^{*}}(\vecy)} }\rb}\rs \rb^{L}  
	\exp{(t \epsilon)}
	\nonumber \\
	&=&  
	\lb \EE_{p_{\vgamma^{*}}} \ls \lb \frac{p_{\vgamma}(\vecy)}{p_{\vgamma^{*}}(\vecy)} \rb^{t} \rs \rb^{L}  
	\exp{(t \epsilon)}
	\nonumber \\
	&=&  
	\lb \int_{\vecy} p_{\vgamma}(\vecy)^{t} p_{\vgamma^{*}}(\vecy)^{1-t} d \vecy \rb^{L}  
	\exp{(t \epsilon)}
	\nonumber \\
	&=&  
	\exp{\lb -L \ls t \lb-\frac{\epsilon}{L}\rb - (t-1) \DD_{t}(p_{\vgamma}, p_{\vgamma^{*}}) \rs \rb}.
	\label{llr_tail_bnd_intr0}
\end{eqnarray}
In the above, the first and third steps follow from the independence of $\vecy_{j}$. The second step is 
the application of the Chernoff bound. The last step is obtained by using the definition of the R\'{e}nyi divergence 
and  rearranging the terms in the exponent.

By introducing the function $\psi(t) = (t-1)\DD_{t}(p_{\vgamma}, p_{\vgamma^{*}})$, the Chernoff bound in \eqref{llr_tail_bnd_intr0} can be restated as 
\begin{equation} \label{llr_tail_bnd_intr1}
	\PP \lb \log{\frac{p_{\vgamma}(\matY)}{p_{\vgamma^{*}}(\matY)}} \ge - \epsilon \rb 
	\le 
	\exp{\lb -L \ls t \lb-\frac{\epsilon}{L}\rb - \psi(t) \rs \rb}.
\end{equation}
For $t = \arg \sup_{t > 0} \lb t \lb-\frac{\epsilon}{L}\rb - \psi(t) \rb$, the upper bound in~\eqref{llr_tail_bnd_intr1} attains its 
tightest value \\ $\exp{\lb-L \psi^{*}\lb- \frac{\epsilon}{L} \rb\rb}$, where $\psi^{*}$ is the Legendre transform of $\psi$.

\subsection{Proof of Theorem~\ref{thm_abstract_suff_condition}}
\label{app:proof_thm_abstract_suff_condition}
Since $L \ge C \log{\frac{2}{\delta}}$, 
by Proposition \ref{prop_conc_spectralnorm_covmat}, $\PP(\setG^{c}) \le \delta$.
Combined with $L \ge \frac{8\kappa_{\text{cov}}}{\eta}$, \eqref{perr_form2} can be 
rewritten as  
\begin{equation}
\PP(\mathcal{E}_{\setS^{*}}) \le 
\sum_{\setS \in \setS_{n} \backslash \setS^{*}} 
\exp{\lb  - \frac{\eta L k_{d}^{\setS, \setS^{*}}}{8} \rb} + \delta.
\label{perr_form3}
\end{equation}
The total number of support sets belonging to $\setS_{n} \backslash \setS^{*}$ which differ from 
the true support $\setS^{*}$ in exactly $k_{d}$ locations is 
$\sum_{j = 0}^{k_{d}}{n - |\setS^{*}|\choose j} 
{j + |\setS^{*}| \choose \min\lb k_{d}, j + |\setS^{*}| \rb} $. Since $|\setS^{*}| \le K$, this summation can be further upper bounded by $\lb 2nK \rb^{k_{d}}$. 
Thus, we can rewrite \eqref{perr_form3} as 
\begin{eqnarray} 
\PP(\mathcal{E}_{\setS^{*}}) &\le& 
\delta \;\; + 
\sum_{k_d =1}^{n-|\setS^{*}|} \!\!
\sum_{\substack{\setS \in \setS_{n} \backslash \setS^{*}, \\ 
|(\setS \backslash \setS^{*}) \cup (\setS^{*} \backslash \setS)|=k_{d} }}
\!\!\!\!\!\! \exp{\lb -\frac{\eta L k_{d}}{8}  \rb} 
\nonumber \\
&\le& \delta + 
\sum_{k_d =1}^{n-|\setS^{*}|} 
\lb 2nK\rb^{k_{d}}  \lb e^{-\frac{\eta L }{8}} \rb^{k_{d}}.
\label{perr_proof_intr1}
\end{eqnarray}
Since $L \ge \frac{8}{\eta} \log{\lb 2n K \lb \frac{1+\delta}{\delta} \rb  \rb}$, $\PP(\mathcal{E}_{\setS^{*}})$ can be 
upper bounded by a geometric series as 
\begin{eqnarray}
\PP(\mathcal{E}_{\setS^{*}}) &\le&  \delta + 
\sum_{k_d =1}^{\infty} \lb \frac{\delta}{1+\delta} \rb^{k_d}
\;=\;\; \delta + \delta \;=\; 2\delta.
\nonumber
\end{eqnarray}

\subsection{Proof of Proposition \ref{renyi_div_lb_gaussian_case}} \label{app:renyi_div_lb_gaussian_case}
Let $\Delta \Gamma = \diag(\Delta \vgamma)$, where $\Delta \vgamma \triangleq \vgamma - \vgamma^{*}$. Also, let $\matSigma_{\vgamma + \vgamma^*} \triangleq \matSigma_{\vgamma} + \matSigma_{\vgamma^*}$. Then, using Proposition~\ref{prop_renyi_div_lb_analytical}, $\DD_{\frac{1}{2}}(p_{\vgamma}, p_{\vgamma^*})$ can be bounded as follows.
\begin{eqnarray}
\DD_{\frac{1}{2}}(p_{\vgamma}, p_{\vgamma^*}) &\ge&
\frac{1}{2} \text{tr}\lb \matSigma_{\vgamma + \vgamma^*}^{-1} (\matA \Delta \Gamma \matA^{T}) \matSigma_{\vgamma + \vgamma^*}^{-1} (\matA \Delta \Gamma \matA^{T}) \rb
\nonumber \\
&\ge&
\frac{\lvv \matA \Delta \Gamma \matA^{T} \rvv_{F}^{2}}{2\lvvv \matSigma_{\vgamma + \vgamma^*} \rvvv_{2}^{2}} = 
\frac{\lvv (\matA \odot \matA) \Delta \vgamma \rvv_{2}^{2}}{2\lvvv  \matSigma_{\vgamma + \vgamma^*} \rvvv_{2}^{2}}.
\label{eqn_traceXYXY_lb}
\end{eqnarray}
In above, the second inequality is obtained by repeatedly applying the trace inequality: 
$\text{tr}\lb \matA^{-1} \matB\rb$ $ \ge$ $ \text{tr}(\matB)/\lvvv \matA \rvvv_{2}$ for any positive definite 
$\matA$ and positive semidefinite~$\matB$. The last step follows from the identity: $\text{vec}(\matA \Delta \Gamma \matA^{T}) = (\matA \odot \matA) \Delta \vgamma$. 


Next, we derive an upper bound for the spectral norm of~$\matSigma_{\vgamma + \vgamma^*}$ as shown below. 
\begin{eqnarray}
\lvvv \matSigma_{\vgamma + \vgamma^*} \rvvv_{2} &=& 
\lvvv 2\sigma^{2}\matI_{m} + \matA (\matGamma + \matGamma^*) \matA^{T} \rvvv_{2} 
\nonumber \\
&\le& 2\sigma^{2} + 2 \vgammamax \lvvv \matA_{\setS \cup \setS^*}^{T} \matA_{\setS \cup \setS^*} \rvvv_{2}.
\label{eqn_rendiv_lb_intr3}
\end{eqnarray}

Finally, using \eqref{eqn_rendiv_lb_intr3} in \eqref{eqn_traceXYXY_lb}, we obtain the desired lower bound for $\DD_{\frac{1}{2}}(p_{\vgamma}, p_{\vgamma^*})$. 

\ifdefined \SKIPTEMP
Let $\Delta \Gamma = \diag(\Delta \vgamma)$, where $\Delta \vgamma \triangleq \vgamma - \vgamma^{*}$. Then, using Proposition~\ref{prop_renyi_div_lb_analytical}, $\DD_{1/2}(p_{\vgamma}, p_{\vgamma^*})$ can be bounded as follows.
\begin{eqnarray}
\DD_{\frac{1}{2}}(p_{\vgamma}, p_{\vgamma^*}) 
& \ge&
\frac{c_{1}}{4} \text{tr}\lb (\matA \Delta \Gamma \matA^{T}) \matSigma_{\vgamma^*}^{-2} (\matA \Delta \Gamma \matA^{T}) \rb
\nonumber \\
&& \hspace{-2cm}  \ge
\frac{c_{1}\lvv \matA \Delta \Gamma \matA^{T} \rvv_{F}^{2}}{4\lvvv \matSigma_{\vgamma^*} \rvvv_{2}^{2}} = 
\frac{c_{1}\lvv (\matA \odot \matA) \Delta \vgamma \rvv_{2}^{2}}{4\lvvv  \matSigma_{\vgamma^*} \rvvv_{2}^{2}}.
\label{eqn_traceXYXY_lb}
\end{eqnarray}
In above, the first step is due to symmetric nature of $\DD_{\alpha}$ for $\alpha = 1/2$.
The second inequality is obtained by applying the trace inequality: 
$\text{tr}\lb \matA^{-1} \matB\rb \ge \text{tr}(\matB)/\lvvv \matA \rvvv_{2}$ for any positive definite 
$\matA$ and positive semidefinite~$\matB$. The last step follows from the identity: $\text{vec}(\matA \Delta \Gamma \matA^{T}) = (\matA \odot \matA) \Delta \vgamma$. 

Further, using the fact that $\lvv \matSigma_{\gamma^*}\rvv_{2} \le  \sigma^2 + \vgammamax \sigma_{\text{max}}^{2}(\matA_{\setS^*})$  in \eqref{eqn_traceXYXY_lb}, we obtain the desired lower bound for $\DD_{1/2}(p_{\vgamma}, p_{\vgamma^*})$. 
\fi

\subsection{Proof of Theorem~\ref{thm_strong_rnsp_kr}} 
\label{app:proof_thm_strong_rnsp_kr}

Consider the unit norm $m^2$ length vector 
\begin{equation}
\vecw \triangleq \frac{\mathbf{1}_{\lc 1, m+2, 2m+3, \ldots, m^2 \rc}}{\sqrt{m}},
\end{equation}
where $\mathbf{1}_{\setS}$ is a binary vector containing ones in the indices specified by the set $\setS$ and zeros everywhere else. 
We call $\vecw$ the \textit{Hadamard sampler}, as it samples the $m$ rows of the Hadamard submatrix contained within $\matA \odot \matA$. Let $\vecb = 
(\matA \odot \matA)^{T} \vecw$, then 
\begin{equation}
\vecb(i) = \frac{ (\veca_{i} \circ \veca_{i})^{T} \mathbf{1}_{[m]}}{\sqrt{m}} = \frac{\lvv \veca_{i} \rvv_{2}^{2}}{\sqrt{m}} \;\; \forall i \in [n],
\end{equation}
where $\veca_{i}$ denotes the $i^{\text{th}}$ column of $\matA$. Since we have assumed that $\lvv \veca_{i}\rvv_{2}^{2} \in \ls 1 - \alpha, 1+ \alpha \rs$,  
\begin{equation}
\frac{1-\alpha}{\sqrt{m}} \mathbf{1}_{[n]} \preceq (\matA \odot \matA)^{T} \vecw \preceq \frac{1+\alpha}{\sqrt{m}} \mathbf{1}_{[n]}. 
\label{kr_robust_nsp_intr1}
\end{equation}

To ease the notation, let $\matX = \matA \odot \matA$. Given projection matrices $\vecw \vecw^{T}$ and $\Pi = \matI_{m^2} - \vecw \vecw^{T}$, one can write 
\begin{eqnarray}
\vecv^{T} \matX^{T} \matX \vecv  &=& \vecv^{T}\matX^{T} \vecw \vecw^{T}\matX \vecv + 
\vecv^{T}\matX^{T} \Pi \matX \vecv
\nonumber \\
&\ge&
\vecv^{T}\matX^{T} \vecw \vecw^{T}\matX \vecv
\nonumber \\
&=& 
\vecv_{+}^{T}\matX^{T} \vecw \vecw^{T}\matX \vecv_{+}
+ \vecv_{-}^{T}\matX^{T} \vecw \vecw^{T}\matX \vecv_{-}
- 2 \vecv_{+}^{T}\matX^{T} \vecw \vecw^{T}\matX \vecv_{-}
\nonumber \\
& \stackrel{(a)}{\ge}& 
\frac{(1-\alpha)^{2}}{m} \lb \vecv_{+}^{T} \mathbf{1}_{n} \mathbf{1}_{n}^{T} \vecv_{+} \rb
+ \frac{(1-\alpha)^{2}}{m} \lb \vecv_{-}^{T}\mathbf{1}_{n} \mathbf{1}_{n}^{T} \vecv_{-} \rb
- 2 \frac{(1+\alpha)^{2}}{m} \lb \vecv_{+}^{T} \mathbf{1}_{n} \mathbf{1}_{n}^{T} \vecv_{-} \rb
\nonumber \\
&=& 
\frac{(1 \!-\! \alpha)^2}{m} \lb \lvv \vecv_{+} \rvv_{1}^{2} +  \lvv \vecv_{-} \rvv_{1}^{2} \rb
- 2 \lvv \vecv_{+} \rvv_{1} \lvv \vecv_{-} \rvv_{1} \frac{(1 \!+\! \alpha)^2}{m}
\nonumber \\
&=& 
\frac{(1 - \alpha)^2}{m} \lb \lvv \vecv_{+} \rvv_{1}^{2} +  \lvv \vecv_{-} \rvv_{1}^{2} \rb
\ls 1 - \frac{2 \lvv \vecv_{+} \rvv_{1} \lvv \vecv_{-} \rvv_{1}}{\lvv \vecv_{+} \rvv_{1}^{2} +  \lvv \vecv_{-} \rvv_{1}^{2}} \lb \frac{1 + \alpha}{1-\alpha} \rb^{2}
\rs.
\label{kr_robust_nsp_intr2}
\end{eqnarray}
In above, step ($a$) follows from \eqref{kr_robust_nsp_intr1} and the nonnegativity of $\vecv_{+}$ and $\vecv_{-}$. 
We observe that for $\lvv \vecv_{+} \rvv_{1} > 4 \lb \frac{1 + \alpha}{1 - \alpha}\rb^{2} \lvv \vecv_{-} \rvv_{1}$, the ratio 
$\frac{2 \lvv \vecv_{+} \rvv_{1} \lvv \vecv_{-} \rvv_{1}}{\lvv \vecv_{+} \rvv_{1}^{2} +  \lvv \vecv_{-} \rvv_{1}^{2}} \le \frac{1}{2} \lb \frac{1 - \alpha}{1 + \alpha}\rb^{2}$, and therefore
\begin{equation}
\vecv^{T}\matX^{T} \matX \vecv \ge \frac{(1 - \alpha)^2}{2m} \lb \lvv \vecv_{+} \rvv_{1}^{2} +  \lvv \vecv_{-} \rvv_{1}^{2} \rb.
\nonumber
\end{equation}

\subsection{Proof of Proposition~\ref{prop_eta_lb}} \label{app:proof_prop_eta_lb}
\begin{proof} 
Let us define $\Delta \vgamma = \vgamma - \vgamma^*$ which also splits as 
\begin{equation}
\Delta \vgamma = \Delta \vgamma_{+} - \Delta \vgamma_{-}, 
\label{dgamma_pm_decompose}
\end{equation}
where $\Delta \vgamma_{+}$ and $\Delta \vgamma_{-}$ are nonnegative vectors in $\Real^{n}_{+}$ 
with non-overlapping supports and containing absolute values of positive and negative coefficients of $\Delta \vgamma$, respectively. Let $\setS$ and $\setS^*$ denote the nonzero supports of 
$\vgamma$ and $\vgamma^*$, respectively. Suppose $\setS$ and $\setS^*$ differ in exactly $k_{d}^{\setS, \setS^*}$ locations.  
By construction of $\Delta \vgamma_{+}$ and $\Delta \vgamma_{-}$, we have 
\begin{eqnarray}
& \lvv \Delta \vgamma \rvv_{2}^{2} &\ge k_{d}^{\setS, \setS^*} \vgammamin^2
\label{dgamma_bnds} \\
\lb |\setS^*| - k_{d}^{\setS, \setS^*}\rb_{+} \vgammamin  \le & \lvv \Delta \vgamma_{-} \rvv_{1} &
\le |\setS^*| \vgammamax 
\label{dgamma_minus_bnds} \\
\lb k_{d}^{\setS, \setS^*} - |\setS^*| \rb_{+} \vgammamin  \le & \lvv \Delta \vgamma_{+} \rvv_{1} &
 \le k_{d}^{\setS, \setS^*} \vgammamax + |\setS^*| (\vgammamax - \vgammamin)
\label{dgamma_plus_bnds} 
\end{eqnarray} 
We introduce $K_{\text{threshold}} \triangleq \lb 1 + 4 \frac{\vgammamax}{\vgammamin} \lb \frac{1 + \alpha}{1- \alpha} \rb^{2} \rb K$, and $\setB \triangleq \lc \setS \in [n]: k_{d}^{\setS, \setS^*} \le K_{\text{threshold}} \rc$. 
Then, from \eqref{eta_char}, we have
\begin{align}
\eta = \min_{\setS \subseteq [n]} \frac{\DD_{\setS}^*}{k_{d}^{\setS, \setS^*}} 
 = \min \lb  \min_{\setS \in \setB } \frac{\DD_{\setS}^*}{k_{d}^{\setS, \setS^*}}, \;
\min_{\setS \in \setB^{c} } \frac{\DD_{\setS}^*}{k_{d}^{\setS, \setS^*}} \rb.
\label{eqn_eta_lb_split}
\end{align}
Note that for $\text{supp}(\vgamma) = \setS$ and $\setS \in \setB$, from property $P2$, we have $\lvv (\matA \odot \matA) \Delta \vgamma \rvv_{2}^{2} \ge \beta \lvv \Delta \vgamma \rvv_{2}^{2}$. Using the lower bound on $\DD_{\setS}^*$ derived in Proposition~\ref{renyi_div_lb_gaussian_case}, we can bound $\min_{\setS \in \setB } \frac{\DD_{\setS}^*}{k_{d}^{\setS, \setS^*}}$ as follows.
\begin{eqnarray}
\min_{\setS \in \setB } \frac{\DD_{\setS}^*}{k_{d}^{\setS, \setS^*}} &=&
\min_{\substack{\vgamma \in \Theta(\setS), \\ \setS \in \setB } } \;
\frac{ \beta \lvv \Delta \vgamma \rvv_{2}^{2}}
{4 k_{d}^{\setS, \setS^*} \lb \sigma^2 +  \vgammamax \sigma^2_{\text{max}}(\matA_{\setS \cup \setS^*}) \rb^{2} } 
\nonumber \\
&\ge& \min_{\setS \in \setB } 
\frac{ \beta k_{d}^{\setS, \setS^*} \vgammamin^2 }
{4 k_{d}^{\setS, \setS^*} \lb \sigma^2 +  \vgammamax \sigma^2_{\text{max}}(\matA_{\setS \cup \setS^*}) \rb^{2} } 
\nonumber \\
&\ge&  
\frac{\beta \vgammamin^2 }{4 \lb \sigma^2 +  \vgammamax \rb^2 \max(1, \delta_{(K + K_{\text{threshold}})}^{2}) }. 
\label{proof_eta_lb_intr3}
\end{eqnarray}

For the case where $\setS \in \setB^c$, i.e., $k_{d}^{\setS, \setS^*} > K_{\text{threshold}}$, it follows from \eqref{dgamma_minus_bnds} and \eqref{dgamma_plus_bnds} that $\lvv \Delta \vgamma_{+} \rvv_{1} \ge 
4 \lb \frac{1 + \alpha}{1 - \alpha} \rb \lvv \Delta \vgamma_{-} \rvv_{1}$. Therefore, we can invoke the 
restricted null space property of $\matA \odot \matA$ from 
Theorem~\ref{thm_strong_rnsp_kr} to bound $\min_{\setS \in \setB^{c} } 
\frac{\DD_{\setS}^*}{k_{d}^{\setS, \setS^*}}$ as follows. 
\begin{eqnarray}
\min_{\setS \in \setB^{c} } \frac{\DD_{\setS}^*}{k_{d}^{\setS, \setS^*}} 
&\ge&
\min_{\substack{\vgamma \in \Theta(\setS), \\ \setS \in \setB^{c} } }
\frac{(1- \alpha)^2 \lb \lvv \Delta \vgamma_{+}\rvv_{1}^{2} + \lvv \Delta \vgamma_{-}\rvv_{1}^{2} \rb}
{8 m k_{d}^{\setS, \setS^*} \lb \sigma^2 +  \vgammamax \sigma^2_{\text{max}}(\matA_{\setS \cup \setS^*}) \rb^{2} } 
\nonumber \\
&\ge&
\min_{\substack{\vgamma \in \Theta(\setS), \\ \setS \in \setB^{c} } }
\frac{ (1 - \alpha)^2 \lb \lvv \Delta \vgamma_{+}\rvv_{1}^{2}  \rb }
{8 m k_{d}^{\setS, \setS^*} \!\!\lb \sigma^2 \!+\! \vgammamax \rb^2 \lb \max \lb 1,  \sigma^2_{\text{max}}(\matA_{\setS \cup \setS^*}) \rb \rb^{2} } 
\nonumber \\
&\ge&
\min_{\setS \in \setB^{c} }
\frac{ (1-\alpha)^2 \vgammamin^{2} \lb k_{d}^{\setS, \setS^*} - |\setS^*| \rb^{2} }
{8 m k_{d}^{\setS, \setS^*} \lb \sigma^2 + \vgammamax \rb^2 \lb \max \lb 1,  
\sigma^2_{\text{max}}(\matA_{\setS \cup \setS^*}) \rb \rb^{2} } 
\nonumber 
\label{proof_ds_lb} \\
&\ge&
\frac{(1- \alpha)^2 \vgammamin^2}{8 m \lb \sigma^2 +  \vgammamax \rb^2}
\min_{\setS \in \setB^{c}} \lb 1 - \frac{|\setS^*|}{ k_{d}^{\setS, \setS^*} } \rb 
\lb \min_{\setS \in \setB^{c}}
\frac{ k_{d}^{\setS, \setS^*} - |\setS^*| }{\lb \max \lb 1,  \sigma^2_{\text{max}}(\matA_{\setS \cup \setS^*}) \rb \rb^{2}} \rb
\nonumber \\
&\ge&
\frac{(1-\alpha)^2 \vgammamin^2}{8 m \lb \sigma^2 +  \vgammamax \rb^2}
\lb 1 - \frac{K}{ K_{\text{threshold}} } \rb 
\lb \min_{\setS \in \setB^{c}}
\frac{ k_{d}^{\setS, \setS^*} - |\setS^*| }{\lb \max \lb 1,  \sigma^2_{\text{max}}(\matA_{\setS \cup \setS^*}) \rb \rb^{2}} \rb
\nonumber \\
&\ge&
\frac{(1-\alpha)^2 \vgammamin^2}{10 m \lb \sigma^2 + \vgammamax \rb^2 }
\lb \min_{\setS \in \setB^{c}}
\frac{k_{d}^{\setS, \setS^*} - |\setS^*|}{|\setS \cup \setS^*|} \rb
\lb \min_{\setS \in \setB^{c}}
\frac{|\setS \cup \setS^*|}{\lb \max \lb 1,  \sigma^2_{\text{max}}(\matA_{\setS \cup \setS^*}) \rb\rb^{2}}
\rb
\nonumber \\
&=&
\frac{ (1-\alpha)^2 \vgammamin^2}{10 m \lb \sigma^2 +  \vgammamax \rb^2 }
\lb \min_{\setS \in \setB^{c}}
\frac{k_{d}^{\setS, \setS^*} - |\setS^*|}{k_{d}^{\setS, \setS^*} + |\setS^*|} \rb
\lb \min_{\setS \in \setB^{c}}
\frac{|\setS \cup \setS^*|}{\lb \max \lb 1,  \sigma^2_{\text{max}}(\matA_{\setS \cup \setS^*}) \rb\rb^{2}}
\rb
\nonumber \\
&\ge&
\frac{(1-\alpha)^2 \vgammamin^2}{16 m \lb \sigma^2 +  \vgammamax \rb^2 }
\lb \min_{\setS \in \setB^{c}}
\frac{|\setS \cup \setS^*|}{\max (1, \delta^{2}_{|\setS \cup \setS^*|}) }
\rb.
\label{proof_eta_lb_intr4}
\end{eqnarray}
\textcolor{black}{In the above, the penultimate inequality follows by substituting $K_{\text{threshold}}$ with its lower bound $5K$.}
Substituting \eqref{proof_eta_lb_intr3} and \eqref{proof_eta_lb_intr4} in 
\eqref{eqn_eta_lb_split} and simplifying, we obtain the lower bound for $\eta$ stated in the proposition.
\ifdefined \SKIPTEMP
\begin{equation}
\eta \! \ge\! 
\frac{\vgammamin^2}{\lb \sigma^2 \!+\! 2 \vgammamax \rb^{2}}  
\min \lb \! \!
\frac{\beta}{\delta^{2}_{(\!K \!+\! K_{\text{threshold}\!})\!} },\! 
\frac{(1-\alpha)^2\!}{4m \delta_{n}} \!\min_{\setS \in \setB^{c}}
\frac{|\setS \cup \setS^*|}{\delta_{|\setS \cup \setS^*|}}
\rb \!\!.
\nonumber \\
\label{proof_eta_lb_intr5}
\end{equation}
This concludes our proof.
\fi
\end{proof}

\subsection{Proof of Proposition~\ref{prop_kcov_ub}} \label{proof_prop_kcov_ub}
\begin{proof}
For any support $\setS \subset [n]$, by setting $\epsilon = \frac{L \DD_{\setS}^{*}}{2}$ in Proposition~\ref{prop_cov_num_enet}, we have 
\begin{equation} \label{max_covering_num}
|\Theta^{\epsilon}(\setS) \vert_{\setG} | \le 
\max \lc 1, 
\lb  \frac{6 \sqrt{|\setS|} (\vgammamax - \vgammamin) C_{\LL, \setS}}{L 
\DD_{\setS}^{*}}  \rb^{|\setS|} \rc.
\end{equation} 
where $C_{\LL, \setS}$ denotes the Lipschitz constant of $\LL(\matY; \vgamma)$ with respect to 
$\vgamma$ over the bounded domain $\Theta^{\epsilon}(\setS) \vert_{\setG}$. 
Proposition~\ref{loglikelihood_Lipschitz_char} characterizes the Lipschitz property of $\LL(\matY, \vgamma)$. 
\begin{proposition} \label{loglikelihood_Lipschitz_char}
For $\setS \in \setS_{n}$, the log-likelihood $\LL(\matY; \vgamma): \Theta(\setS) \to~\Real$ 
is Lipschitz continuous in $\vgamma$ as shown below.
\begin{align}
\lv \LL(\matY, \vgamma_{2}) - 
\LL(\matY, \vgamma_{1}) \rv \le
\frac{m L}{\vgammamin} 
\lb 1 + \frac{\lvvv \matR_{\vecy\vecy} \rvvv_{2}}{\sigma^2} \rb 
\lvv \vgamma_{2} - \vgamma_{1} \rvv_{2},
\nonumber 
\end{align}
for any $\vgamma_{1}, \vgamma_{2} \in \Theta(\setS)$. Here, $\matR_{\vecy\vecy} \triangleq \frac{1}{L} \matY \matY^{T}$.
\end{proposition}
\begin{proof}
	See Appendix \ref{app:proof_loglikelihood_Lipschitz_char}.
\end{proof}
By invoking the definitions of $K_{\text{threshold}}$ and set $\setB$ from the proof of Proposition~\ref{prop_eta_lb}, we can rewrite $\kappa_{\text{cov}}$ in \eqref{epsilon_cover_cardinality_bound} as  
\begin{equation}
\kappa_{\text{cov}} = \max \lb  
\max_{\setS \in \setB}  \frac{ \log \big| \Theta^{\epsilon}(\setS) \vert_{\setG} \big|}{k_{d}^{\setS, \setS^*}}, 
 \max_{\setS \in \setB^c}  \frac{ \log \big| \Theta^{\epsilon}(\setS) \vert_{\setG} \big|}{k_{d}^{\setS, \setS^*}}
\rb.
\label{eqn_kcov_split}
\end{equation}

For the MMV set $\setG$ as defined in \eqref{defn_bdd_spectral_norm_mmv_set} 
and $\matY \in 
\setG$, we have 
$\lvvv \matR_{\vecy\vecy}\rvvv_{2} \le 2 \lb \sigma^2 + \vgammamax \delta_{K} 
\rb$. Then, for $\vgammamin \neq \vgammamax$ and by the Lipschitz continuity of 
$\LL(\matY; \vgamma)$ as per Proposition~\ref{loglikelihood_Lipschitz_char}, it 
follows that 
\begin{align}
\max_{\setS \in \setB}  \frac{ \log \big| \Theta^{\frac{L \DD_{\setS}^*}{2}}(\setS) \vert_{\setG} \big|}{k_{d}^{\setS, \setS^*}} 
&\le
\max_{\setS \in \setB} \frac{|\setS|}{k_{d}^{\setS, \setS^*}} 
\log \lb \frac{6 \sqrt{|\setS|} C_{\LL, \setS} (\vgammamax - \vgammamin)}{L 
\DD^{*}_{\setS}} \rb
\nonumber \\
&\le 
\max_{\setS \in \setB} \frac{|\setS^*| + k_{d}^{\setS, \setS^*}}{k_{d}^{\setS, \setS^*}} 
 \log \lb \frac{6 m \sqrt{|\setS^*| + k_{d}^{\setS, \setS^*}} 
(\vgammamax \!-\! \vgammamin) (3 + 2 \frac{\vgammamax}{\sigma^2} \delta_{K}) }{ \vgammamin 
\DD_{\setS}^*} \rb
\nonumber \\
& \le 
(K+1) \log{\lb m \sqrt{K+1} \Delta_{1} \rb} ,
\label{eqn_kcov_ub_intr1}
\end{align}
where $\Delta_{1} \!\! =\!\!  \frac{24 (\vgammamax - \vgammamin) \lb \! 3 +\!  
\frac{2\vgammamax \delta_{K}}{\sigma^2} \! \rb (\sigma^2 \! + \vgammamax )^{2} \!\max \lb \! 1, 
\delta_{K + K_{\text{threshold}}}^{2} \!\rb }{\beta \vgammamin^3}$. The last inequality is 
obtained by using \eqref{proof_eta_lb_intr3} and noting that the RHS in the 
second 
inequality decreases monotonically with respect to $k_{d}^{\setS, \setS^*}$ in 
the interval $[1, K_{\text{threshold}}]$.

The second max-term in \eqref{eqn_kcov_split} can be bounded as follows. By invoking Proposition~\ref{renyi_div_lb_gaussian_case}, we first note that  
\begin{align}
\min_{\setS \in \setB^c}\DD_{\setS}^{*} & \ge 
\min_{\substack{\vgamma \in \Theta(\setS), \\ \setS \in \setB^c}}
\frac{\lvv (\matA \odot \matA) (\vgamma - \vgamma^*) \rvv_{2}^{2}}{4 (\sigma^2 + \vgammamax 
	\sigma_{\max}^{2}(\matA_{\setS \cup \setS^*}))^{2}}
\nonumber \\
& \hspace{0cm} \ge  
\min_{\setS \in \setB^c}
\frac{(1-\alpha)^2 (k_{d}^{\setS, \setS^*} - |\setS^*|)^{2} \vgammamin^{2} }
{8 m (\sigma^2 + \vgammamax \sigma_{\max}^{2}(\matA_{\setS \cup \setS^*}))^{2}}
\nonumber \\
& \hspace{0cm} \ge  
\min_{\setS \in \setB^c}
\lb \frac{(k_{d}^{\setS, \setS^*}\! \!-\! |\setS^*|)^{2}}{|\setS \cup 
\setS^*|^{2}} 
\rb 
\frac{(1-\alpha)^2 \vgammamin^{2} }{8m (1 + \alpha)^{2} (\sigma^2 + \vgammamax)^2 }
\nonumber \\
& \hspace{0cm} = \min_{\setS \in \setB^c}  
\lb \frac{k_{d}^{\setS, \setS^*} - |\setS^*|}{k_{d}^{\setS, \setS^*} + 
|\setS^*|} \rb^{2} \!\!
\lb \frac{1 - \alpha}{1 + \alpha} \rb^{2} \!\! \frac{ \vgammamin^{2} }{8m (\sigma^2 + \vgammamax)^2 }
\nonumber \\
& \hspace{0cm} \ge   
\frac{1}{18 m} \lb \frac{1 - \alpha}{1 + \alpha} \rb^{2} \frac{\vgammamin^{2} }{(\sigma^2 + \vgammamax)^2 }.
\label{eqn_kcov_ub_intr2}
\end{align}
\textcolor{black}{The last inequality follows from the fact that $|\setS^*| \le K$ and the ratio $\lb \frac{k_{d}^{\setS, \setS^*} - |\setS^*|}{k_{d}^{\setS, \setS^*} + |\setS^*|} \rb$ is nondecreasing in $|\setS|$ for $\setS \in \setB^\text{c}$.}  
Then, substituting the lower bound for $\DD_{\setS}^*$ from 
\eqref{eqn_kcov_ub_intr2} in \eqref{max_covering_num}, and simplifying, we obtain 
\begin{align}
\max_{\setS \in \setB^c}  \frac{ \log \big| \Theta^{\frac{L 
\DD_{\setS}^*}{2}}(\setS) \vert_{\setG} \big|}{k_{d}^{\setS, \setS^*}}
&  
\le \max_{\setS \in \setB^c}  \frac{|\setS^*| + k_{d}^{\setS, \setS^*}}{k_{d}^{\setS, \setS^*}}
\log \lb m^2 \sqrt{|\setS^*| + k_{d}^{\setS, \setS^*}} \Delta_{2} \rb
\nonumber \\
& 
\le 4 \log m + 2 \log \Delta_{2} + \frac{n}{n-K} \log n
\label{eqn_kcov_ub_intr3}
\end{align}
where $\Delta_{2} = \frac{108 (\vgammamax - \vgammamin) \lb 3 + 2 
\frac{\vgammamax \delta_{K}}{\sigma^2} \rb (\sigma^2 +  \vgammamax)^{2} (1 + \alpha)^2 } { 
(1- \alpha)^2 \vgammamin^3}$. The coefficient $\frac{n}{n-K}$ in \eqref{eqn_kcov_ub_intr3} can be replaced by a suitable constant, say $2$, even for the worst scaling $K = \Theta(n)$.

Finally, substituting the bounds from \eqref{eqn_kcov_ub_intr1} and \eqref{eqn_kcov_ub_intr3} in~\eqref{eqn_kcov_split}, we obtain the claimed upper bound for $\kappa_{\text{cov}}$.
\end{proof}


\subsection{Proof of Corollary~\ref{corr_gaussian_meas_map_eta_char}}
\label{app:proof_corr_gaussian_meas_map_eta_char}
In order to bound $\eta$, we first derive probabilistic bounds for the parameters $\alpha$, $\beta$ and $\delta_{K}$ associated with the sensing matrix $\matA$. We consider both cases: i. $m = \Theta(K \log n)$ and ii. $m = \Theta(\sqrt{K} \log n)$.

\subsubsection{Lower bound for $\alpha$}
By invoking the Hanson-Wright concentration inequality \cite{Rudelson13HansonWright}, 
and taking the union bound over all columns of~$\matA$, 
\begin{align}
\PP \lb \bigcup_{i \in [n]} \lc \lv \lvv \veca_{i} \rvv_{2}^{2} - 1 \rv \ge \alpha \rc \rb 
\le n \PP \lb \lv \textcolor{black}{\veca_{1}^{T} \matI_m \veca_{1}} - 1 \rv \ge \alpha \rb 
\le 2n e^{-c m \alpha^2},
\label{eqn_random_A_results_intr1}
\end{align}
where $c$ is a numerical constant.
For both $m = \Theta(K \log n)$ and $m = \Theta(\sqrt{K} \log n)$, by setting $\alpha = \frac{1}{2}$ in  \eqref{eqn_random_A_results_intr1}, the squared $\ell_{2}$-norm of columns of $\matA$ lie inside the interval $\ls \frac{1}{2}, \frac{3}{2} \rs$ with probability exceeding $1 - n^{-\Theta(\sqrt{K})}$.

\subsubsection{Lower bound for $\beta$} 
In Proposition~\ref{prop_eta_lb}, $\beta$ refers to the smallest singular value among 
all submatrices obtained by sampling $K + K_{\text{threshold}}$ or fewer columns of $\matA \odot \matA$.
Noting that $K_{\text{threshold}} = \Theta(K)$, and   
\textcolor{black}{by invoking \cite[Theorem~$1$]{Khanna12KRRIC_suppl} (a revised version of \cite[Theorem~$5$]{alex19RIPKR}), we can have}
\begin{equation}
\PP \lb \beta < 1 - \frac{c_{2} \sqrt{K} \log n}{m} 
\rb \le \frac{c_{3}}{n^{2}}, 
\label{eqn_random_A_probbound_for_beta_2}
\end{equation}
where $c_{2}$ and $c_{3}$ are positive numerical constants. Thus, for both $m = \Theta(K \log n)$ and $m = \Theta(\sqrt{K} \log n)$, we have 
\begin{equation}
\beta = \Omega(1),
\label{eqn_random_A_results_intr0a}
\end{equation}
with probability exceeding $1 - \Omega(n^{-2})$.

\subsubsection{Upper bounds for $\delta_{K + K_{\text{threshold}}}$ and $\delta_{n}$}
From Proposition~\ref{prop_eta_lb}, $\delta_{k}$ is defined as the maximum squared singular value 
among any $k$ or fewer column submatrix of $\matA$. Thus, by direct application of Corollary~\ref{corr_Gaussian_submatrix_spectral_bound}, for $k \le n$, 
\begin{equation} 
\delta_{k} \le \frac{\lb \sqrt{m} + \sqrt{k} + \sqrt{6 k \log n}  \rb^{2}}{m}
\label{eqn_random_A_results_intr2}
\end{equation}
with probability exceeding $1 - 2n^{-k}$. From \eqref{eqn_random_A_results_intr2}, for $m = \Theta(K \log n)$ we have   
\begin{equation}
\delta_{K + K_{\text{threshold}}} = O(1), \text{ and } \delta_{n} = O\lb \frac{n}{K \log n } \rb,
\label{eqn_random_A_deltak_ub}
\end{equation}
and $m = \Theta(\sqrt{K} \log n)$ guarantees that 
\begin{equation}
\delta_{K + K_{\text{threshold}}} = O(\sqrt{K}), \text{ and } \delta_{n} = O\lb \frac{n}{\sqrt{K} \log n } \rb,
\label{eqn_random_A_deltak_ub2}
\end{equation}
with probability exceeding $1-n^{-\Theta(\sqrt{K})}$. In the above, the bounds for $\delta_{n}$ follow from Proposition~\ref{prop_gaussian_spectral_bound}.

Finally, for $m = \Theta(K \log n)$, by substituting the above bounds for $\alpha$, $\beta$, $\delta_{K}$ and $\delta_{n}$ from \eqref{eqn_random_A_results_intr0a} and \eqref{eqn_random_A_deltak_ub} in Propositions~\ref{prop_eta_lb} and \ref{prop_kcov_ub}, and simplifying, it can be verified that $\eta = \Omega  \lb \frac{K}{n} \rb$ and $\kappa_{\text{cov}} \le O(K \log K + K \log \log n + \log n)$, respectively, 
with probability exceeding $1 - \Omega(n^{-2})$. Likewise, for the $m = \Theta(\sqrt{K} \log n)$ case, we have $\eta = \Omega  \lb \frac{\sqrt{K}}{n} \rb$ and $\kappa_{\text{cov}} \le O(K \log K + K \log \log n + \log n)$ with probability exceeding $1 - \Omega(n^{-2})$.

\subsection{Proof of Proposition \ref{loglikelihood_Lipschitz_char}} \label{app:proof_loglikelihood_Lipschitz_char}
The log-likelihood $\LL(\matY; \vgamma)$ can be expressed as the sum 
$f(\vgamma) + g(\vgamma)$ with $f(\vgamma) = 
-L  \log{\lv \matSigma_{\vgamma} \rv}$ and $g(\vgamma) = -L \text{tr}\lb \matSigma_{\vgamma}^{-1} \matR_{\vecy \vecy} \rb$. Here, $\matSigma_{\vgamma} = \covy$. 

First, we derive an upper bound for the Lipschitz constant of 
$f(\vgamma) = -L \log|\matSigma_{\vgamma}|$ for $\vgamma \in \Theta(\setS)$. 
By the mean value theorem, any upper bound for $\Vert\nabla_{\vgamma} f(\vgamma)\Vert_2$ 
also serves as an upper bound for the Lipschitz constant of $f$. So, we 
derive an upper bound for $\Vert\nabla_{\vgamma} f(\vgamma)\Vert_{2}$. Note that 
$\lv \frac{\partial f(\vgamma)}{\partial \vgamma(i)} \rv = L(\veca_{i}^{T} \matSigma_{\vgamma}^{-1}\veca_{i})$ 
for $i \in \setS$, and $0$ otherwise. Here, $\veca_{i}$ denotes the $i^{\text{th}}$ 
column of $\matA$. Then, $\Vert\nabla_{\vgamma} f(\vgamma)\Vert_{2}$ can be 
upper bounded as shown below.
\begin{align}
\Vert\nabla_{\vgamma} f(\vgamma)\Vert_{2} &\le \Vert\nabla_{\vgamma} f(\vgamma)\Vert_{1}
\;=\; L \sum_{i \in \setS} \veca_{i}^{T} \matSigma_{\vgamma}^{-1}\veca_{i}
\nonumber \\
& \hspace{0cm} = L \lb \text{tr}\lb \matA_{\setS}^{T} (\sigma^{2}\matI_{m} + \matA_{\setS} \matGamma_{\setS} \matA_{\setS}^{T})^{-1} \matA_{\setS} \rb \rb
\nonumber \\
& \hspace{0cm} = L \lb \text{tr}\lb \matGamma_{\setS}^{-1/2} \tilde{\matA}_{\setS}^{T} (\sigma^{2}\matI_{m} + \tilde{\matA}_{\setS}  \tilde{\matA}_{\setS}^{T})^{-1} \tilde{\matA}_{\setS} 
\matGamma_{\setS}^{-1/2}\rb \rb
\nonumber \\
& \hspace{0cm} \stackrel{(a)}{=} L \lvvv \matGamma_{\setS}^{-1} \rvvv_{2} \text{tr}\lb  \tilde{\matA}_{\setS}^{T} (\sigma^{2}\matI_{m} + \tilde{\matA}_{\setS}  \tilde{\matA}_{\setS}^{T})^{-1} \tilde{\matA}_{\setS} \rb
\nonumber \\	
& \hspace{0cm} 
\; \stackrel{(b)}{\le} \; \lb L \min \lb m, |\setS| \rb \rb / \vgammamin. 
\label{log_likelihood_lipschitz_intr2}
\end{align}
where $\tilde{\matA}_{\setS} =  \matA_{\setS} \matGamma_{\setS}^{1/2}$. In the above, 
step ($a$) follows from the trace inequality $\text{tr}(\matA \matB) \le 
\Vert|\matA\Vert|_{2}\text{tr}(\matB)$ for any positive definite matrices $\matA$ and $\matB$. 
Step (b) follows from the observation that input argument of the trace operator has 
$\min{(m, |\setS|)}$ nonzero eigenvalues, all of them less than unity.

We now shift focus to the second term $g(\vgamma)$ of the loglikelihood.
Note that $\lv \frac{\partial g(\vgamma)}{\partial \vgamma(i)} \rv = $ $ L(\veca_{i}^{T} \matSigma^{-1}_{\vgamma} \matR_{Y} \matSigma_{\vgamma}^{-1}\veca_{i})$ 
for $i \in \setS$, and $0$ otherwise. Then, $\Vert\nabla_{\vgamma} g(\vgamma)\Vert_{2}$ 
can be upper bounded as
\begin{eqnarray}
\Vert\nabla g(\vgamma)\Vert_{2}
&\le& \Vert\nabla_{\vgamma} g(\vgamma)\Vert_{1}   
= L \sum_{i \in \setS} \veca_{i}^{T} \matSigma^{-1}_{\vgamma} \matR_{Y} \matSigma_{\vgamma}^{-1}\veca_{i} 
\nonumber \\
&=& L \lb \text{tr} \lb \matA_{\setS}^{T} \matSigma^{-1}_{\vgamma} \matR_{Y} \matSigma_{\vgamma}^{-1} \matA_{\setS}
\rb \rb \nonumber  \\
&\le& L \lvvv \matR_{\matY} \rvvv_{2} \text{tr} \lb \matA_{\setS}^{T} \matSigma^{-1}_{\vgamma} \matSigma_{\vgamma}^{-1} \matA_{\setS} \rb 
\nonumber  \\
&=& L \lvvv \matR_{\matY} \rvvv_{2}
\text{tr} \lb \matGamma_{\setS}^{-1/2}\tilde{\matA}_{\setS}^{T} \matSigma^{-1}_{\vgamma} \matSigma_{\vgamma}^{-1} \tilde{\matA}_{\setS} \matGamma_{\setS}^{-1/2} 
\rb \nonumber \\
&\le& \lb L \lvvv \matR_{\matY} \rvvv_{2} / \vgammamin\rb 
\text{tr} \lb \tilde{\matA}_{\setS}^{T} \matSigma^{-1}_{\vgamma} \matSigma_{\vgamma}^{-1} \tilde{\matA}_{\setS} \rb
\nonumber \\
&\le& \lb L \lvvv \matR_{\matY} \rvvv_{2} \lvvv \matSigma_{\vgamma}^{-1} \rvvv_{2} 
/ \vgammamin \rb 
\text{tr} \lb \tilde{\matA}_{\setS}^{T} \matSigma_{\vgamma}^{-1} \tilde{\matA}_{\setS} \!\rb
\nonumber \\
&\stackrel{(a)}{\le}& \lb L \lvvv \matR_{\matY} \rvvv_{2} \lvvv \matSigma_{\vgamma}^{-1} \rvvv_{2} 
\min{(m, |\setS|)} \rb/ \vgammamin
\label{eqn_lipschitz_char_intr1} 
\nonumber \\
&\stackrel{(b)}{\le}& \lb L \lvvv \matR_{\matY} \rvvv_{2}  
\min{(m, |\setS|)} \rb / \vgammamin \sigma^2 . 
\label{eqn_lipschitz_char_intr2} 
\label{log_likelihood_lipschitz_intr3}
\end{eqnarray}
where $\tilde{\matA}_{\setS} \triangleq \matA_{\setS} \matGamma_{\setS}^{1/2}$. The inequality in (\ref{eqn_lipschitz_char_intr1}$a$) follows from $\lb \tilde{\matA}_{\setS}^{T} \matSigma_{\vgamma}^{-1} \tilde{\matA}_{\setS} \rb$ having $\min{(m, |\setS|)}$ nonzero eigenvalues, all of them less than unity. The last inequality in (\ref{eqn_lipschitz_char_intr2}$b$) is due to $\lvvv \matSigma_{\vgamma}^{-1}\rvvv_{2} \le 1/\sigma^2$. Finally, the Lipschitz constant $C_{\LL, \setS}$ can be bounded as $C_{\LL, \setS} \le 
\Vert\nabla_{\vgamma} f(\vgamma)\Vert_{2} + \Vert\nabla_{\vgamma} g(\vgamma)\Vert_{2}$. Thus, 
by combining \eqref{log_likelihood_lipschitz_intr2} and \eqref{log_likelihood_lipschitz_intr3}, \textcolor{black}{and noting that $\min (m, |\mathcal{S}|) \le m$,} we obtain the desired result. 


\subsection{Proof of Proposition \ref{renyi_div_lb_gaussian_case_form2}} \label{app:proof_renyi_div_lb_gaussian_case_form2}
We assume that $\text{supp}(\vgamma_{1}) \backslash 
\text{supp}(\vgamma_{2}) \neq \phi$. The proof also holds for the case where $\text{supp}(\vgamma_{1}) \subset \text{supp}(\vgamma_{2})$ by swapping $\vgamma_{1}$ with $\vgamma_{2}$ and invoking the symmetry of $\DD_{1/2}(p_{\vgamma_{1}}, p_{\vgamma_{2}})$ with respect to its input arguments.    

Let $\mu^{*}$ be the largest eigenvalue of $\matSigma_{\vgamma_{1}}^{\frac{1}{2}} \matSigma_{\vgamma_{2}}^{-1} \matSigma_{\vgamma_{1}}^{\frac{1}{2}} $. Then, 
\begin{eqnarray}
\mu^{*} &\ge&  
 \frac{\text{tr}\lb \matSigma_{\vgamma_{2}}^{-1} \matSigma_{\vgamma_{1}} \rb}{m} 
 =
 \frac{1}{m} \ls \sigma^{2} \text{tr}\lb \matSigma_{\vgamma_{2}}^{-1} \rb + \text{tr}\lb \matSigma_{\vgamma_{2}}^{-1} \matA \matGamma_{1}\matA^{T} \rb \rs
 \nonumber \\
 & \ge &   
 \frac{1}{m} \text{tr}\lb \matSigma_{\vgamma_{2}}^{-1} \matA \matGamma_{1}\matA^{T} \rb. 
 \label{proof_disc_mat_lb_eq1}
\end{eqnarray}
Here, the second step is setting $\matSigma_{\vgamma_{1}}=\sigma^{2} \matI_{m} + \matA \matGamma_{1} \matA^{T}$.
\textcolor{black}{The last inequality is obtained by dropping the strictly positive 
$\frac{\sigma^{2}}{m} \text{tr}\lb \matSigma_{\vgamma_{2}}^{-1} \rb$ term.} 

Let $\setS_{1}$ and $\setS_{2}$ be the nonzero supports of $\vgamma_{1}$ and 
$\vgamma_{2}$, respectively. Further, let the eigendecomposition of $\matSigma_{\vgamma_{2}}$ be $\matU \Lambda \matU^{T}$, where $\Lambda = \diag{\lb \lambda_{1}, \ldots, \lambda_{m} \rb}$,  $\lambda_{i}$'s 
are the eigenvalues of $\matSigma_{\vgamma_{2}}$ 
and $\matU$ is a unitary matrix with columns as the eigenvectors of $\matSigma_{\vgamma_2}$. 
Then, $\matU$ can be 
partitioned as $\ls \matU_{{2}} \; \matU_{{2}^{\perp}} \rs$, where the columns 
of $\matU_{{2}}$ and $\matU_{{2}^{\perp}}$ span the orthogonal complementary 
subspaces $\text{Col}(\matA_{\setS_{2}})$ and $\text{Col}(\matA_{\setS_{2}})^{\perp}$, respectively. 
Further, let $\Lambda_{{2}}$ and $\Lambda_{{2}^{\perp}}$ be $|\setS_{2}|\times |\setS_{2}|$ and 
$((m-|\setS_{2}|)\times (m - |\setS_{2}|))$ sized diagonal matrices containing the eigenvalues in 
$\Lambda$ corresponding to the eigenvectors in $\matU_2$ and $\matU_{2^\perp}$, respectively. 
We observe that $\Lambda_{2^{\perp}} = \sigma^2 \matI_{m - |\setS_{2}|}$. 

By setting $\matSigma_{\vgamma_{2}}^{-1} = \matU_{{2}} \Lambda_{{2}}^{-1} \matU_{{2}}^{T} + 
\matU_{{2}^{\perp}} \Lambda_{{2}^{\perp}}^{-1} \matU_{{2}^{\perp}}^{T}$ in 
\eqref{proof_disc_mat_lb_eq1}, we get
\begin{eqnarray}
  \mu^{*}  &\ge&  
 \frac{1}{m} \lb 
 \text{tr}\lb \matU_{{2}} \Lambda_{{2}}^{-1} \matU_{{2}}^{T} \matA \matGamma_{1}\matA^{T} \rb 
  +
 \text{tr}\lb \matU_{{2}^{\perp}} \Lambda_{{2}^{\perp}}^{-1} \matU_{{2}^{\perp}}^{T} 
  \matA \matGamma_{1}\matA^{T} \rb
 \rb 
 \nonumber \\
&\ge&  
 \frac{1}{m} 
\text{tr}\lb \Lambda_{{2}^{\perp}}^{-1} \matU_{{2}^{\perp}}^{T} \matA \matGamma_{1}\matA^{T} \matU_{{2}^{\perp}} \rb, 
\nonumber
\end{eqnarray}
where the last inequality is due to nonnegativity of the first term.
Since $\matU_{{2}^{\perp}}^{T} \matA_{\setS_{2}} = 0$ by construction of $\matU_{{2}^{\perp}}$, 
\begin{eqnarray}
 \mu^{*} &\ge &
 \frac{1}{m} 
\text{tr}\lb \Lambda_{{2}^{\perp}}^{-1} \matU_{{2}^{\perp}}^{T} \matA_{\setS_{2}^{c}} \matGamma_{1, \setS_{2}^{c}}\matA_{\setS_{2}^{c}}^{T} \matU_{{2}^{\perp}} \rb 
\nonumber \\
&=&  
 \frac{1}{m \sigma^2} 
 \sum_{i = 1}^{m - |\setS_{2}|}  
 (\vecu_{2^{\perp},i})^{T} \matA_{\setS_{2}^{c}} \matGamma_{1, \setS_{2}^{c}} \matA_{\setS_{2}^{c}}^{T} 
 \vecu_{2^{\perp},i} 
\nonumber \\
 &=& 
 \frac{1}{m \sigma^2} 
 \sum_{i =1}^{m - |\setS_{2}|} 
 (\vecu_{2^{\perp},i})^{T} \matA_{\setS_{1} \backslash \setS_{2}} \matGamma_{1, \setS_{1} \backslash \setS_{2}}\matA_{\setS_{1} \backslash \setS_{2}}^{T} \vecu_{2^{\perp},i}. 
\label{proof_disc_mat_lb_eq2}
\end{eqnarray}

In the above, $\vecu_{2^{\perp},i}$ denotes the $i^{\text{th}}$ column of $\matU_{2^{\perp}}$.
The last equality is obtained by observing that the nonzero elements of $\vgamma_{1, \setS_{2}^{c}}$ 
are located in the index set $\setS_{1} \backslash \setS_{2}$. 

We now prove that if $K < \text{spark}(\matA) -1$, then there exists at least 
one strictly positive term in the above summation. Let us assume the contrary, i.e., let each term in the summation in \eqref{proof_disc_mat_lb_eq2} 
be equal to zero. This implies that the columns of $\matU_{2^{\perp}}$ belong to 
$\text{Null}(\matGamma^{1/2}_{1, \setS_{1} \backslash \setS_{2}} \matA^{T}_{\setS_{1} \backslash \setS_{2}} )$, 
which means that they also belong to $\text{Null}(\matA_{\setS_{1} \backslash \setS_{2}} \matGamma_{1, \setS_{1} \backslash \setS_{2}} \matA^{T}_{\setS_{1} \backslash \setS_{2}} )$. 
Since, for a symmetric matrix, the row and column spaces are equal and orthogonal to 
the null space of the matrix, it follows that 
$\text{Col}(\matA_{\setS_{1} \backslash \setS_{2}} \matGamma_{1, \setS_{1} \backslash \setS_{2}} \matA^{T}_{\setS_{1} \backslash \setS_{2}} )$ 
(same as $\text{Col}(\matA_{\setS_{1} \backslash \setS_{2}} \matGamma_{1, \setS_{1} \backslash \setS_{2}}^{1/2} )$)  
is spanned by the columns of $\matU_2$, or equivalently by columns of $\matA_{\setS_{2}}$. 
Thus, every column in $\matA_{\setS_{1} \backslash \setS_{2}}$ can be expressed as a linear combination of columns 
in $\matA_{\setS_{2}}$. Since $|\setS_{2}| \le K$, this contradicts our initial assumption that 
$K + 1 < \text{spark}(\matA)$.
Therefore, we conclude that there is at least one strictly positive term in the summation in~\eqref{proof_disc_mat_lb_eq2}, and consequently 
there exists a constant $c_{1}>0$ such that $ \mu^{*} \ge c_{1}/\sigma^{2}$.

\ifdefined \SKIPTEMP
\subsection{???}
\textcolor{black}{
\begin{lemma}
Let $\matA$ be an $m \times m$ sized symmetric real-valued (need not be positive semi-definite) matrix. If there exist $m$ linearly independent vectors 
in $\Real^{m}$, namely $\vecx_{1}, \vecx_{2}, \ldots, \vecx_{m}$ such that $\vecx_{i}^{T} \matA \vecx_{i} = 0$ 
for $i = 1$ to $m$, then it implies that $\matA = \mathbf{0}_{m \times m}$.
\end{lemma}
\begin{proof}
We show that $\matA = \mathbf{0}_{m \times m}$ by proving that the column-space of $\matA$, denoted by $\text{Col}(\matA)$, is trivial and contains only the zero vector, under the existence of the said linearly independent vectors. 
\end{proof}
}
\fi


\bibliographystyle{IEEEtran}
\bibliography{IEEEabrv,bibJournalList,MSBL_supp_recovery}

\end{document}